\author[1]{\large Guillaume Barraquand}
\author[2]{\large Alexandre Krajenbrink}
\author[1]{\large Pierre Le Doussal}
\affil[1]{\small Laboratoire de Physique de l'\'Ecole Normale Sup\'erieure, ENS, Universit\'e PSL,  CNRS, Sorbonne Universit\'e, Universit\'e de Paris, 75005 Paris, France}
\affil[2]{\small SISSA and INFN, via Bonomea 265, 34136 Trieste, Italy}
\tikzset{
    partial ellipse/.style args={#1:#2:#3}{
        insert path={+ (#1:#3) arc (#1:#2:#3)}
    }
}
\tikzset{middlearrow/.style={
        decoration={markings,
            mark= at position 0.55 with {\arrow[thick]{#1}} ,
        },
        postaction={decorate}
    }
}
\tikzset{
  coordsys/.pic={
    \draw[->] (0,0) -- ++(6mm,0pt);
    \draw[->] (0,0) -- ++(0pt,6mm);
  },
  myarr/.style={decoration={
      markings,
      mark=between positions 0 and 1 step 4mm with {\arrow{stealth}},
    },
    postaction=decorate
  },
}
\pgfqpoint{\pgfdecorationsegmentamplitude}{.5\pgfdecorationsegmentamplitude}}
\pgfqpoint{\pgfdecorationsegmentamplitude}{.5\pgfdecorationsegmentamplitude}}
\newcommand{\PhaseDiagram}{
    \begin{tikzpicture}[scale=0.64, every node/.style={scale=1, inner sep=0pt}]
    \draw[thick, ->,>=stealth]   (0,0) -- (7,0) node[above=0.2] (A) { $A$ };
    \draw[thick] (0,0) -- (-2.5,-2.5);
    \node[state,minimum size=5pt, fill=black, regular polygon, regular polygon sides=3, shape border rotate=180] (GOEcritical2) at (0,8.95) {~} ;
    \draw[dotted]   (7,0) -- (9,0) node[above left =0.15] (Ainfty) {$+\infty$};
    \draw[loosely dotted]   (9,-2.5) -- (9,9)  ;
    \draw[dash dot]   (-2.5,0) node[left=0.2] (criticalB) {$B=-\frac{1}{2}$} -- (0,0) ;

    \draw[thick, ->,>=stealth]   (0,0) -- (0,7) node[right=0.2] (B) {$B$ };
    \node[state,minimum size=5pt, fill=black, regular polygon, regular polygon sides=3, shape border rotate=90] (GOEcritical) at (8.95,0) {~} ;
    \draw[dotted]   (0,7) -- (0,9) node[ below right =0.15] (Binfty) {$+\infty$};
    \draw[dash dot]   (0,-2.5) node[below=0.2] (criticalA) {$A=-\frac{1}{2}$} -- (0,0) ;
    \draw[loosely dotted]   (-2.5,9) -- (9,9) ;
    
    \draw[thick,  dashed]   (-4,4) -- (3.5,-3.5) node[below=0.2, align=center] (stationary) {$A+B+1=0$\\ \textit{Stationary correlations}};

    \node (GSE) at (3.5,3.5) {\textbf{GSE}};
    \node (Unbound) at (5.4,5.7) [state,ellipse,minimum width=65pt] {\textit{Unbound}};

    \node (BoundWall) at (-3,5.7) [state,ellipse,minimum width=60pt] {\textit{Bound to wall}}; 
    \node (BoundBrownian) at (5.6,-1.75) [state,ellipse,minimum width=50pt] {\textit{Bound to Brownian}};

    \node[rotate=90] (GOE1) at (0,3.5) {\textbf{GOE}};
    \node (GOE2) at (3.5,0) {\textbf{GOE}};
    
    
    \node[align=center, scale=0.8] (BrowDiri) at (9,4.5) {Brownian IC\\ with  Dirichlet \\ boundary condition};
    
    \draw [<-,>=stealth] (9.2,0) -- (9.7,0) node[scale=0.8, right=0.2] (GOE3) {GOE};
    \draw [<-,>=stealth] (9.2,-1) -- (9.7,-1) node[scale=0.8, right=0.2] (Gaussian3) {Gaussian};
    \draw [<-,>=stealth] (9.2,1) -- (9.7,1) node[scale=0.8, right=0.2] (GSE3) {GSE};

    \node[align=center, scale=0.8] (droplet) at (4.5, 9) {Droplet IC};
    
    \draw [<-,>=stealth] (0,9.2) -- (0,9.7) node[scale=0.8, above=2pt] (GOE4) {GOE};
    \draw [<-,>=stealth] (-1.6,9.2) -- (-1.6,9.7) node[scale=0.8, above=2pt] (Gaussian4) {Gaussian};
    \draw [<-,>=stealth] (1.1,9.2) -- (1.1,9.7) node[scale=0.8, above=2pt] (GSE4) {GSE};

    \draw[thick, gray] (-.8,-.8) -- (-.8,.8) -- (.8,.8) -- (.8,-.8) -- cycle;
    \pattern[black!10, opacity=0.5, pattern=north east lines] (-.8,-.8) -- (-.8,.8) -- (.8,.8) -- (.8,-.8) -- cycle;
    \end{tikzpicture}  
}
\newcommand{\ZoomedDiagram}{
\begin{tikzpicture}[scale=0.5]
\draw[->,>=stealth, thick]   (0,0) -- (7,0) node[below=0.1]  {$\Arescaled$};
\draw[dotted, thick]   (7,0) -- (8,0) node[ right =0.1] {GOE};
\draw[dash dot, thick]   (-2.5,0) node[left=0.2] {$\Brescaled=0$} -- (0,0) ;

\draw[->,>=stealth, thick]   (0,0) -- (0,7) node[left=0.1] {$\Brescaled$ };
\draw[dotted, thick]   (0,7) -- (0,8) node[ above =0.1] {GOE};
\draw[dash dot, thick]   (0,-2.5) node[below=0.2] {$\Arescaled=0$} -- (0,0);

\draw[thick]  (0,0) -- (-2.5,-2.5);

\draw[thick,  dashed]   (-4,4) -- (3.5,-3.5) node[below=0.1, align=center] (stationary) {$\Arescaled+\Brescaled=0$\\ \textit{Stationary correlations}};

\draw (4,4) node[] {\textbf{$F^{(\Arescaled, \Brescaled)}$}};
\fill (0,0) circle(0.2);
\draw[<-] (0.3,0.2) to[bend right] (1.2,0.8) node[anchor=south, align=center] {$ F^{(0,0)} $};

\draw[->] (6,6) -- (8,8) node[anchor=south west] {GSE};
\draw (6.5,6.5) node[anchor=west, scale=0.8] {$\Arescaled, \Brescaled\to+\infty$};

\draw[->] (6,4) -- (8,4) node[anchor=west] {$F^{(\Brescaled)}$};
\draw (7,4) node[anchor=north, scale=0.8] {$\Arescaled \to\infty$};

\draw[->] (4,6) -- (4,8) node[anchor=south] {$F^{(\Arescaled)}$};
\draw (4,7) node[anchor=south, scale=0.8, rotate=90] {$\Brescaled \to\infty$};
    
\draw[thick, gray,opacity=0.2] (-5.5,-5.5) -- (-5.5,5.5) -- (5.5,5.5) -- (5.5,-5.5) -- cycle;
\pattern[black!10, opacity=0.2, pattern=north east lines] (-5.5,-5.5) -- (-5.5,5.5) -- (5.5,5.5) -- (5.5,-5.5) -- cycle;
\end{tikzpicture}
}
\newcommand{\LogGammaPoly}{
      \begin{tikzpicture}[scale=0.8]
  \draw[->, thick, >=stealth', gray] (0,0) -- (10, 0);
  \draw[->, thick, >=stealth', gray] (0,0) -- (0, 5.5);
  \draw[thick, gray]  (0,0) -- (5.5,5.5);
  \draw (-.5, -.2) node{{\footnotesize $(1,1)$}};
  \draw (9.6, 5.3) node{{\footnotesize $(n,m)$}};
  \draw[->, >=stealth'] (10,2.5) node[anchor=west]{{\footnotesize $w_{i,j} \sim \mathrm{Gamma}^{-1}(\alpha_i + \alpha_j), \,\, i>j$}} to[bend right] (8,2);
  \draw[->, >=stealth'] (2.5,4.5) node[anchor=south]{{\footnotesize $w_{i,i} \sim \mathrm{Gamma}^{-1}(\diag + \alpha_i) $}} to[bend right] (3,3);
  \clip (0,0) -- (9.5,0) -- (9.5,5.5) -- (5.5, 5.5) -- (0,0);
  \draw[dotted, gray] (0,0) grid (10,6);
  
  \draw[ultra thick] (0,0) -- (1,0) -- (2,0) -- (2,1) -- (2,2) -- (3,2) -- (4,2) -- (4,3) -- (5,3) -- (6,3) -- (6,4) -- (7,4) -- (8,4) -- (8,5) -- (9,5);
  \end{tikzpicture}
}
\begin{document}

\title{\bf \Large Half-space stationary Kardar-Parisi-Zhang equation}
\date{\small \today}

\maketitle

\begin{abstract}
We study the solution of the Kardar-Parisi-Zhang (KPZ) equation for the stochastic growth
of an interface of height $h(x,t)$ on the positive half line, equivalently the free energy of the continuum directed polymer in a half space with a wall at $x=0$. The boundary condition $\partial_x h(x,t)|_{x=0}=A$
corresponds to an attractive wall for $A<0$, and leads to the binding of the polymer to the wall below the critical value $A=-1/2$. 
Here we choose the initial condition $h(x,0)$ to be a Brownian motion in $x>0$ with drift $-(B+1/2)$.
When $A+B \to -1$, the solution is stationary, i.e. 
$h(\cdot,t)$ remains at all times a Brownian motion with the same drift, 
up to a global height shift $h(0,t)$. We show that the distribution of this height shift is invariant under the exchange of parameters $A$ and $B$. For any $A,B > - 1/2$, we provide an exact formula characterizing the distribution of $h(0,t)$
at any time $t$, using two methods: the replica Bethe ansatz and a discretization called the
log-gamma polymer, for which moment formulae were obtained. 
We analyze its large time asymptotics for various ranges of parameters $A,B$. 
 In particular, when $(A, B) \to (-1/2, -1/2)$, the critical stationary case, the fluctuations of the interface are governed by a universal distribution akin to the Baik-Rains distribution arising in stationary growth on the full-line. It can be  expressed in terms of a simple Fredholm determinant, or equivalently in terms of the Painlev\'e II transcendent. 
This provides an analog for the KPZ equation, of some of the results
recently obtained by Betea-Ferrari-Occelli in the context of stationary half-space last-passage-percolation. 
From universality, we expect that limiting distributions found in both models can be shown to coincide.
\end{abstract}

\vspace*{1cm}

{\hypersetup{linkcolor=black}
\setcounter{tocdepth}{2}
	\tableofcontents
}

\newpage

\section{Introduction} 

\noindent The Kardar-Parisi-Zhang equation \cite{KPZ} describes the stochastic dynamics of the height field, $h(x,t)$, of a growing interface in the continuum, as a function of time, or equivalently, of the free energy of a continuum directed polymer in a random potential as a function of its length. 
In one space dimension, $x \in \mathbb{R}$, it is at the center of a vast universality class, the KPZ class, which contains numerous solvable discrete models with the same large scale behavior. Examples of such solvable models include random permutations and the associated PNG growth model, interacting particle systems (TASEP, ASEP, and variants), the stochastic six-vertex model and other stochastic vertex models, and  several discrete models of directed polymers (DP). There has been many papers on the subject and we refer the reader to the reviews and lecture notes \cite{baik2001symmetrized, SpohnTASEPGSE, spohn2006exact, ferrari2010interacting,  quastel2012lectures, corwin2012kardar, corwin2014macdonald, borodin2012lectures, borodin2014integrable, quastel2015one, takeuchi2016appetizer, borodin2016lectures, quastel2017totally}. 
Exact results have also been
obtained for the KPZ equation itself, and its equivalent system, the continuous directed polymer \cite{we,dotsenko,dotsenko22,dotsenko33,spohnKPZEdge,spohnKPZEdge2,spohnKPZEdge3,amir2011probability,sineG,CLDflat,CLDflat2,PLDCrossoverDropFlat,crissingprob2,dotsenkoGOE,
Quastelflat,SasamotoStationary,BCFV,KPZFixedPoint}. These results 
have been obtained on the full line, $x \in \mathbb{R}$, and have allowed proving the existing conjectures for the 
scaling exponents of the height fluctuations, $\delta h \sim t^{1/3} \sim x^{1/2}$, and to
predict and classify the universal probability distributions which arise for
various initial conditions. Most notably, the droplet and flat initial conditions were shown to lead, at 
large time, to the Tracy-Widom distributions \cite{tracy1994level, tracy1996orthonormal}  for the height at one point (centered and scaled by $t^{1/3}$) 
associated respectively to the 
Gaussian unitary and orthogonal ensembles, GUE and GOE, of random matrix theory.
Some of these predictions have been successfully tested in 
experiments \cite{exp4,exp444,Takeuchi,TakeuchiCrossover,TakeuchiHHLReview,deNardisPLDTakeuchi}.

\bigskip

\noindent It is also interesting for applications \cite{TakeuchiItoPng} to study models in the KPZ class restricted to the half line $x \in \mathbb{R}_+$, for which fewer results are available at present. For some specific coupling to the boundary (at $x=0$) the solvability properties can sometimes be preserved. One can usually define a
parameter, that we will call $A$, and which will be defined below in Eq.~\eqref{eq:boundaryconditionSHE} for the KPZ equation, characterizing this coupling. For the KPZ equation, $A=0$ and $A=+\infty$ correspond respectively to Neumann and to Dirichlet boundary conditions. The half-line KPZ equation is equivalent to a continuum directed polymer on a half-space with a wall at $x=0$, which is repulsive for $A>0$, and attractive for $A<0$. A remarkable feature of the half-space problem is the existence of a critical value $A_c$ of the parameter $A$ at which a phase transition occurs. In the polymer language it corresponds to a binding of the polymer to the wall if the attraction is strong enough $A<A_c$. The existence of this transition
was predicted by Kardar in 1985 \cite{KardarTransition} for the continuum directed polymer, using the replica Bethe ansatz.
No prediction for the height distribution was obtained however.

\bigskip 

\noindent In mathematics, in a pioneering paper
in 1999, Baik and Rains \cite{baik2001symmetrized} proved the existence of a similar transition 
in the context of the longest increasing sub-sequences (LIS) of symmetrized  random permutations.
There it was shown that, in the unbound phase, for the droplet initial condition, and near $x=0$, the (scaled) height fluctuations obey the Tracy-Widom  distribution associated to the Gaussian symplectic ensemble (GSE).
In the bound phase $A<A_c$, the fluctuations are simply Gaussian. Exactly at the transition, $A=A_c$,
fluctuations also obey the Tracy-Widom distribution but the one associated to the Gaussian orthogonal ensemble (GOE).
These results were extended in other models, e.g. a GSE-GUE crossover
was shown as the endpoint position is varied towards the bulk in the PNG model \cite{sasamotohalfspace}. For the TASEP in a half-space, equivalent to 
last passage percolation in a half-quadrant \cite{SpohnTASEPGSE}, 
similar results were obtained in \cite{baik2018pfaffian,baik2018pfaffian22} using Pfaffian-Schur processes \cite{borodin2005eynard}, 
in particular concerning the crossover as the parameter controlling the boundary is varied simultaneously with the distance to the boundary. {All these half-space models discussed above can be studied via the framework of Pfaffian point processes and random matrix theoretic techniques (these models are free-fermionic). However these models do not converge to the KPZ equation. The models converging to the KPZ equation such as the asymmetric simple exclusion process (ASEP) or directed polymers are not directly related to Pfaffian point processes (they are non-free-fermionic). Among those, 
ASEP  
was studied in \cite{TWhalf, tracy2013asymmetric, halfASEPBarraquand} and 
the half-space log-gamma polymer
was studied in \cite{OConnelSymmetrized,barraquand2018half, bisi2020geometric}. }
\bigskip 

\noindent For the KPZ equation in the half-space, a solution for the one-point height distribution 
near the wall valid at any time, was obtained for $A=+\infty$, 
for the droplet initial condition using the replica Bethe
ansatz in \cite{GueudrePLD} (see also \cite{krajenbrink2018large}).
Another solution (also non-rigorous) 
was obtained for $A=0$ in \cite{borodin2016directed}, with related but 
different methods using nested contour integral representations. In both cases
the large time limit is found to be GSE Tracy-Widom , consistent with the general picture obtained
by Baik and Rains as discussed above. 
Next, taking the limit from a half-space ASEP, a rigorous 
solution for the critical case $A=- \frac{1}{2}$ was obtained \cite{halfASEPBarraquand},
leading to GOE Tracy-Widom  fluctuations. More recently, in \cite{AlexLD}, a solution valid for any time 
was found using the replica Bethe ansatz for any $A> -1/2$, 
and which leads to the GOE Tracy-Widom  distribution at the critical point $A = -1/2$.
In \cite{deNardisPLDTT} a solution from the RBA, taking into account
bound states, was obtained, in agreement with the results of \cite{AlexLD}.

\bigskip 

\noindent A remarkable property of the KPZ equation on the full line is that the
stationary measure is the Brownian motion in the sense that if the 
initial condition $h(x,0)$ is a two sided Brownian motion (with the appropriate amplitude) the
PDF of the height difference between two space points is time independent. This leaves 
a uniform shift $h(0,t)$ whose fluctuations, scaled by $t^{1/3}$, was shown to follow the so-called 
Baik-Rains distribution, which is universal over the KPZ class. For the KPZ equation, the solution for all time with Brownian initial condition was found using the RBA in \cite{SasamotoStationary} and
proved rigorously in \cite{BCFV}. Early investigations on stationary models in the KPZ universality class started with  \cite{baik2000limiting} in the context of the polynuclear growth model, which introduced the Baik-Rains distribution as the limiting distribution of height fluctuations. For a very similar model (TASEP), the spatial correlations were investigated in \cite{baik2010limit}. Outside the class of free fermionic models, besides \cite{SasamotoStationary, BCFV} that we have already discussed, let us also mention \cite{aggarwal2018current} which proved the one point convergence of ASEP height function towards the Baik-Rains distribution.

\bigskip 

\noindent The aim of the present paper is to address the same problem, but for the half-line. We
study the KPZ equation on the half-line with a boundary parameter $A$ and an initial condition 
chosen as a unit one-sided Brownian motion with a drift, which we denote for later convenience as $- (B + \frac{1}{2})$. The problem is thus determined by two parameters, $A$ and $B$ and we will
study the phase diagram in the $(A,B)$ plane. As we show, on the line $A+B+1=0$ the initial condition is stationary, in the same sense as above, i.e. the PDF of the height difference between two space points remains at all time the one of the same Brownian motion. We show furthermore that the distribution of the height shift, $h(0,t)$, is invariant under the exchange of parameters $A$ and $B$. 
We will use two methods to obtain the exact generating function which characterizes the distribution of 
$h(0,t)$, at any time $t$. The first one is the replica Bethe ansatz and is a generalization of the calculation presented in \cite{AlexLD}. The second one starts from a known moment formula for the so-called
log-gamma polymer \cite{seppalainen2012scaling}, and takes the continuum limit to the KPZ equation. The obtained formula is
valid for any point in the quadrant $A,B > -1/2$. We then study its large time limit, leading to the phase diagram of Fig.~\ref{fig1}. In the $(A,B)$ plane, the point $(A,B)= (-1/2,-1/2)$ plays a special role, as the system is critical with respect to the boundary, and at the same time stationary, and will be called critical stationary. 
At this point, we show that in the large time limit the fluctuations of the interface are governed by a universal distribution, that we express in terms of a simple Fredholm determinant, equivalently in terms of the 
Painlev\'e II transcendent. In a sense, it is the analog for the half-line problem,
of the Baik-Rains distribution for the full line. Inside the quadrant $A,B > -1/2$ the distribution 
is obtained as the GSE Tracy-Widom  distribution and on scales $A+\frac{1}{2}, B + \frac{1}{2} \sim t^{-1/3}$, 
there is a universal two-parameter crossover distribution that we obtain in the quadrant.
In the so-called bound phase $A<-1/2$ or $B<-1/2$ away from the critical stationary point (including along the line $A+B+1=0$), the fluctuations are expected to be Gaussian, except when $A=B<-1/2$ where we expect the fluctuations to be distributed as the maximum {eigenvalue of a $2\times 2$ GUE matrix}.  The crossover to this behavior is however beyond the scope of this paper. Finally, note that our results will be consistent in the limit $B \to +\infty$ with all previous results for the droplet initial condition, in particular with the ones in \cite{AlexLD} for $A\geq -1/2$. The level of mathematical rigor of all these results is discussed in Section \ref{sec:mathematical}. 

\bigskip 

\noindent It is important to mention that very recently Betea, Ferrari and Occelli \cite{betea2019stationary} studied stationary half-space KPZ growth 
for a discrete model, the last-passage-percolation with exponential weights (i.e. a zero-temperature polymer). They obtained a formula for the asymptotic height distribution, depending on several parameters controlling the distance to the boundary and the position on the line $A+B+1$ near the critical stationary point. We expect, from the universality within the KPZ class, that our present result and theirs should match. The Pfaffian formula of \cite[Theorem 2.7]{betea2019stationary} and our formulae \eqref{eq:criticalstationaryintro}, \eqref{eq:Fdeterminatalformulaintro} and \eqref{eq:FintermsofF1F2intro} look different. Although we also have a Pfaffian representation, Eq.~\eqref{eq:defFepsiloneta0}, the associated kernels are different and we do not have a general method to show the equivalence of the Fredholm Pfaffians. A similar issue is discussed in \cite[Section 4.2, Eq.~(74)]{AlexLD} . Note that our formula allows for a very easy numerical evaluation of the CDF, given below in Fig.~\ref{fig:plots}, and of the first moments, and allows
us to determine tail estimates.
\bigskip 

It would be interesting to study the distribution of $h(x,t)$ when $x$ is at a distance of order $t^{2/3}$ from $0$, and $A,B$ are scaled close to $-1/2$. This would correspond to varying the parameter $\eta$ in \cite{betea2019stationary}. However, while we can obtain some integral formula for the moments of $Z(x,t)$ in the case $x>0$, see \eqref{eq:momentsKPZhalfspaceBrownian} below, we do not expect that it can be rewritten as a Pfaffian formula and the asymptotic analysis would require to develop other methods.  We leave this for future consideration.

\subsection*{Outline} First in Section~\ref{sec:Model} we define the models 
and make a summary of the main results and formulae obtained in this paper. In the two following sections we compute the moments of the polymer partition sum, i.e. the exponential moments of the KPZ field, by two methods. In 
Section~\ref{sec:BA} we present the derivation using the Bethe ansatz. In 
Section~\ref{sec:loggamma} we obtain the moments starting from the log-gamma polymer,
and check that the two moment formulae coincide. In Section~\ref{sec:FP} 
we obtain, a Pfaffian formula for the Laplace transform generating function by summing up the moments.
This leads to our first result, valid for all times and any $A,B > -1/2$, 
for the generating function as a Fredholm Pfaffian in terms of a matrix kernel. The large time
limit of this formula, and of the matrix kernel, is studied in section 
\ref{sec:largeTlimit}. In Section~\ref{sec:scalarkernel} we extend the method described in 
\cite{krajenbrink2018large} to obtain a formula for the Laplace transform generating function in terms 
of a scalar kernel, valid for all times and $A,B > -1/2$. In Section~\ref{subsec:largetime} 
we perform the large time limit on this scalar kernel, which leads to a two parameter
family of interpolating kernels near the point $(A,B)=(-1/2,-1/2)$. From it we obtain various 
limits, including our formula for the critical stationary distribution. 

\subsection*{Acknowledgements}
AK and PLD acknowledge support from ANR grant ANR-17-CE30-0027-01 RaMaTraF. AK acknowledges support from ERC under Consolidator grant number 771536 (NEMO).

\section{Model and main results} 
\label{sec:Model} 

\subsection{Model}
\label{subsec:Model} 

In this paper we study the KPZ equation, which reads, in dimensionless units 
\be \label{eq:KPZ}
\partial_t h(x,t) = \partial_x^2 h(x,t) + (\partial_x h(x,t))^2 + \sqrt{2} \, \xi(x,t) 
\ee 
where $\xi(x,t)$ is the standard space-time white noise, with $\mathbb{E}[ \xi(x,t) \xi(x',t') ]=\delta(x-x') \delta(t-t')$.
One introduces, via the Cole-Hopf mapping, the directed polymer partition sum $Z(x,t) = e^{h(x,t)}$, where 
$h(x,t)$ is solution of the KPZ equation \eqref{eq:KPZ}. It satisfies the multiplicative noise stochastic heat equation (SHE)
\be \label{SHE}
\partial_t Z(x,t) = \partial_x^2 Z(x,t) + \sqrt{2} \, \xi(x,t) \, Z(x,t) 
\ee 
understood here with the Ito prescription. Equation \eqref{SHE} means that $Z(x,t)$ can be 
seen as a partition sum over continuum directed paths in the random potential 
$-\sqrt{2} \, \xi(x,t)$, with the endpoint at time $t$ fixed at position $x$.\\

\begin{definition}
We consider the SHE 
on the half-line $x \geqslant 0$ with boundary parameter $A$ and $(x,t)\mapsto Z(x,t)$ to be the solution to \eqref{SHE} (it can be shown that the solution is unique, see \cite{corwin2016open, parekh2019kpz123} and references therein)  with the boundary condition
\begin{equation}
\partial_x Z(x,t)_{\mid x=0}=A Z(0,t).
\label{eq:boundaryconditionSHE}
\end{equation}
and with the Brownian initial data, in presence of a drift $-1/2-B$
\be
Z(x,0)=e^{\mathcal{B}(x)-(1/2+B) x}
\ee 
where $\mathcal{B}(x)$ is the standard Brownian (i.e. with $\mathcal{B}(0)=0$). We will sometimes denote the solution by $Z_A^B(x,t)$ to emphasize the dependence in parameters $A,B$, or simply $Z(x,t)$ when parameters are clear from the context. 
\label{def:SHEboundarybrownian}
\end{definition}

We have shifted by $1/2$ the drift parameter to make more explicit a remarkable symmetry between parameters $A$ and $B$. Indeed, we show  (see  Claim~\ref{prop:symmetryAB}) that for any $A,B\in \R$, we have the equality in distribution 
$$ Z_A^{B}(x=0,t)= Z_B^{A}(x=0,t), \text{ for any }t>0.$$
When $B$ goes to $+\infty$, we recover a result recently proved in \cite[Theorem 1.1]{parekh2019}. 
	
	\bigskip 
	
	When $A+B+1=0$, the model defined in Definition~\ref{def:SHEboundarybrownian} is stationary in the sense that for any fixed time $t$, the spatial process $\lbrace Z(x,t)/Z(0,t)\rbrace_{x\geqslant 0}$ has the same distribution as $\lbrace Z(x,0)\rbrace_{x\geqslant 0}$, that is the exponential of a standard Brownian motion with drift $-1/2-B$. Equivalently, the distribution of the slope field $\partial_x h(x,t)$ is time-stationary. Let us explain where this condition $A+B+1=0$ comes from.  In Section~\ref{sec:loggamma}, we consider a discretization of the KPZ equation, the log-gamma directed polymer. We identify in Section~\ref{sec:stationarystructure} initial and boundary conditions for the log-gamma polymer which make increments of the partition function stationary in time, using a result from \cite{seppalainen2012scaling} which deals with the full-space case. The log-gamma polymer partition function converges weakly to the stochastic heat equation from Definition~\ref{def:SHEboundarybrownian} at high temperature, see details in Section~\ref{sec:convergenceloggammaSHE} (note that we provide only a sketch of proof of this result based on the combination of results from \cite{wu2018intermediate} and \cite{parekh2019}). Hence we may pass to the limit, and taking into account the precise scalings, we obtain that the spatial process $\lbrace Z(x,t)/Z(0,t)\rbrace_{x\geqslant 0}$ is stationary when $A+B+1=0$. 
	
\bigskip 
There may exist other initial conditions for the half-space KPZ equation (or equivalently the stochastic heat equation) such that the slope field $\partial_x h(x,t)$ is time-stationary. Indeed, the KPZ equation arises as a scaling limit of the height function of particle systems such as ASEP for which other stationary distributions exist (see Refs.~\cite{liggett1975ergodic, derrida1993exact,grosskinsky, derrida2004asymmetric, bryc2019limit}).

\bigskip 

Before presenting our main results, let us clarify the meaning of the boundary condition \eqref{eq:boundaryconditionSHE}. As a process in $x$, $Z(x,t)$ has the same regularity as a Brownian motion, hence $\partial_x Z(x,t)$ cannot be associated to a real value. To make sense of \eqref{eq:boundaryconditionSHE}, we say \cite{corwin2016open} that $Z(x,t)$ is a solution  of \eqref{SHE} if it satisfies 
\begin{equation}
Z(x,t) = \int_{0}^{\infty} p^A_t(x,y)Z(y,0)\mathrm d y + \int_0^{\infty}\mathrm d y \int_{0}^t p^A_{t-s}(x,y)Z(y,s)\xi(y,s) \mathrm d s,
\label{eq:mildform}
\end{equation}
where the last integral is an Ito integral and $p^A_t(x,y)$ is the heat kernel on the positive half-line (i.e. it solves the equation $\partial_t u = \Delta u$ with initial data $\delta_y$) that satisfies the boundary condition 
\begin{equation}
\partial_x p^A_t(x,y) \big\vert_{x=0} = A\;p^A_t(0,y), \quad \quad t>0, y>0. 
\end{equation}
The main consequence that we will use below is that  
\begin{equation}
\partial_{x_i} \mathbb E\left[ Z(x_1,t) \dots Z(x_n,t) \right] \Big\vert_{x_i=0} = A \; \mathbb E\left[ Z(x_1,t) \dots Z(x_n,t) \right] \Big\vert_{x_i=0}, \quad 1\leqslant i\leqslant n, 
\label{eq:boundaryconditionformoments}
\end{equation}
which can be obtained by replacing $Z(x_i,t)$  using \eqref{eq:mildform} inside the expectation and differentiating with respect to $x_i$. \\

In terms of directed polymers, $Z(x,t)$ can be represented as a partition sum over directed paths
\begin{equation}
Z(x,t)=\E_{\mathcal{B}}\left[\int_0^{+\infty}\rmd y \, e^{\mathcal{B}(y)-(B+1/2)y}\int_{x(0)=y}^{x(t)=x}\mathcal{D}x(\tau)e^{-\int_0^t \rmd \tau [\frac{1}{4}(\frac{\rmd x}{\rmd \tau})^2-\sqrt{2}\eta(x(\tau),\tau)+2A \delta(x(\tau))]}\right]\, ,
\end{equation}
where  $\eta$ is a space-time white noise and $\mathcal D$ denotes the ``measure on paths''; more precisely the path integral is defined as an expectation value over reflected Brownian bridges $x(\tau)\in \R_+$ (reflected at $x=0$) for a given realization of the Brownian initial condition $\mathcal{B}(y)$, followed by an expectation over the Brownian $\mathcal{B}$. The extra $\delta$ interaction ensures the proper boundary condition at $x=0$ for $Z(x,t)$, see \cite[Section 3.2]{borodin2016directed}.

\subsection{Presentation of the main results} 
\label{subsec:presentation} 

Our main results concern the height at $x=0$, $h(0,t)$. In general its large time 
behavior is expected to be
\be
h(0,t) \simeq v^{A,B}_{\infty} t + t^{\beta} \chi 
\ee 
where $\chi$ is an $\mathcal{O}(1)$ random variable, and $\beta$ the growth fluctuation exponent. 
In the quadrant $A,B \geq -1/2$, to which our
exact results are restricted, one has $\beta=1/3$ and $v^{A,B}_{\infty}= - \frac{1}{12}$. Hence to
present these results, we define everywhere the shifted variable
\be
H(t) = h(0,t) + \frac{t}{12} 
\ee 
Note however that in the so-called bound phase, which will not
be studied in great detail here, we expect a different value of $v^{A,B}_{\infty}$ with $\beta=1/2$ and different distributions for $\chi$ (see Section~\ref{sec:conjecturalidentity}).  
In the limit $B \to +\infty$, i.e. for the droplet initial condition, it was found \cite{deNardisPLDTT} that
$v^{A,+\infty}_{\infty} = - \frac{1}{12} + (A+\frac{1}{2})^2$ for $A \leq -1/2$. In the general $A,B$ case, we expect that 
$ v^{A,B}_{\infty} = - \frac{1}{12} + \Big(\min\big\lbrace A+\frac{1}{2} , B+\frac{1}{2}, 0\big\rbrace\Big)^2$, based on a heuristic argument presented in Section~\ref{sec:conjecturalidentity}.

\subsubsection{Finite time: Fredholm Pfaffian of matrix kernel}

Our main result valid for all time $t \geqslant 0$ and all $A,B > - \frac{1}{2}$ is that the following generating function defined for $\varsigma>0$
can be written as a Fredholm Pfaffian 
\begin{equation}
\mathbb{E}\left[ \exp(-\varsigma \invgamma e^{H(t)}) \right]=1+\sum_{n_s=1}^{+\infty} \frac{(-1)^{n_s}}{n_s!}\prod_{p=1}^{n_s} \int_\mathbb{R} \rmd r_p \frac{\varsigma}{\varsigma+e^{-r_p}}{\rm Pf}\left[ K(r_i,r_j)\right]_{n_s \times n_s}.
\label{eq:fredholmKPZ}
\end{equation}
Here $\invgamma$ is a random variable, with
an inverse gamma distribution of parameter $A+B+1$, see \eqref{pxi}, independent from $H(t)$,
which enters in the construction of the generating function. The kernel 
$K$ is matrix valued and represented by a $2\times 2$ block matrix with elements
\begin{equation}\label{eq:kernelAllA}
\begin{split}
&K_{11}(r,r')=\iint_{C^2} \frac{\mathrm{d}w}{2\I\pi}\frac{\mathrm{d}z}{2\I\pi}\frac{w-z}{w+z}\ratioGamma(w)\ratioGamma(z)\cos(\pi w)\cos(\pi z)e^{ -rw-r'z + t  \frac{w^3+z^3}{3} },\\
&K_{22}(r,r')=\iint_{C^2} \frac{\mathrm{d}w}{2\I\pi}\frac{\mathrm{d}z}{2\I\pi}\frac{w-z}{w+z}\ratioGamma(w)\ratioGamma(z)\frac{\sin(\pi w)}{\pi}\frac{\sin(\pi z)}{\pi}e^{ -rw-r'z + t  \frac{w^3+z^3}{3}},\\
&K_{12}(r,r')=\iint_{C^2}  \frac{\mathrm{d}w}{2\I\pi}\frac{\mathrm{d}z}{2\I\pi}\frac{w-z}{w+z}\ratioGamma(w)\ratioGamma(z)\cos(\pi w)\frac{\sin(\pi z)}{\pi}e^{-rw-r'z + t  \frac{w^3+z^3}{3}},\\
&K_{21}(r,r')=-K_{12}(r',r).
\end{split}
\end{equation}
where the dependence in parameters $A,B$ only appears in the function 
\be \label{GG} 
\ratioGamma(z) = \frac{\Gamma(A+\frac{1}{2}-z)}{\Gamma(A+\frac{1}{2}+z)}\frac{\Gamma(B+\frac{1}{2}-z)}{\Gamma(B+\frac{1}{2}+z)} \Gamma(2z) 
\ee
and the contour $C$ is an upwardly oriented vertical line parallel to the imaginary axis with real part between $0$ and $\min\lbrace A+\frac{1}{2},B+\frac{1}{2},1\rbrace $. The series in \eqref{eq:fredholmKPZ} can also be interpreted as a Fredholm Pfaffian, see Eq.~\eqref{eq:PfaffAllTimesAllA} and Appendix~\ref{app:2D_to_1D}. The kernel \eqref{eq:kernelAllA} has a similar structure as the kernel defining the GSE Tracy-Widom distribution \cite{tracy2005matrix}. It is not entirely obvious that the integrals over the $r_i$ in \eqref{eq:fredholmKPZ} are well-defined, but this is the case. Indeed, one can show that (i) all the entries of $K$ have exponential decay as $r,r'$ go too $+\infty$, using a standard contour shift argument, see e.g. \cite[Lemma 6.4]{barraquand2018half} (ii) all entries of $K$ grow at most polynomially with $\vert r\vert, \vert r'\vert$, which can be shown using a variant of \cite[Lemma 7.11]{barraquand2018half}. 

\subsubsection{Finite time: Result in terms of a scalar kernel} 

The matrix kernel in \eqref{eq:kernelAllA} has the structure of a Schur Pfaffian. Following 
 \cite{krajenbrink2018large} and Appendix~\ref{app:2D_to_1D}, we are able to express the generating function in terms of the Fredholm determinant of a scalar kernel
\begin{equation} \label{res-scalar0} 
\mathbb{E}\left[ \exp(-\varsigma \invgamma e^{H(t)}) \right]
=\sqrt{\mathrm{Det}(I- 
\bar{K}_{t,\varsigma})_{\mathbb{L}^2(\mathbb{R}_+)}}.
\end{equation}
were the kernel $\bar{K}_{t,\varsigma}$ defined for all $(x,y)\in \mathbb{R}_+^2$ as
\begin{equation}\label{eq:1DkernelAllTIntro}
\bar{K}_{t,\varsigma}(x,y)
=2\partial_x \iint_{C^2} \frac{\rmd w \rmd z}{(2\I\pi)^2 }\ratioGamma(z)\ratioGamma(w) \frac{\sin(\pi (z-w))}{\sin(\pi(z+w))}\varsigma^{w+z}e^{-xz-yw  + t  \frac{w^3+z^3}{3} }
\end{equation}
where the function $G(z)$ is defined in \eqref{GG}. Again this formula is valid for
for all time $t \geqslant 0$ and all $A,B > - \frac{1}{2}$.  In principle from formula \eqref{eq:fredholmKPZ} or \eqref{res-scalar0} 
the PDF of $H(t)$ for any time $t$ can be extracted, see e.g. \cite{GueudrePLD}
for the case $A,B=+\infty$. Here, we only extract the PDF's in the large time
limit, as we now discuss.

\begin{remark}
The kernel $\bar{K}_{t,\varsigma}$ can be extended by adding a fictitious variable so that the r.h.s of 
\eqref{res-scalar0} is a $\tau$-function of the Kadomtsev-Petviashvili (KP) equation, see Appendix
\ref{app:KP}.
\end{remark}

\begin{remark}
Other cases where it is possible to transform a matrix-valued kernel into a scalar kernel have been considered in the random matrix literature, see Refs.~\cite[Sections II -- III]{tracy1996orthonormal} and \cite[Section 3.1]{forrester2000painlev}.
\end{remark}

\subsubsection{Large time limit and phase diagram} 

The phase diagram in the $(A,B)$ plane in the large time limit is shown in Fig. 
\ref{fig1}. Qualitatively there are three regions. {In the region $A<-1/2$ with $A<B$, we expect, 
from the results of \cite{deNardisPLDTT}
for $B=+\infty$, 
that the polymer is ``bound to the wall'' and that the (scaled) height distribution
at large time is Gaussian (see also analogous results for other models in \cite{baik2001asymptotics}, \cite[Section 6]{baik2018pfaffian}, \cite[Section 8.1]{barraquand2018half}). By ``bound to the wall'', we mean that the polymer path spends most of its time at the boundary and does not significantly venture into the bulk, this phenomenon was predicted by Kardar \cite{KardarTransition} who studied the depinning of the polymer by the random environment. By symmetry the same can be expected for the region $B<-1/2$ with $B<A$,  which
corresponds to a polymer ``bound to the Brownian''. In the special case where $A=B<-1/2$, the nature of fluctuations is different, there is a competition between the boundary and the initial condition and we expect that the fluctuations have the same distribution as the largest eigenvalue of a $2\times 2$ GUE matrix, based on heuristic arguments presented in Section~\ref{sec:conjecturalidentity}. We have, however,  no exact formula for
the region $A<-1/2$ or $B<-1/2$ called the bound phase. Our exact results concern the third region, the quadrant $A,B \geq -1/2$.}\\

The first result is that for any fixed $A,B > -1/2$, the distribution of the height $H(t)$ converges
at large time to the GSE Tracy-Widom  distribution $F_4$
{\bea \label{GSE1} 
 \lim_{t\to \infty} \mathbb{P}\left(\frac{H(t)}{t^{1/3}}\leqslant s\right)=F_4(s), \quad \quad A,B>-1/2. 
\eea}

When $A=-1/2$, and for any $B > -1/2$ we find that the fluctuations are given the GOE Tracy-Widom  distribution.
{\bea \label{GOE11} 
  \lim_{t\to \infty} \mathbb{P}\left(\frac{H(t)}{t^{1/3}}\leqslant s\right)=F_1(s), \quad \quad A=-1/2, B>-1/2. 
\eea
By symmetry the same holds for $B=-1/2$ and any $A>-1/2$. These results are natural to expect. Indeed, when $B>-1/2$, the initial condition has a drift so negative that the asymptotics of the height function should be the same as for the narrow wedge initial data. The limiting distribution then depends on the value of the boundary parameter $A$ according to the Baik-Rains transition discussed in the Introduction, hence the GSE and GOE Tracy-Widom asymptotics.

\bigskip 

To study the ``critical stationary'' point $(A,B)=(-1/2,-1/2)$ we write\footnote{Note that the parameter $A$ plays the same role as in \cite{halfASEPBarraquand} and \cite{AlexLD} but was denoted $a$ in \cite{borodin2016directed} and $b$ in \cite{deNardisPLDTT}. The parameter that we denote $\Arescaled$ was denoted $\epsilon$ in \cite{AlexLD}.}
\be
A + \frac{1}{2} = t^{-1/3} \Arescaled, \quad  \quad B + \frac{1}{2} = t^{-1/3} \Brescaled,
\ee 
and consider the large time limit at fixed values of $\Arescaled,\Brescaled$. This corresponds to  zooming around the critical stationary point as represented in Fig.~\ref{fig:detailaroundstationarypoint}.
It is natural to expect -- and we indeed show -- that in the large time limit there is a two parameter family of CDFs 
$F^{(\Arescaled, \Brescaled)}(s)$, indexed by $\Arescaled,\Brescaled$ such that
\begin{equation} \label{res00} 
 \lim_{t\to \infty} \mathbb{P}\left(\frac{H(t)}{t^{1/3}}\leqslant s\right) := F^{(\Arescaled, \Brescaled)}(s).  
\end{equation}

\begin{figure}
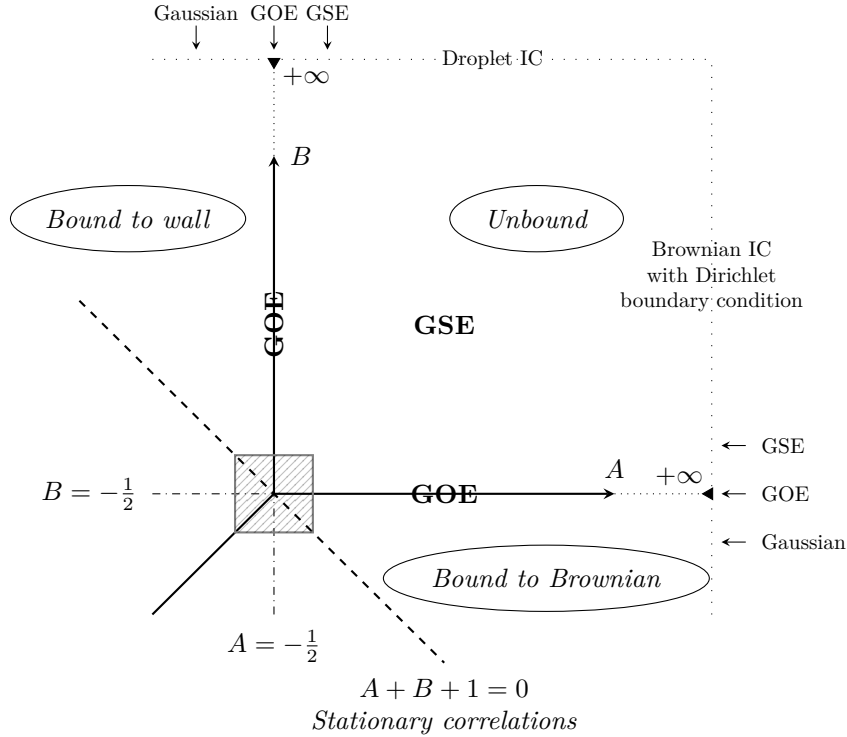

	\begin{center}
\PhaseDiagram
	\end{center}
\caption{Phase diagram indicating the distribution of height fluctuations at large time, as a function of the parameters $A,B$. The nature of fluctuations in the dashed area around $(A,B)=(-1/2,-1/2)$ is explained in Fig.~\ref{fig:detailaroundstationarypoint}.}
\label{fig1} 
\end{figure}

Here, we first obtain a Fredholm Pfaffian formula for the CDF $F^{(\Arescaled, \Brescaled)}(s)$ with $\Arescaled,\Brescaled>0$,
by taking the large time limit of the matrix kernel formula \eqref{eq:fredholmKPZ}.
It reads 
\begin{equation}
F^{(\Arescaled,\Brescaled)}(s) = \left(1+\frac{\partial_s}{\Arescaled+\Brescaled} \right){\rm Pf}(J-K^{(\Arescaled, \Brescaled)})_{\mathbb{L}^2(s,+\infty)}
\label{eq:defFepsiloneta0}
\end{equation}
where the large time matrix kernel $K^{(\Arescaled, \Brescaled)}$ is given
in \eqref{eq:largeTimeK}. Equivalently, a more convenient formula is obtained by taking the large time limit of the scalar kernel formula \eqref{res-scalar0}. We then obtain the CDF of the one-point KPZ height $H(t)$ in the critical region, in terms of a Fredholm determinant, for $\Arescaled,\Brescaled>0$ 
\be
\label{eq:MainResultsLargeTime}
F^{(\Arescaled, \Brescaled)}(s) =\left(1+\frac{\partial_s}{\Arescaled+\Brescaled} \right)\sqrt{\det(I-\bar K^{(\Arescaled, \Brescaled)})_{\mathbb{L}^2(s,+\infty)}}
\ee
where the scalar transition kernel $\bar K^{(\Arescaled,\Brescaled)}$ takes the form 
	\begin{equation}
\bar K^{(\Arescaled, \Brescaled)}(x,y) = \int_{0}^{+\infty} d \la A^{(\Arescaled, \Brescaled)}(x+\la)A^{(\Arescaled, \Brescaled)}(y+\la)\,\mathrm  -\frac 1 2 A^{(\Arescaled, \Brescaled)}(x) \int_{0}^{+\infty}\mathrm d\la \, A^{(\Arescaled, \Brescaled)}(y+\la)\, .
\label{eq:GLDform0}
	\end{equation}
Here the function $A^{(\Arescaled, \Brescaled)}(x)$ is defined by the integral representation
\begin{equation}
A^{(\Arescaled, \Brescaled)}(x) = \int \frac{\mathrm d z}{2\I\pi} \frac{\Arescaled +z}{\Arescaled -z}  \frac{\Brescaled+z}{\Brescaled -z} e^{-xz+ \frac{z^3}{3}},
\label{eq:defA0}
\end{equation} 
	where the contour is a upwardly oriented vertical line with real part between $0$ and $\min \lbrace \Arescaled, \Brescaled \rbrace $. Finally, introducing the operator $\hat A_s$ with kernel $\hat A_s(x,y)= A^{(\Arescaled, \Brescaled)}(x+y+s)$, the final and simplest expression for the cross-over CDF obtained from an algebraic manipulations of \eqref{eq:MainResultsLargeTime} is
\be
F^{(\Arescaled, \Brescaled)}(s) = \frac{1}{2} \left(1 + \frac{\partial_s}{\Arescaled+\Brescaled}\right)   \left(\det(I - \hat A_s)_{\mathbb{L}^2(\mathbb{R}_+)} + \det(I + \hat A_s)_{\mathbb{L}^2(\mathbb{R}_+)}\right)
\ee

It is clear on this formula that if $\Arescaled,\Brescaled \to +\infty$ simultaneously, then $A^{(\Arescaled, \Brescaled)}(x)$
converges to the standard Airy function, and $\bar K^{(\Arescaled, \Brescaled)}$ to the kernel 
associated to the GSE Tracy-Widom  distribution (in the form found in \cite{GueudrePLD}). This thus matches smoothly with the result \eqref{GSE1} valid for any fixed $A,B>-1/2$. Another interesting limit, that we call $F^{(\Arescaled)}(s)= \lim_{\Brescaled \to +\infty} F^{(\Arescaled,\Brescaled)}(s)$,
is the limit $\Brescaled \to + \infty$ at fixed $\Arescaled$, which corresponds to the droplet initial condition
in the critical region for the wall parameter. A formula for that CDF was obtained, for $\Arescaled\geq 0$, using the RBA in \cite{AlexLD}. It was conjectured to coincide with the GSE-GOE-Gaussian crossover introduced by Baik and Rains \cite{baik2001asymptotics}, see also \cite{baik2018pfaffian,baik2018pfaffian22} 
in the context of last passage percolation. This crossover was
also studied in the context of spiked models of random matrices from the GSE \cite{wang2009largest}. 
By the $A \leftrightarrow B$ symmetry,  the case $A\to+\infty$ is similar,  and we obtain the same distribution $F^{(\Brescaled)}(s)= \lim_{\Arescaled \to +\infty} F^{(\Arescaled,\Brescaled)}(s)$, which corresponds to the model with Brownian initial data in presence of an infinite repulsive wall (see \cite{parekh2019kpz123} for a more mathematical interpretation). This is consistent with Tracy-Widom  GOE fluctuations for any fixed 
$A=-1/2$ and $B>-1/2$ $(a=0,b=\infty)$ or fixed $B=-1/2$ and $A>-1/2$ ($b=0, a=\infty)$.

A more difficult limit, which we discuss now, is the stationary critical point 
$(A,B)=(-1/2,-1/2)$ corresponding to both $\Arescaled,\Brescaled \to 0$.

\begin{figure}
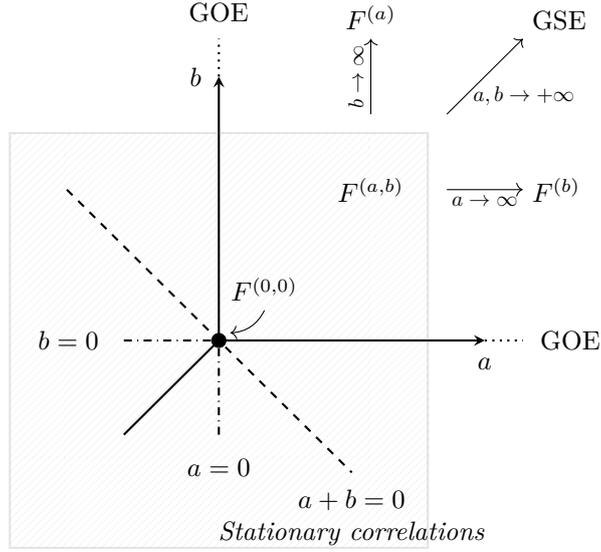

\begin{center}		
\ZoomedDiagram
\end{center}
\caption{Zoom into the vicinity of $(A,B) = (-1/2, -1/2) $. The distribution of height fluctuations at large time  is indicated as a function of parameters $\Arescaled=t^{1/3}(A+\frac 1 2), \Brescaled=t^{1/3}(B+\frac 1 2)$.}
\label{fig:detailaroundstationarypoint}
\end{figure}

\subsection{Stationary critical distribution}

At the point $(A,B)=(-1/2,-1/2)$ we have found a remarkable universal distribution, corresponding
to the CDF $F(s):=F^{(0,0)}(s)$. Taking the limit $\Arescaled, \Brescaled\to 0$ is delicate (as it is also to obtain the Baik-Rains
distribution) and we found that the representation of the kernel $\bar K^{(\Arescaled, \Brescaled)}$ as in \eqref{eq:GLDform0} was crucial. The limit is performed in Section~\ref{sec:criticalstationarycase}.  We have shown that
the limit is well defined, i.e. independent of the ratio $r=\Brescaled/\Arescaled$.  We have obtained the result in several equivalent forms. We recall that we are characterizing the CDF $F(s)$ such that
\be
\lim_{t\to \infty} \mathbb{P}\left(\frac{H(t)}{t^{1/3}}\leqslant s\right) = F(s), \quad \quad A=B=-1/2. 
\label{eq:criticalstationaryintro}
\ee

The first form is in terms of the sum of two Fredholm determinants. Defining the following two kernels acting on functions in $\mathbb{L}^2( \R_+)$, 
\be \label{eqs} 
\Ai_s(x,y) = {\Ai}(s + x + y), \quad \quad \widetilde{\Ai}_s(x,y) = {\Ai}(s + x + y) + \int_0^{+\infty} \rmd \lambda \, 
{\Ai}(s + x + \lambda), 
\ee 
then
\begin{equation}
F(s) = \partial_s  \left[  2 \det(I + \widetilde \Ai_s) + (s-2)\det(I + \Ai_s)\right].
\label{eq:Fdeterminatalformulaintro}
\end{equation}
The second form is expressed in terms of the CDF's of the GOE and GUE Tracy-Widom distributions $F_1$ and $F_2$ respectively,  as
\begin{equation}
F(s)=\partial_s\left[\frac{F_2(s)}{F_1(s)} \int_{-\infty}^s \rmd t \, \frac{F_1(t)^4}{F_2(t)^2} \right]. 
\label{eq:FintermsofF1F2intro}
\end{equation}
which is very reminiscent of the formula for the Baik-Rains distribution for the full space stationary problem (recalled in \eqref{eq:defBaikRains}).
The third form is expressed in terms of the Hastings-McLeod solution $q(s)$ to the Painlev\'e II equation as 
\begin{equation}
F(s)=\partial_s\left[e^{ - \frac{1}{2} \int_s^{+\infty} \mathrm d r [(r-s)q^2(r)-q(r)] } \int_{-\infty}^{s} \rmd r \, e^{-2\int_r^{+\infty} \rmd t \, q(t)} \right]. \label{pasdelabel} 
\end{equation}

The first moments and cumulants are given in the Table~\ref{table:moments of the distribution} and we plot in Fig.~\ref{fig:plots}, the CDF $F$ together with its derivative, the PDF. Finally, we computed and plotted in Appendix~\ref{app:F} the asymptotics of the CDF $F(s)$ 
\begin{itemize}
\item  for large positive $s$
\begin{equation}
\begin{split}
&1 - F(s) \\
& = \frac{s^{3/4}e^{-\frac{2 s^{3/2}}{3}}}{4\sqrt{\pi}} \left[1+\frac{139s^{-3/2}}{48  }-\frac{11423s^{-3}}{4608  }+\frac{3907027s^{-9/2}}{663552  }-\frac{2886147455s^{-6}}{127401984  }+o(s^{-6})\right]
 \end{split}
\end{equation}
\item and for large negative $s$
\begin{equation}
\begin{split}
F(s)&=2^{-203/48}e^{\zeta'(-1)/2} \exp \big[-\frac{\left| s\right| ^3}{24}-\frac{\left| s\right| ^{3/2}}{\sqrt{2}}+\frac{23}{16}\log\abs{s}+\frac{91}{8 \sqrt{2}
   \left| s\right| ^{3/2}}\\
   & \hspace*{5cm} -\frac{3957}{128 \, \abs{s} ^3}+\frac{28717}{128 \sqrt{2} \left| s\right| ^{9/2}}-\frac{469683}{512 \left| s\right|
   ^6}+o(s^{-6})\big]
   \end{split}
\end{equation}
\end{itemize}

\begin{table}[t!]
\begin{center}
\begin{tabular}{p{4cm} p{2cm} p{2cm} p{2cm}  p{2.4cm}}
   \hline
   Distribution &  Mean & Variance & Skewness & Excess kurtosis \\ [1ex]
\hline
&    &  & \\[0.5ex]
Half-space stationary & 0  & 1.649 & 0.266 & 0.134 \\[1ex]
Tracy-Widom $\beta=1$ & $-1.2065\dots$  & $1.6078\dots$& $0.2935\dots $ & $0.1652\dots $ \\[1ex]
Tracy-Widom $\beta=2$ & $-1.7711\dots$  & $0.8132\dots$& $0.2241\dots $ & $0.0934\dots $ \\[1ex]
Tracy-Widom $\beta=4$ & $-2.3069\dots$  & $0.5177\dots$& $0.1655\dots $ & $0.0492\dots $ \\[1ex]
Baik-Rains & 0  & $1.1504\dots $ & $0.3594\dots$ & $0.2892\dots$ \\[1ex]
   \hline 
\end{tabular}
   \caption{{Mean, variance, skewness and excess kurtosis} of the half-space critical stationary distribution and comparison with the Tracy-Widom and Baik-Rains distributions (see \cite[Section 9.4.1]{Bornemann2} and \cite{Halpin2014}).}

   \label{table:moments of the distribution}
\end{center}
\end{table}

As we mentioned above, it remains to be shown that our formula is equivalent to
the Fredholm Pfaffian formula obtained in \cite[Theorem 2.7]{betea2019stationary} (setting there $\delta=0$ and $u=0$) as expected from universality. 

\begin{figure}[t!]
\begin{center}
		\includegraphics[width=7.1cm]{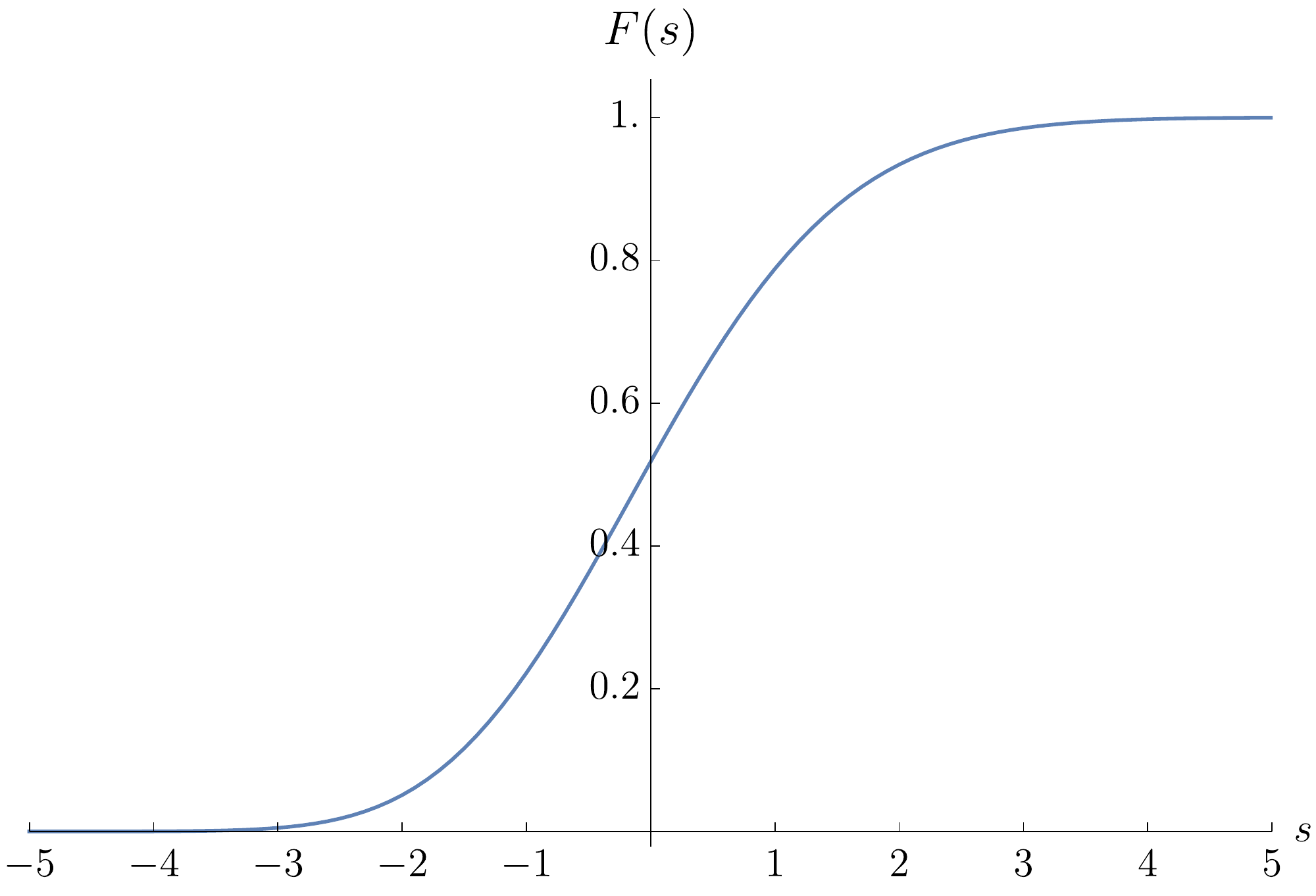}
		\includegraphics[width=7.1cm]{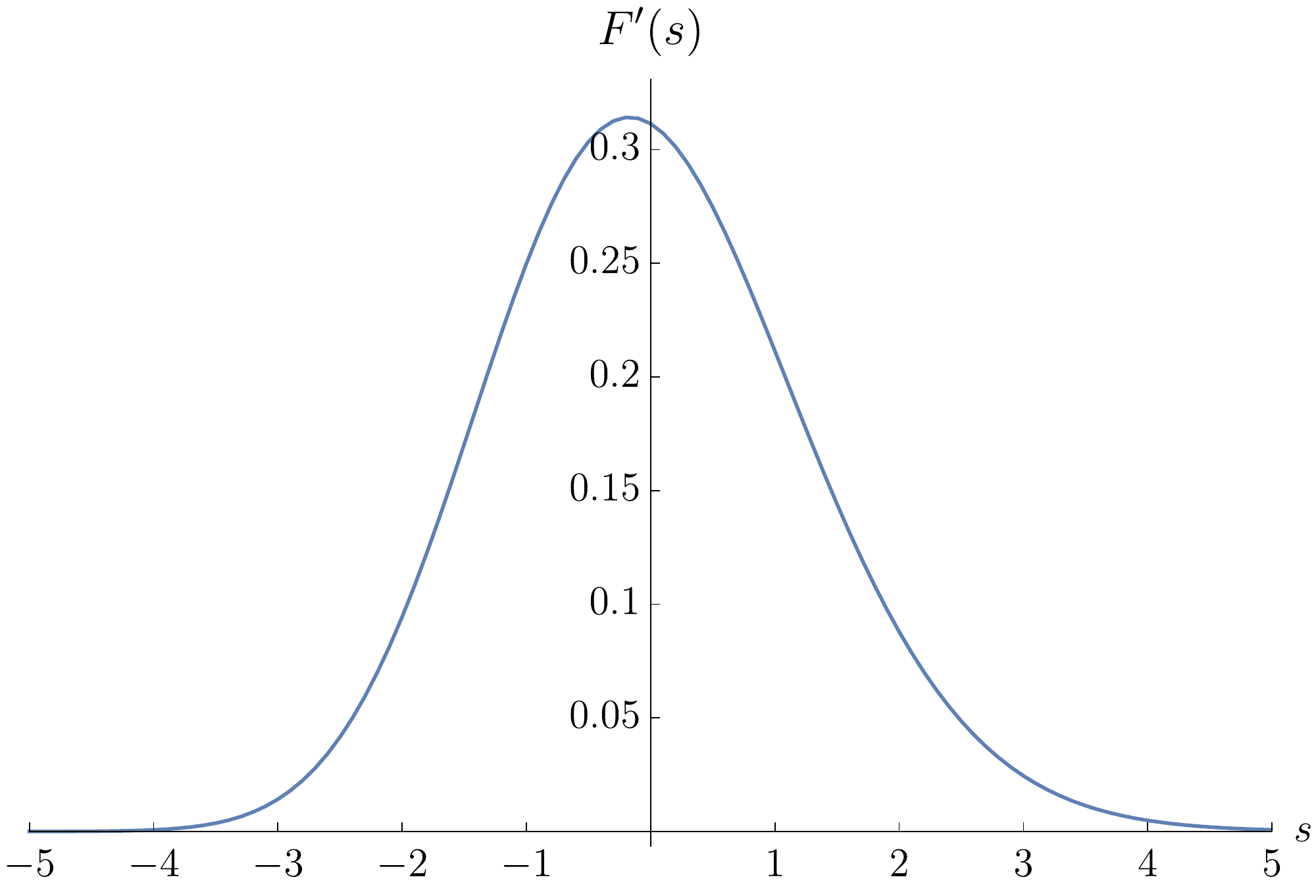}
\end{center}
\caption{\textbf{Left}: Critical stationary CDF $F$. \textbf{Right}: corresponding PDF. See Fig.~\ref{fig:plotsTails} in Appendix.~\ref{app:F} for the comparison with the asymptotics ($s\to \pm \infty$) in true and logarithmic scales.}
\label{fig:plots}
\end{figure}

\subsection{Mathematical aspects}
\label{sec:mathematical}

{The results presented in this article rely on a combination of physics and mathematics methods, but we focus in this article on  physics results and make clear here that most of our results are not proved according to the standards of rigor of the mathematics literature (in particular the results stated as Claims below).} It remains a challenge to turn the arguments that we present here into mathematical theorems. Let us comment further on these aspects for the mathematically inclined reader. 

\medskip 

The first difficulty from the mathematical point of view is that we cannot  rigorously characterize the distribution of the KPZ equation through its moments, because  they grow too fast to uniquely determine the distribution. This is why the moment generating series that we consider in Section~\ref{sec:FP} are actually divergent series, but the formal power series become convergent after certain manipulations and exchanges of series/integrals. For the full-space KPZ equation, it has been proved (see e.g. \cite{borodin2016directed}) that these manipulations lead to the correct answer. It could be possible to overcome this issue in our case by working on a model for which the moment problem is well-defined, and take a scaling limit to the KPZ equation. Such a strategy has been implemented for instance in  \cite{amir2011probability, borodin2014macdonald, borodin2012free, BCFV} for the full-space KPZ equation and in \cite{halfASEPBarraquand} for the half-space KPZ equation with $A=-1/2$ and droplet initial data. Another possible approach is provided by the framework of half-space Macdonald processes \cite{barraquand2018half} which allows to prove Laplace transform formulae despite the divergence of moments. 

\medskip 

The second obstacle is that in order to prove the results of Section~\ref{sec:BA} below, one would need to prove the completeness of the Bethe ansatz eigenfunctions. Actually, we present in Section~\ref{sec:loggamma} another approach to obtain the same moment formulae. It relies on rigorous formulae for the log-gamma polymer from \cite{barraquand2018half}, and we take a scaling limit to the KPZ equation. We obtain a nested contour integral formulae for the moments of $Z_A^B(x,t)$ in Claim~\ref{prop:nestedcontours}. Note that this formula allows to take $x>0$. Then, for $x=0$, we may move the contours together appealing to a combinatorial conjecture from Borodin, Bufetov and Corwin \cite[Conjecture 5.2]{borodin2016directed}, and a Pfaffian structure appears. Hence we see that assuming completeness of Bethe ansatz eigenfunctions or using this conjecture leads to the same moment formula. The results from \cite{borodin2014spectral, borodin2015spectral} suggest that the two problems are indeed related. 

\medskip 

Finally, the asymptotic analysis of Fredholm Pfaffians such as \eqref{eq:fredholmKPZ} is delicate, especially in the critical stationary regime. In particular, we have first performed a large time limit for positive $a,b$, and then let the parameters $a,b$ go to $0$. This allowed us to benefit from the structure of the kernel \eqref{eq:GLDform0}, which eventually lead to a very simple formula for the distribution $F(s)$. It would be interesting to find a generalization of the form \eqref{eq:GLDform0} at finite time, and prove that the limits commute, i.e. one can take first $A,B\to -1/2$ and then study the large time limit.

\section{Moments from the replica Bethe ansatz}

\label{sec:BA} 

\subsection{Quantum mechanics and Bethe ansatz}

In this Section we use the replica Bethe ansatz method to calculate the
integer moments of the partition sum. The equal time multi-point moments of the solution of the SHE, $Z (x,t)$, over the KPZ noise
can be expressed \cite{kardareplica} as a matrix element of the quantum mechanical evolution operator in imaginary time of the Lieb-Liniger model \cite{ll}

\be \label{momentsn} 
\mathbb{E} \left[ Z (x_1,t) \dots Z (x_n,t)\right] = \langle x_1 \dots x_n | e^{- t H_n} |\Psi(t=0) \rangle
\ee 
Here $H_n$ is the Hamiltonian of the Lieb Liniger model \cite{ll} for $n$ quantum particles
with attractive delta function interactions of strength $c=-\bar c <0$
\be
H_n = - \sum_{i=1}^n \partial_{x_i}^2 - 2 \bar c \sum_{1 \leqslant i < j \leqslant n} \delta(x_i-x_j) 
\ee 
with here an below, in our units $\bar c=1$. The initial state $\ket{\Psi(t=0)}$ is such that
\be
\mathbb{E}\left[ Z (x_1,0) \dots Z (x_n,0)\right]= \langle x_1 \dots x_n |  \Psi(t=0) \rangle
\ee
 Since here we are considering the Brownian initial condition
and interested in averages both over the Brownian and the KPZ noise
we must take the initial state $|\Psi(t=0) \rangle$ as
\be
\langle x_1 \dots x_n |  \Psi(t=0) \rangle = \Phi_0(x_1,\dots,x_n) := 
\mathbb{E}_{\mathcal{B}}\left[ \exp\left( \sum_{j=1}^n \mathcal{B}(x_j) -(B+1/2)x_j \right) \right]
\ee 
A simple calculation shows that $\Phi_0(x_1,\dots,x_n)$ is the fully symmetric
function which in the sector $0 \leqslant x_1<\dots \leqslant x_n$ takes the form
\be \label{initial} 
\Phi_0(x_1,\dots,x_n)  = \exp\left(\sum_{j=1}^n \frac{1}{2} (2n-2j+1)x_j -(B+1/2) x_j \right), 
\ee

We can now rewrite \eqref{momentsn} at coinciding points using the decomposition of
the evolution operator $e^{- t H_n}$ in terms of the eigenstates of $H_n$ as
\bea
\label{replicaevolution}
\mathbb{E}\left[Z (x,t)^n\right]
= \sum_\mu \Psi_\mu(x,\dots, x) \langle \Psi_\mu | \Phi_0 \rangle  \frac{1}{||\mu||^2} e^{- t E_\mu} 
\eea 
Here the un-normalized eigenfunctions of $H_n$ are denoted $\Psi_\mu$ (of norm denoted $||\mu||$) with eigenenergies $E_\mu$. Here we used the fact that only symmetric (i.e. bosonic) eigenstates contribute since the initial and final states are fully symmetric in the $x_i$. Hence the $\sum_\mu$ denotes a sum over all bosonic eigenstates of the Lieb-Liniger model,
also called delta Bose gas, and $\langle \Psi_\mu | \Phi_0 \rangle$ denotes the overlap, i.e. the
Hermitian scalar product of the initial state \eqref{initial} with the eigenstate $\Psi_\mu$.\\

We should remember now that $H_n$ is defined on the half-line $x \geqslant 0$.
The boundary condition
at the wall with parameter $A$ translates into the
same boundary condition for the wavefunctions
(in each of their coordinate). This half-line quantum mechanical problem can be solved by the Bethe ansatz
for $A=+\infty$, i.e. for Dirichlet boundary condition\cite{GaudinHardWall,TWhalf,BAhardwall} 
(see also section 5.1 of \cite{gaudin2014bethe}) and this fact was used in \cite{gueudre2012directed}.
It can also be solved for arbitrary $A$,\cite{gaudin2014bethe,Castillo,borodin2016directed,VanDiejen,VanDiejen22,EmsizComplete,GutkinSutherland,HeckmanOpdam}
which led to the moment formula in \cite{PLD1}, \cite{AlexLD} and \cite{deNardisPLDTT}. \\

From the Bethe ansatz the eigenstates $\Psi_\mu$ are thus Bethe states, i.e. 
superpositions of plane waves over all permutations $P$ of the $n$ rapidities $\lambda_j$ for $j\in[1,n]$ with an additional summation over opposite pairs $\pm \lambda_j$ due to the infinite hard wall. 
The bosonic (fully symmetric) eigenstates can be obtained everywhere from their expression in the sector $0\leqslant x_1 \leqslant \dots \leqslant x_n$, which reads
\be \label{wave}
\begin{split}
& \Psi_\mu(x_1,\dots,x_n) = \frac{1}{(2 \I )^{n}} \sum_{P \in S_n} \prod_{p=1}^n \left( \sum_{\varepsilon_p=\pm 1} \varepsilon_p  e^{\I\varepsilon_p x_p \lambda_{P(p)}} A[\varepsilon_1 \lambda_{P(1)},\varepsilon_2 \lambda_{P(2)}, \dots, \varepsilon_n \lambda_{P(n)}]  \right) \\
& A[\lambda_1,\dots,\lambda_n]= \prod_{n \geqslant \ell > k \geqslant 1} \left(1+ \frac{\I \bar c}{\lambda_\ell - \lambda_k}\right)\left(1+ \frac{\I \bar c}{\lambda_\ell + \lambda_k}\right) \prod_{\ell=1}^n \left(1+\I \frac{\lambda_\ell}{A}\right)
\end{split}
\ee
This wavefunction automatically satisfies both
\begin{enumerate}
\item The matching condition arising from the $\delta(x_i-x_j)$ interaction
\begin{equation}
\left(\partial_{x_{i+1}}-\partial_{x_i} +\bar{c}\right) \Psi_\mu(x_1,\dots,x_n)\mid_{x_{i+1}=x_i}=0
\end{equation}
\item The boundary condition $\partial_{x_i}\Psi_\mu(x_1,\dots,x_n)\big\vert_{x_i=0}=A \Psi_\mu(x_1,\dots,x_n)\big\vert_{x_i=0}$ for all $i \in [0,n]$.
\end{enumerate}
The allowed values for the rapidities $\lambda_i$, which parametrize the true physical eigenstates are determined by the Bethe equations arising from the boundary conditions at $x=L$ as discussed below. One will find that the normalized eigenstates $\psi_\mu=\Psi_\mu/||\mu||$ vanish as $(\lambda_i-\lambda_j)$ or $(\lambda_i+\lambda_j)$ when two rapidities become equal or opposite: hence the rapidities obey an exclusion principle.\\

The detailed Bethe equations, which determine the allowed values for the set of rapidities $\lbrace \lambda_j \rbrace $, depend on the choice of boundary condition at $x=L$. However, in the $L\to +\infty$ limit, these details do not matter. For simplicity we choose a hardwall at $x=L$. The Bethe equations then read
\begin{equation}
e^{2\I \lambda_j L}=\frac{A-\I \lambda_j}{A+\I \lambda_j}  \prod_{\ell \neq j}\frac{\lambda_j-\lambda_\ell-\I \bar{c}}{\lambda_j-\lambda_\ell+\I \bar{c}}\frac{\lambda_j+\lambda_\ell-\I \bar{c}}{\lambda_j+\lambda_\ell+\I \bar{c}}
\end{equation}

In the case of the infinite hardwall, these equations are also given in  Ref.~\cite{BAhardwall} and their solutions in the large $L$ limit were studied in  Ref.~\cite{ChineseBA}. The structure of the states for infinite $L$ is found similar to the standard case, i.e. the general eigenstates are built by partitioning the $n$ particles into a set of $n_s$  bound-states formed by $m_j \geqslant 1$ particles with $n=\sum_{j=1}^{n_s} m_j$.
Each bound state $\mu$ is indexed by a set of $\{ k_j, m_j\}_{j=1\dots n_s}$ where the $k_j$'s are real numbers. These states are {\it perfect strings} \cite{m-65} , i.e. a set of
rapidities 
\begin{equation}
\lambda^{j, a}=k_j +\frac{\I\bar c}2(m_j+1-2a)
\end{equation}
 where $a = 1,\dots ,m_j$ labels the rapidities within the string.  Such eigenstates have momentum and energy 
\begin{equation}
K_\mu=\sum_{j=1}^{n_s} m_j k_j, \qquad E_\mu=\sum_{j=1}^{n_s} m_j k_j^2-\frac{\bar c^2}{12} m_j(m_j^2-1).
 \end{equation}
 The ground-state corresponds to a single $n$-string with $k_1=0$. The difference with the standard case is that the states are now invariant by a sign change of any of the momenta $\lambda_j \to -\lambda_j$, i.e. $k_j \to -k_j$. From now on, we will denote the wavefunctions of the string states as $\Psi_{\{k_\ell,m_\ell\}}$.\\
 
It is important to note that although for $A=+\infty$ the strings are the only solutions of the Bethe equations at large $L$, for finite $A$ there are other solutions which correspond to so-called boundary bound states.
These solutions have been obtained and studied in details in \cite{deNardisPLDTT}. As we will see below we will not need them in this work.
 
\subsection{Moment formula} 

To calculate the $n$-th moments of $Z(x,t)$ from formula \eqref{replicaevolution}, we need to perform a summation over the eigenstates. For $A<+\infty$ these eigenstates contain both the string states and the boundary bound states mentioned above. Our strategy here will be similar to the one in 
\cite{AlexLD}, i.e. we will calculate the moments for $n < 2 A + 1$ 
which turn out to be sufficient to perform the analytic continuation in $n$ and obtain the
generating function for any $A> -1/2$, using a method
similar to the one in \cite{SasamotoStationary}. The nice feature is that when $n < 2 A+1$
there are no boundary bound states. To see that, consider the Table 1 in \cite{deNardisPLDTT} which contains the classification of the boundary bound states for this problem.
For $A > -1/2$, a $m$ particle boundary bound state must obey $m \geq \floor{2 A}+2 \geq 2A +1$.
On the other hand from $n < 2 A+1$, one must have $m \leq n < 2 A+1$, which excludes the bound state.
Hence we need to consider only the string states.

A formula for the inverse of the squared norm of an arbitrary string state was obtained
for $A< +\infty$ in \cite{PLD1} and \cite{deNardisPLDTT}, consistent with the results of
\cite{AlexLD}, as
\be
\begin{split}
& \|\mu\|^2  := \int_0^L \rmd x_1\dots \int_0^L \rmd x_n |\Psi_{\{k_\ell,m_\ell\}}(x_1,\dots, x_n)|^2 \\
\label{normformula}
& \frac{1}{||\mu||^2} = \frac{1}{n!} \bar c^{n-n_s} 2^{n_s}
\prod_{i=1}^{n_s} S_{k_i,m_i} H_{k_i,m_i}\prod_{1 \leqslant i<j \leqslant n_s} D_{k_i,m_i,k_j,m_j}  L^{-n_s} \\
& D_{k_1,m_1,k_2,m_2} =\left(
\frac{4 (k_1-k_2)^2 + (m_1-m_2)^2 c^2}{4 (k_1-k_2)^2 + (m_1+m_2)^2 c^2}\right) \times \left(\frac{4 (k_1+k_2)^2 + (m_1-m_2)^2 c^2}{4 (k_1+k_2)^2 + (m_1+m_2)^2 c^2}\right)\\
& S_{k,m} =  \frac{2^{2m-2}}{m^2}  \prod_{p=1}^{[m/2]} \frac{4 k^2 + c^2 (m-2 p)^2}{4 k^2+c^2 (m+1-2 p)^2}  \\
& H_{k,m} =  \prod_{a=1}^m \frac{A^2}{A^2+(k+\frac{\I \bar{c}}{2}(m+1-2a))^2}
\end{split}
\ee
with $S_{k,1}=1$. Note that we have only kept the leading term in $L$ as $L\to +\infty$. Inserting the norm formula \eqref{normformula} into \eqref{replicaevolution}, we obtain the starting formula for the integer moments of the partition sum with Brownian weight on the endpoint in the limit $L \to +\infty$
\be \label{start0}
\begin{split} 
\mathbb{E}\left[Z (x,t)^n\right]  & =  \sum_{n_s=1}^n \frac{ 2^{n_s} \bar c^n }{n_s! \bar c^{n_s} n! } 
\prod_{p=1}^{n_s} \sum_{m_p \geqslant 1}  \int_\mathbb{R} \frac{\rmd k_p}{2 \pi}  m_p S_{k_p,m_p}H_{k_p,m_p} e^{ (m_p^3-m_p) \frac{\bar c^2 t}{12} - m_p k_p^2 t }   \\
& \times \delta_{n,\sum_{j=1}^{n_s} m_j}
 \prod_{i<j}^{n_s} D_{k_i,m_i,k_j,m_j} \Psi_{\{k_\ell,m_\ell\}}(x,\dots, x) \, \langle  \Psi_{\{k_\ell,m_\ell\}} |
 \Phi_0 \rangle
 \end{split}
 \ee
Here the Kronecker delta enforces the constraint $\sum_{j=1}^{n_s} m_j=n$ with $m_j \geqslant 1$ and in the summation over states we used $\sum_{k_j} \to m_j L \int_\mathbb{R} \frac{\rmd k}{2\pi}$ which holds also here in the large $L$ limit: the momenta sums become continuous and one can use that the string momenta $m_j k_j$ correspond to free particles as in Refs.~\cite{CLR10,CLDflat2,CLDflat,gueudre2012directed,deNardisPLDTT}.\\

 We can simplify the factor $\Psi_{\{k_\ell,m_\ell\}}(x,\dots,x)$ in \eqref{start0}. For the
general Bethe state \eqref{wave} (before insertion of the string solution), the $x=0$ limit then reads
reads
\bea
&& \Psi_{\mu}(0,\dots,0) =  \frac{n!}{A^n} \prod_{j=1}^n \lambda_j 
\eea
Inserting the string solution we see that we can replace in \eqref{start0} at the wall $x=0$
\bea  \label{numerator} 
&&  \Psi_{\{k_\ell,m_\ell\}}(0,\dots,0)  =   \frac{n!}{A^n} \prod_{j=1}^{n_s} A_{k_j,m_j}   \\
&& A_{k,m} = \prod_{a=1}^m \left(k+ \I \frac{\bar c}{2} (m+1-2 a)\right) =
(-\I \bar c)^m \frac{\Gamma(\frac{1+m}{2} + \frac{\I k}{\bar c} )}{\Gamma(\frac{1-m}{2} + \frac{\I k}{\bar c} )}  
\eea

To obtain the $n$-th moment in \eqref{start0} we
 still need to calculate the overlap $\langle  \Psi_{\{k_\ell,m_\ell\}} |
 \Phi_0 \rangle$ where $\Phi_0$ is given in \eqref{initial}. In general it involves sums over permutations and leads to complicated expressions but in our case, a simple structure emerges akin to the one known in full-space for a few initial conditions (droplet, half-flat, Brownian). Here, as we find in the Appendix~\ref{app:overlap}, the  result in the half-space for Brownian initial conditions is quite simple
\begin{equation} \label{over0}
\langle  \Psi_\mu | \Phi_0 \rangle =\frac{n!}{A^n}  \frac{\Gamma(A+B+1)}{\Gamma(A+B-n+1)} \prod_{j=1}^n \frac{\lambda_j}{B^2+\lambda_j^2}
\end{equation}
This holds under the condition that the integral converge, that is $\frac{n}{2} < B+ \frac{1}{2}$, which
we will also assume from now on. 
Inserting the rapidities $\lambda_j$ of the string state one see that the denominator in the product in the overlap \eqref{over0} read
\begin{equation}
\begin{split}
E_{k,j}&=\prod_{a=1}^{m} \frac{1}{B^2+(k+\I \frac{\bar c}{2}(m+1-2a))^2}
\\
&=\frac{1}{\bar{c}^{2m}}\frac{\Gamma(\frac{1-m}{2}+\frac{B+\I k}{\bar{c}})}{\Gamma(\frac{1+m}{2}+\frac{B+\I k}{\bar{c}})}\frac{\Gamma(\frac{1-m}{2}+\frac{B-\I k}{\bar{c}})}{\Gamma(\frac{1+m}{2}+\frac{B-\I k}{\bar{c}})}
\end{split}
\end{equation}
while the numerator was already calculated in \eqref{numerator}. We can thus define 
$C_{k,j}=A^2_{k,j}E_{k,j}$ and putting all together we obtain the 
starting expression for the integer moments, denoting here and below $Z(0,t)=Z(t)$
\begin{equation}
\begin{split}
 \mathbb{E}&\left[Z (t)^n\right] =  \frac{\Gamma(A+B+1)}{\Gamma(A+B-n+1)} \sum_{n_s=1}^n \frac{2^{n_s} \bar c^n n! }{n_s! \bar c^{n_s} } 
\\
&\times \prod_{p=1}^{n_s}  \sum_{m_p \geqslant 1}\int_\mathbb{R} \frac{\rmd k_p}{2 \pi}m_p  C_{k_p,m_p}  S_{k_p,m_p}H_{k_p,m_p} e^{ (m_p^3-m_p) \frac{\bar c^2 t}{12} - m_p k_p^2 t }   \delta_{n,\sum_{j=1}^{n_s} m_j} \,  \prod_{i<j}^{n_s} D_{k_i,m_i,k_j,m_j} 
\end{split}
\end{equation}
where we recall the constraint $\sum_{j=1}^{n_s} m_j=n$. Let us
use $\bar c=1$ from now on. Denoting 
\begin{equation} \label{B} 
\begin{split}
B_{k,m} &= 4 m^2 C_{k,m} S_{k,m} H_{k,m}\\
&=\frac{2 k}{\pi} \sinh(2 \pi k) \Gamma(2 \I k +m) \Gamma(-2 \I k + m)\\
& \times  \frac{\Gamma(\frac{1-m}{2}+A+\I k)}{\Gamma(\frac{1+m}{2}+A+\I k)}\frac{\Gamma(\frac{1-m}{2}+A-\I k)}{\Gamma(\frac{1+m}{2}+A-\I k)} \frac{\Gamma(\frac{1-m}{2}+B+\I k)}{\Gamma(\frac{1+m}{2}+B+\I k)}\frac{\Gamma(\frac{1-m}{2}+B-\I k)}{\Gamma(\frac{1+m}{2}+B-\I k)}
\end{split}
\end{equation}
The starting formula for the moments is then 

 \be \label{znn}
 \begin{split}
& \mathbb{E}\left[Z (t)^n\right]=\frac{\Gamma(A+B+1)}{\Gamma(A+B-n+1)} \\
&   \sum_{n_s=1}^n \frac{   n! 2^{n_s} }{n_s! }  
\prod_{p=1}^{n_s} \sum_{m_p \geqslant 1} \int_\mathbb{R} \frac{\rmd k_p}{2 \pi} \frac{B_{k_p,m_p}}{4 m_p}  e^{ (m_p^3-m_p) \frac{ t}{12} - m_p k_p^2 t } \delta_{n,\sum_{j=1}^{n_s} m_j}
 \prod_{i<j}^{n_s} D_{k_i,m_i,k_j,m_j} 
 \end{split}
 \ee
where $B_{k,m}$ is given in \eqref{B} and $D_{k_i,m_i,k_j,m_j}$ is given in \eqref{normformula} and where we recall the constraint $\sum_{j=1}^{n_s} m_j=n$.

\subsection{Decorated moments}

As in Refs.~\cite{SasamotoStationary,SasamotoStationary2}, it is useful to 
eliminate the Gamma factor $\frac{\Gamma(A+B+1)}{\Gamma(A+B-n+1)}$ in 
\eqref{znn}. To this aim we introduce a random variable $\invgamma \sim \mathrm{Gamma}^{-1}(A+B+1)$, 
independent of the KPZ height, which is inverse gamma distributed with parameter $A+B+1$, in this case 
 \begin{equation} \label{pxi} 
p_\invgamma(x)=\frac{1}{\Gamma(A+B+1)}x^{-A-B-2}e^{-\frac{1}{x}}\Theta(x)
 \end{equation}
 The $n$-th moment of $\invgamma$ is given by
 \begin{equation}
\mathbb{E}[\invgamma^n]=\frac{\Gamma(A+B-n+1)}{\Gamma(A+B+1)}
 \end{equation}
 As we will see $ \mathbb{E}\left[\invgamma^n Z (t)^n\right]$ will serve as the basis to form a Fredhom Pfaffian.
 \subsection{Moments in terms of a Pfaffian}

An important identity, which makes the problem solvable in the end, is that the inverse norms of the states
can be expressed as a Schur Pfaffian. Introducing the reduced variables $X_{2p-1} = m_p + 2 \I k_p $ and $X_{2p} = m_p - 2 \I k_p$ for $p\in [1,n_s]$, the norm reads 
\be \label{pfid}
 \prod_{1 \leqslant i<j \leqslant n_s} D_{k_i,m_i,k_j,m_j}  = \prod_{j=1}^{n_s} \frac{m_j}{2 \I k_j} \underset{2 n_s \times 2 n_s}{\rm Pf} \left[ \frac{X_i-X_j}{X_i+X_j} \right]
\ee
where we recall that the Pfaffian of an anti-symmetric matrix $A$ of size $N \times N$ is defined by 
\begin{equation}
{\rm Pf}(A)=\sqrt{{\rm Det}(A)}=\sum_{\substack{\sigma\in S_N,\\ \sigma(2p-1)<\sigma(2p)}}{\rm sign}(\sigma)\prod_{p=1}^{N/2}A_{\sigma(2p-1),\sigma(2p)}
\end{equation}
and that the Schur Pfaffian is given by (see Ref.~\cite{Knuth_1995})
\begin{equation}
{\rm Pf} \left[ \frac{X_i-X_j}{X_i+X_j} \right]=\prod_{i<j}\frac{X_i-X_j}{X_i+X_j}\, .
\end{equation}
Hence the starting formula for the moments now becomes:
\be \label{znn2}
\begin{split}
&\mathbb{E} \left[\invgamma^n Z (t)^n\right]  =\\
&= \sum_{n_s=1}^n \frac{  n! }{n_s! }  
\prod_{p=1}^{n_s} \sum_{m_p \geqslant 1} \int_\mathbb{R}\frac{\rmd  k_p}{2 \pi} \frac{B_{k_p,m_p}}{4 \I k_p}   e^{ (m_p^3-m_p) \frac{ t}{12} - m_p k_p^2 t } \delta_{n,\sum_{j=1}^{n_s} m_j}
~ \underset{2 n_s \times 2 n_s}{\rm Pf}\left[\frac{X_i-X_j}{X_i+X_j} \right]
\end{split}
\ee

\section{Moments from the log-gamma polymer}
\label{sec:loggamma}
In this section we compute again the moments of the solution of the SHE, $Z(x,t)$, taking
a limit of a known formula for the moments of the partition function of the log-gamma polymer
on the half-quadrant square lattice.  This method  uses the convergence of the log-gamma polymer to the KPZ equation at high temperature and  a combinatorial conjecture of Borodin-Bufetov-Corwin \cite[Conjecture 5.2]{borodin2016directed}. We also discuss in Section~\ref{sec:symmetry} and Section~\ref{sec:conjecturalidentity} useful identities in distribution coming from symmetries in so-called half-space Macdonald processes \cite{barraquand2018half}. 

\subsection{Moment formula for the log-gamma polymer}

\begin{definition}[Half-space log-gamma polymer]
\label{def:loggammahalfspace}
Let  $\diag, \alpha_1, \alpha_2, \dots $ be real parameters  such that $\alpha_i+\diag>0$ for all $i \geqslant 1$ and $\alpha_i+\alpha_j>0$ for all $i\neq j\geqslant 1$. The half-space log-gamma polymer is a probability measure on up-right paths confined in the half-quadrant $ \lbrace (i,j)\in \Z_{>0}^2 : i\geqslant j \rbrace $ (see Figure~\ref{halfspaceloggamma}), where the probability of an admissible path $\pi$ between $(1,1)$ and $(n,m)$ is given by
	$$ \frac{1}{\mathcal Z(n,m)} \ \ \prod_{(i,j)\in \pi} w_{i,j},$$
	and where $\big(w_{i,j}\big)_{i\geqslant j}$ is a family of independent random variables such that for $i>j, w_{i,j}\sim \mathrm{Gamma}^{-1}(\alpha_i + \alpha_j)$ and $w_{i,i}\sim \mathrm{Gamma}^{-1}(\diag + \alpha_i)$. The notation  $\mathrm{Gamma}^{-1}(\theta)$ denotes the inverse of a Gamma distributed random variable with shape parameter $\theta$.
	The partition function $\mathcal Z(n,m)$ is given by
	$$\mathcal Z(n,m) = \sum_{\pi: (1,1) \to (n,m)} \prod_{(i,j)\in \pi} w_{i,j}. $$
\end{definition}
\begin{figure}
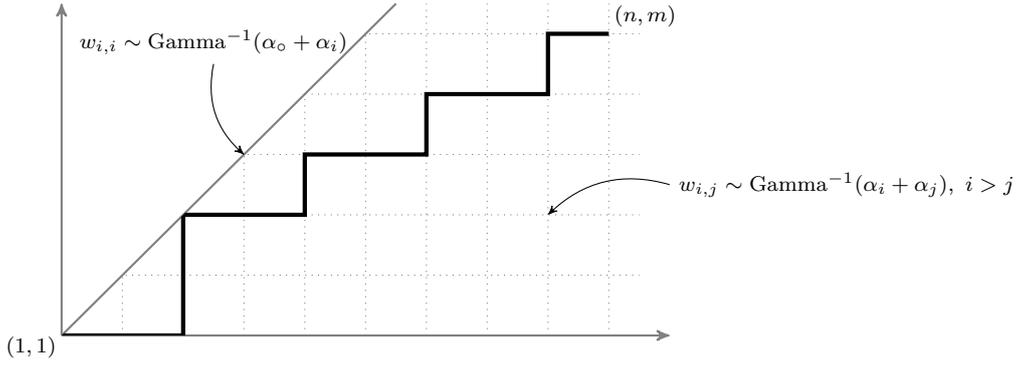

\begin{center}
\LogGammaPoly
\end{center}
	\caption{An admissible path in the half space log-gamma polymer model, that is a path proceeding by unit steps rightward and upward in the half quadrant $ \lbrace (i,j)\in \Z_{>0}^2 : i\geqslant j \rbrace $.}
	\label{halfspaceloggamma}
\end{figure}

The moments of the partition function were computed using half-space Macdonald processes in \cite{barraquand2018half}. 
\begin{proposition}[{\cite[Corollary 6.36]{barraquand2018half}}]
	\label{prop:momentsloggamma}
For $n\geqslant m$ and  $k\in\Z_{>0}$ such that  $k<  \min\lbrace \alpha_j+\frac 1 2 , \alpha_i+\diag \rbrace $,
\begin{multline}
\EE[\mathcal Z(n,m)^k] =\oint\frac{\mathrm{d}z_1}{2\I\pi}\cdots \oint\frac{\mathrm{d}z_k}{2\I\pi} \prod_{1\leqslant a<b\leqslant k} \frac{z_a-z_b}{z_a-z_b-1}\, \frac{z_a+z_b}{1+ z_a+z_b}\\ \times
\prod_{i=1}^{k}  \frac{2z_i}{z_i-\diag+1/2}\prod_{j=1}^n \left( \frac{1}{\alpha_j-z_i-1/2} \right) \prod_{j=1}^{m} \left(\frac{1}{z_i + \alpha_j-1/2}\right),
\label{eq:momentsZ}
\end{multline}
where the contours are such that for all $1\leqslant c\leqslant k$, the contour for $z_c$ encloses $\lbrace - \alpha_j+1/2\rbrace_{1\leqslant j\leqslant m}$  and $\lbrace z_{c+1}+1, \dots, z_k+1\rbrace$, and excludes the poles at $ \diag- 1/2$,  $ \alpha_j-1/2$ (for $1\leqslant j\leqslant n$) and $-1-z_{j}$ (for $j\neq c$).  Because the integrand decays at least quadratically at infinity, one may chose the contours to be all vertical lines such that the contour for the variable $z_i$ is $r_i+\I\R$ where 
$$\max_j\lbrace k - \alpha_j-1/2 \rbrace  < r_k+k-1 < \dots <r_2+2 < r_1 < \min_j \lbrace \diag-1/2, \alpha_j-1/2, 0 \rbrace.$$
\end{proposition}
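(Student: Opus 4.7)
The plan is to prove this via the framework of half-space Macdonald processes, of which the half-space log-gamma polymer is a Whittaker degeneration. The goal is to recover the contour integral \eqref{eq:momentsZ} as the limit of a Macdonald-level observable formula.

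First, I would identify $\mathcal Z(n,m)$ with an observable of a half-space Whittaker measure. The partition function of the half-space log-gamma polymer admits a representation in terms of integrals of $\mathfrak{gl}_n$-Whittaker functions, where the bulk weights $w_{i,j}\sim \mathrm{Gamma}^{-1}(\alpha_i+\alpha_j)$ correspond to a Cauchy-type kernel and the diagonal weights $w_{i,i}\sim \mathrm{Gamma}^{-1}(\diag+\alpha_i)$ correspond to a Littlewood-type kernel that encodes the boundary. This Whittaker measure is itself the $q\to 1$ limit of a half-space Macdonald measure on pairs of interlacing partitions, constructed via Pieri operators and a Littlewood identity.

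Second, at the Macdonald level I would compute $\EE[\mathcal Z(n,m)^k]$ by applying the rank-$k$ Macdonald--Koornwinder difference operator $D_k$, which acts diagonally on Macdonald polynomials with eigenvalues $e_k(q^{\lambda_i} t^{-i})$. Its action on a product of Cauchy kernels admits a Noumi--Sano style nested contour integral representation. In the half-space setting, commuting $D_k$ past the Littlewood kernel produces, in addition to the usual repulsion factor $\prod_{a<b}\frac{z_a-z_b}{z_a-z_b-1}$, the boundary-reflection factor $\prod_{a<b}\frac{z_a+z_b}{1+z_a+z_b}$, along with the one-body factor $\frac{2z_i}{z_i-\diag+1/2}$ coming from the diagonal Littlewood piece. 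Taking the log-gamma scaling $q = e^{-\epsilon}$, $\alpha_i\to q^{\alpha_i}$, $t\to q^{\theta}$ and sending $\epsilon\to 0$ turns $q$-Pochhammer factors into ratios of Gamma functions and produces the integrand in \eqref{eq:momentsZ}.

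Third, I would justify the contour prescription. The nested Macdonald contours are small circles around $1$; their log-gamma limits become small contours around the images of the spectral points $\{-\alpha_j+1/2\}$ (the ``Brownian direction'' poles generated by the Littlewood kernel), together with the shifted partner contours forced by the $z_a-z_b-1$ denominators. They must avoid the bulk poles at $\alpha_j-1/2$, the diagonal pole at $\diag-1/2$, and the partner poles $-1-z_j$. Because the integrand decays quadratically at infinity in each variable (this is where the assumption $k<\min\{\alpha_j+1/2,\,\alpha_i+\diag\}$ is used), I can then deform each circle to a vertical line $r_i+\I\R$ without picking up residues, provided the real parts satisfy the strict ordering
$$\max_j\{k-\alpha_j-1/2\} < r_k+k-1 < \dots < r_1 < \min_j\{\diag-1/2,\ \alpha_j-1/2,\ 0\}.$$

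The main obstacle is the Macdonald-level bookkeeping in step two: deriving the boundary-reflection factor $\frac{z_a+z_b}{1+z_a+z_b}$ from the interplay between $D_k$ and the Littlewood kernel, and showing that spurious residues from crossings of nested contours cancel in the half-space setting exactly as they do in the full-space Macdonald calculus. A secondary technical point is controlling the Whittaker limit uniformly on the contours so that the exchange of limit and integration is justified; the bound on $k$ in the hypothesis is precisely what makes the dominant integrand $\mathbb{L}^1$ and the Macdonald-to-Whittaker convergence uniform on compact subsets of the vertical-line contours.
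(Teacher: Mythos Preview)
The paper does not prove this proposition at all: it is quoted verbatim from \cite[Corollary 6.36]{barraquand2018half} and used as a black-box input. So there is no ``paper's own proof'' to compare against beyond that citation.

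Your outline is a reasonable high-level summary of the machinery developed in \cite{barraquand2018half}: half-space Macdonald processes, action of Macdonald difference operators on the Littlewood/Cauchy kernels, and the Whittaker ($q\to 1$) degeneration to the log-gamma polymer. In that sense you have correctly identified the route taken in the cited reference. A couple of points of caution if you intend this as more than a pointer to the literature: (i) the operators used in \cite{barraquand2018half} are the ordinary Macdonald (or Noumi) difference operators applied to the half-space partition function, not Koornwinder operators per se---the $BC$-type structure and the factor $\frac{z_a+z_b}{1+z_a+z_b}$ emerge from how these operators interact with the Littlewood identity rather than from a $BC$ root-system operator acting directly; (ii) the actual derivation in \cite{barraquand2018half} proceeds by first establishing the contour-integral formula at the Macdonald level and then taking the Whittaker limit via a careful Laplace-method argument, and the justification that nested-contour residues behave correctly is nontrivial and occupies several pages there. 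As written, your step two is a plausible narrative but not a proof; the honest statement is simply to cite the result, which is exactly what the paper does.
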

Note that if $k>\alpha_i+\alpha_j$ or $k>\alpha_i+\diag$ for some $i<j$, the $k$-th moment of $Z(n,m)$ fails to exist.
\begin{remark}
One may also compute mixed moments of the partition functions at several points along a down-right path. 
\end{remark}

\subsection{Stationary structure for the log-gamma polymer} 
\label{sec:stationarystructure}

In this paragraph, we will need to assume $\diag +\alpha_1=0$. In order to do so, we need to consider a modified partition function where we have removed the weight $w_{1,1}$, i.e. we define 
\begin{equation}
\mathcal Z^{stat}(n,m) = \frac{\mathcal Z(n,m)}{w_{1,1}}. 
\end{equation}

Following \cite{seppalainen2012scaling}, we define horizontal and vertical increments of the partition function as 
\begin{equation}
U_{n,m} = \frac{\mathcal Z^{stat}(n,m)}{\mathcal Z^{stat}(n-1,m)}, \;\;V_{n,m} = \frac{\mathcal Z^{stat}(n,m)}{\mathcal Z^{stat}(n,m-1)}.
\end{equation}
The partition function satisfies the recurrence
\begin{equation}
\mathcal Z^{stat}(n,m) = w_{n,m}(\mathcal Z^{stat}(n-1,m)+\mathcal Z^{stat}(n,m-1)),
\end{equation}
where by convention we have assumed that $\mathcal Z^{stat}(n,m)=0$ if $(n,m)$ does not belong to the half quadrant $ \lbrace (i,j)\in \Z_{>0}^2 : i\geqslant j \rbrace $. From there, one may deduce a recurrence for the increments 
\begin{equation}
U_{n,m} = w_{n,m}\left( 1 +\frac{U_{n,m-1}}{V_{n-1,m}} \right),\;\;\;V_{n,m} =  w_{n,m}\left( 1 +\frac{V_{n-1,m}}{U_{n,m-1}} \right).
\label{eq:recurrenceforincrements}
\end{equation}
We will need the following lemma from \cite{seppalainen2012scaling} where the stationary structure for the full-space log-gamma polymer was introduced. 
\begin{lemma}[{\cite[Lemma 3.2]{seppalainen2012scaling}}]
Let $U,V,w$ be independent random variables. Let 
\begin{equation}
U'=w\left(1+\frac{U}{V} \right), \;\; V'=w\left(1+\frac{V}{U} \right), \;\; w'=\left( \frac{1}{U} +\frac{1}{V}\right)^{-1}. 
\end{equation}
If for some  $\alpha>0$ and $\theta\in (-\alpha, \alpha)$,  $U\sim \mathrm{Gamma^{-1}}(\alpha+\theta)$, $V\sim \mathrm{Gamma^{-1}}(\alpha-\theta)$, $w\sim \mathrm{Gamma^{-1}}(2\alpha)$, then the triples $(U,V,w)$ and $(U', V', w')$ have the same distribution. 
\label{lem:seppalainenstationary}
\end{lemma}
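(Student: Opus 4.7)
The plan is to reduce the identity to classical beta--gamma algebra by passing to inverse variables. Set $X = 1/U$, $Y = 1/V$, $Z = 1/w$, so that $X \sim \mathrm{Gamma}(\alpha+\theta)$, $Y \sim \mathrm{Gamma}(\alpha-\theta)$, $Z \sim \mathrm{Gamma}(2\alpha)$ are independent. A direct computation gives
\[
\frac{1}{w'} = X+Y, \qquad \frac{1}{U'} = \frac{XZ}{X+Y}, \qquad \frac{1}{V'} = \frac{YZ}{X+Y},
\]
so after inverting, the claim $(U,V,w)\stackrel{d}{=}(U',V',w')$ is equivalent to
\[
(X,Y,Z) \;\stackrel{d}{=}\; \Bigl(\, \tfrac{XZ}{X+Y},\; \tfrac{YZ}{X+Y},\; X+Y \,\Bigr).
\]

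Next I would introduce $S = X+Y$ and $B = X/(X+Y)$. The classical beta--gamma splitting identity (sum of independent Gammas and independence of sum and ratio) yields $S \sim \mathrm{Gamma}(2\alpha)$, $B \sim \mathrm{Beta}(\alpha+\theta,\alpha-\theta)$, and $S$ independent of $B$. With this notation the right-hand triple becomes $(BZ,\ (1-B)Z,\ S)$.

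Now $B$ is a function of $(X,Y)$ alone, hence independent of $Z$. The converse beta--gamma identity then says that $BZ \sim \mathrm{Gamma}(\alpha+\theta)$, $(1-B)Z \sim \mathrm{Gamma}(\alpha-\theta)$, and these two variables are independent (this is the standard fact that a $\mathrm{Gamma}(c)$ variable split by an independent $\mathrm{Beta}(a,b)$ with $a+b=c$ produces two independent Gamma marginals of shapes $a$ and $b$). Finally $S$ is independent of $(B,Z)$, hence of $(BZ,(1-B)Z)$. Assembling these pieces shows that the triple on the right-hand side above has the same joint law as three independent Gammas with the required parameters, which is exactly the claim.

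I do not expect a serious obstacle: the statement is an elementary consequence of beta--gamma algebra once one identifies the right change of variables. The only point requiring care is tracking the independence structure through the splitting/merging steps, which is handled by the two classical identities invoked above.
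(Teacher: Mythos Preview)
Your argument is correct. The paper does not actually give its own proof of this lemma: it is simply quoted from \cite{seppalainen2012scaling} and then used as a black box in the proof of Proposition~\ref{prop:stationarity}. So there is no ``paper's proof'' to compare against.

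That said, your reduction to beta--gamma algebra is exactly the standard way to verify this identity, and the steps are all sound. The only place worth a second look is the independence of $S=X+Y$ from the pair $(B,Z)$: you need that $S$, $B$, and $Z$ are \emph{jointly} independent, not just pairwise. This follows because $(S,B)$ is a bijective function of $(X,Y)$, so $(S,B)$ is independent of $Z$; combined with the beta--gamma fact that $S$ and $B$ are independent, the three variables $S,B,Z$ are mutually independent, and hence $S$ is independent of any function of $(B,Z)$. You gestured at this, and it is fine, but since you flagged independence tracking as the delicate point it is worth stating explicitly.
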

Coming back to our model $\mathcal Z^{stat}(n,m)$, when  $\diag+\alpha_1=0$ the model is stationary in the following sense. 
\begin{proposition}
Let $k\in \Z_{\geqslant 1}$. Assume that $\alpha_2=\alpha_3=\dots = \alpha>0$ and $\diag+\alpha_1=0$. Consider a down-right path in the lattice going through the points $\lbrace (n_i, m_i)\rbrace_{1\leqslant i\leqslant k}$, such that $(n_{i+1},m_{i+1})-(n_i,m_i)$  equals either $(0,-1)$ or $(1,0)$ (see Fig. \ref{fig:transfopaths}). We associate to this down-right path increments $\left\lbrace I_j \right\rbrace_{1\leqslant j\leqslant k-1}$ where 
$$I_j= \begin{cases}U_{n_{j+1}, m_{j+1}} \text{ when }n_{j+1}>n_{j},\\ 
V_{n_j, m_j} \text{  when } m_j>m_{j+1}.\end{cases}$$ 

Then the increments $\left\lbrace I_j \right\rbrace_{1\leqslant j\leqslant k-1}$ are all independent and distributed as $I_j\sim  \break \mathrm{Gamma^{-1}}(\alpha_1+\alpha)$ when $I_j$ is a horizontal $U$ increment, and  $I_j\sim  \mathrm{Gamma^{-1}}(\diag+\alpha)$ when $I_j$ is a vertical $V$ increment.  In particular, for any $m$, the increments $\left\lbrace U_{n,m} \right\rbrace_{n\geqslant m+1}$ are independent and distributed as $U_{n,m}\sim \mathrm{Gamma^{-1}}(\alpha_1+\alpha)$.
\label{prop:stationarity}
\end{proposition}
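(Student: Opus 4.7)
The plan is to adapt Seppäläinen's local swap argument \cite[Theorem 3.3]{seppalainen2012scaling} for the full-space log-gamma polymer to the half-quadrant setting, with Lemma \ref{lem:seppalainenstationary} as the basic ingredient. The hypothesis $\diag+\alpha_1=0$ is what aligns the parameters: setting $\alpha_{\mathrm{lem}}=\alpha$ and $\theta_{\mathrm{lem}}=\alpha_1=-\diag$ in Lemma \ref{lem:seppalainenstationary} gives $\alpha_{\mathrm{lem}}+\theta_{\mathrm{lem}}=\alpha+\alpha_1$ (matching the claimed horizontal marginal), $\alpha_{\mathrm{lem}}-\theta_{\mathrm{lem}}=\alpha+\diag$ (matching the claimed vertical marginal), and $2\alpha_{\mathrm{lem}}=2\alpha$ (matching the bulk weights $w_{n,m}\sim \mathrm{Gamma}^{-1}(2\alpha)$ for $n>m\geqslant 2$).

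First I would verify the base case on the two boundaries of the half-quadrant. Along the bottom row, $\mathcal Z^{stat}(n,1)=\prod_{i=2}^n w_{i,1}$, so the horizontal increments are $U_{n,1}=w_{n,1}\sim \mathrm{Gamma}^{-1}(\alpha_1+\alpha)$, jointly independent by Definition \ref{def:loggammahalfspace}. Along the diagonal, the half-quadrant constraint (the site $(n-1,n)$ lies outside the domain) forces $\mathcal Z^{stat}(n,n)=w_{n,n}\,\mathcal Z^{stat}(n,n-1)$, hence $V_{n,n}=w_{n,n}\sim \mathrm{Gamma}^{-1}(\diag+\alpha)$. I would then run a strong induction on $n+m$, at step $k+1$ applying Lemma \ref{lem:seppalainenstationary} to flip the corner of the current down-right path across a bulk unit square $[n-1,n]\times[m-1,m]$ with $n>m\geqslant 2$ and $n+m=k+1$. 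The recurrences \eqref{eq:recurrenceforincrements} show that the input triple $(U_{n,m-1},V_{n-1,m},w_{n,m})$ is mapped exactly as in Lemma \ref{lem:seppalainenstationary} onto $(U_{n,m},V_{n,m},w'_{n-1,m-1})$ with $w'_{n-1,m-1}=(1/U_{n,m-1}+1/V_{n-1,m})^{-1}$. The induction hypothesis provides the correct joint distribution for the input, so the output has the required joint distribution as well, yielding the claimed marginals of $(U_{n,m},V_{n,m})$ and simultaneously preserving the independence invariant needed for the next corner flip.

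The main technical obstacle is the bookkeeping of independence under successive corner flips: one must propagate the stronger invariant that the increments along the current down-right path are jointly independent of the family of weights at all lattice sites strictly north-east of the path. This is precisely the role played by the virtual weight $w'_{n-1,m-1}\sim \mathrm{Gamma}^{-1}(2\alpha)$ produced by Lemma \ref{lem:seppalainenstationary}: it substitutes for $w_{n,m}$ in the collection of north-east weights and allows the induction to continue with an input of the correct form at the next step. A minor subtlety is that corner flips must be confined to bulk squares ($n-1\geqslant m$), since the diagonal square is not contained in the half-quadrant; but this causes no difficulty because the boundary increments on the diagonal were already treated in the base case. Finally, the last claim of the proposition about the independence of $(U_{n,m})_{n\geqslant m+1}$ follows by specializing the general statement to the horizontal down-right path at height $m$.
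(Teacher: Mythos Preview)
Your proposal is correct and follows essentially the same strategy as the paper: establish the base case on the first row, handle diagonal increments via $V_{n,n}=w_{n,n}$, and propagate the claimed joint law of increments by corner flips using Lemma~\ref{lem:seppalainenstationary}. One minor point: the virtual weight $w'$ plays no role here and your geometric description of it is off --- after the flip the site $(n,m)$ lies on the new path, not north-east of it, and subsequent flips use the original weights $w_{n',m'}$, never $w'$; the paper's independence argument is simply that $(U_{n,m},V_{n,m})$ are measurable functions of $(U_{n,m-1},V_{n-1,m},w_{n,m})$, which by the inductive invariant were jointly independent of the remaining increments and of the still-unused weights.
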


\begin{proof} 
The distribution of increments along the first row is completely constrained by the definition of the model. 
Indeed, we have that $\mathcal Z^{\rm stat}(n,1) = \prod_{i=2}^n w_{i,1}$, so that the increments along the first row are given by $U_{n,1}=w_{n,1}$ and the definition of the model implies that weights $w_{n,1}\sim \mathrm{Gamma^{-1}}(\alpha_1+\alpha)$ are independent.  Hence, for $m=1$, the increments $\left\lbrace U_{n,m}\right\rbrace_{n\geqslant 2}$ are independent and distributed as $\mathrm{Gamma^{-1}}(\alpha_1+\alpha)$ as claimed.  

\medskip 

In other terms, we have seen that the statement of the Proposition is true for the infinite path going through the points $(n,1)$ for all $n\geqslant 1$. We will show that the property is preserved under two types of local transformation of paths, depicted in Fig. \ref{fig:transfopaths}, that consist in 
\begin{enumerate}
	\item (boundary update) Lifting one unit upwards the starting point of the path along the boundary;
	\item (bulk update) Transforming a down-right step into a right-down step.  
\end{enumerate} 
\begin{figure}
\begin{center}
	      \begin{tikzpicture}[scale=0.8]
	\draw[->, thick, >=stealth', gray] (0,0) -- (10, 0);
	\draw[->, thick, >=stealth', gray] (0,0) -- (0, 5.5);
	\draw[thick, gray]  (0,0) -- (5.5,5.5);
	\clip (0,0) -- (9.5,0) -- (9.5,5.5) -- (5.5, 5.5) -- (0,0);
	\draw[dotted, gray] (0,0) grid (10,6);
	\draw[ultra thick] (4,4) --(5,4) -- (6,4) -- (6,3) -- (7,3) -- (7,2) -- (7,1) -- (8,1) -- (9,1) -- (9,0) -- (10,0);
	\draw[ultra thick, red] (5,5) -- (5,4);
	\draw[ultra thick, red] (7,2) -- (8,2) -- (8,1);
	\draw[->, red, thick] (7.2,1.2)  to[bend left] (7.8,1.8) ;
	\draw[->, red, thick] (4.3,4.1)  to[bend right] (4.9,4.7) ;
	\end{tikzpicture}
\end{center}
\caption{The two types of elementary local transformations of down-right paths considered in the proof of Proposition \ref{prop:stationarity}. The thick black path represents an arbitrary down-right path. The portions in red represent the local modifications of the path that we consider.}
\label{fig:transfopaths}
\end{figure}
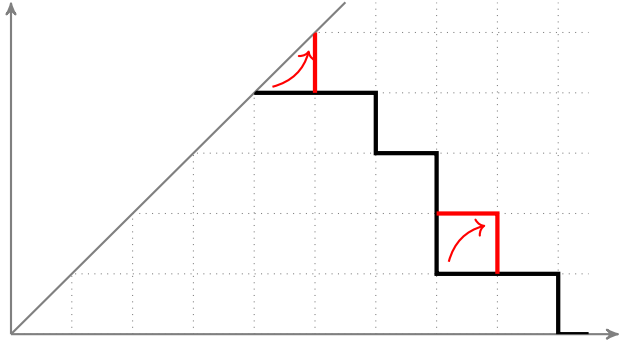
It is clear that any infinite down-right path can  be obtained by iteration of these local transformations, starting from the path going through the points $(n,1)$ for all $n\geqslant 1$. Furthermore, if the statement of the Proposition is true for any infinite down-right path, it is true as well for any subpath of the form $\lbrace (n_i, m_i)\rbrace_{1\leqslant i\leqslant k}$ as in the statement of the Proposition. Hence we only need to show that the distribution of increments is preserved under the two local moves. 

\bigskip 
The distribution of increments on the boundary is constrained by the definition of the model. We have that $\mathcal Z^{\rm stat}(n,n) = w_{n,n}\mathcal Z^{\rm stat}(n,n-1)$, so that $V_{n,n}= w_{n,n}$ and we recall that $w_{n,n}\sim  \mathrm{Gamma^{-1}}(\diag+\alpha_n)$ is independent from all other weights. Hence, for any  $n=m$, $V_{n,m}$ is distributed as $\mathrm{Gamma^{-1}}(\diag+\alpha_n)$ and is independent from the increments $\lbrace U_{n',m'}, V_{n',m'}\rbrace $ for $m'<m$ (since those increments are independent from $w_{n,n}$). Hence, after a boundary update, the distribution of increments is preserved. 

\bigskip 
In order to show that the property is preserved under bulk update, we use Lemma~\ref{lem:seppalainenstationary}. After a bulk update, the increments are updated according to \eqref{eq:recurrenceforincrements}, where $w_{n,m}$ is independent from the increments on the earlier path and distributed as $\mathrm{Gamma^{-1}}(2\alpha)$ (recall that we have assumed that $\alpha_2=\alpha_3=\dots = \alpha$). If $\diag+\alpha_1=0$, we may set $\theta= \alpha_1=-\diag$ and Lemma~\ref{lem:seppalainenstationary} implies that increments along the new path will be distributed as $U_{n,m}\sim \mathrm{Gamma^{-1}}(\alpha+\alpha_1)$, $V_{n,m}\sim \mathrm{Gamma^{-1}}(\alpha+\diag)$. This shows that the distribution of increments is preserved under bulk update. Because before the bulk update, the variables $(U, V, w)$ are independent from
the rest of the increments $I_j$ by induction, and the new random variables
$(U' , V' )$ are just measurable functions of $(U, V, w)$, the new variables are
also independent of the other increments $I_j$. This concludes the proof. 
\end{proof}

One consequence of the stationary structure is that we may compute the expectation of $\log  \mathcal Z^{\rm stat}(n,m)$. We assume that parameters $\alpha_i$ are chosen as in Proposition \ref{prop:stationarity}. Observe that $\log  \mathcal Z^{\rm stat}(n,m)$ is equal to the sum of the logarithms of increments of the partition function along any path from $(1,1)$ to $(n,m)$. These increments are not independent, so that their sum is a highly non trivial random variable, but we know the expectation of each increment. Since the vertical increments are distributed as  $\mathrm{Gamma^{-1}}(\diag+\alpha)$ and the horizontal increments are distributed as $\mathrm{Gamma^{-1}}(\alpha_1+\alpha)$, we have that (for $\diag+\alpha_1=0$)
\begin{equation}
\mathbb E\left[ \log \mathcal \mathcal Z^{\rm stat}(n,m) \right] = - (n-1) \psi(\alpha_1+\alpha) -(m-1) \psi(\diag+\alpha),
\label{eq:expectationlogZ}
\end{equation}
where we have used that $\mathbb E[\log (\mathrm{Gamma^{-1}}(\theta))] =-\psi(\theta) $ and $\psi$ is the digamma function.

\subsection{Convergence to the half-space KPZ equation}
\label{sec:convergenceloggammaSHE}
At high temperature (when the parameters of inverse gamma random variables go to infinity and space-time coordinates are rescaled appropriately), the partition function $\mathcal Z(n,m)$ converges to the multiplicative noise stochastic heat  equation on $\R_{\geqslant 0}$ with Robin type boundary condition \cite{wu2018intermediate}. \\

Although the convergence of discrete directed polymers to half-space KPZ equation was proved rigorously in \cite{wu2018intermediate} (based on the full-space analogous result in \cite{alberts2014intermediate}), we will rederive (heuristically) this convergence in order to adapt it to our units and initial condition (which is not covered in \cite{wu2018intermediate}). 
 Let us change coordinates and use more natural time and space coordinates $\uptau = n+m-2$ and $\varkappa = n-m$. The partition function $Z_d(\varkappa, \uptau) := \mathcal Z(n,m)$ satisfies the 
discrete version of the stochastic heat equation
\begin{equation}
Z_d(\varkappa,\uptau) = w_{\varkappa,\uptau} (Z_d(\varkappa-1,\uptau-1)+Z_d(\varkappa+1,\uptau-1)),\hspace{1cm} \varkappa >0
\label{eq:discreterecurrence}
\end{equation}
where $w_{\varkappa,\uptau}\sim\mathrm{Gamma}^{-1}(\gamma_{\varkappa, \uptau})$ with parameter $\gamma_{\varkappa,\uptau}=\alpha_n+\beta_m$ (independent for each $\varkappa,\uptau$). The boundary condition at $\varkappa=0$ is given by
\begin{equation}
Z_d(0,\uptau) = w_{0,\uptau} Z_d(1, \uptau-1).
\label{eq:boundarycondition}
\end{equation}

Let us renormalize $Z_d$ and define 
$Z_r(\varkappa, \uptau) = C^{-\uptau} Z_d(\varkappa, \uptau)$. The correct factor to use is such that $C^{\uptau}$ behaves asymptotically as the point to line partition function where weights would be replaced by their average. Hence we set $C=2\mathbb E [w_{\varkappa, \uptau}]$. Note that in the following, we will choose parameters so that $\mathbb E [w_{\varkappa, \uptau}]$ does not depend on $\uptau, \varkappa$ (except along the lines  $\uptau=\varkappa$ or $\varkappa=0$). We may rewrite \eqref{eq:discreterecurrence} as 
\begin{equation} 
\nabla_{\uptau} Z_r(\varkappa, \uptau) = \tfrac{1+\Brescaled_{\varkappa, \uptau}}{2} \Delta_\varkappa Z_r(\varkappa, \uptau-1) +\Brescaled_{\varkappa, \uptau} Z_r(\varkappa, \uptau-1),
\label{eq:discreteSHE}
\end{equation}
where $\Brescaled_{\varkappa, \uptau}= \frac{2 w_{\varkappa,\uptau}}{C}-1 $, $\nabla_{\uptau}$ is
the discrete time derivative and $ \Delta_\varkappa$ is the discrete Laplacian. 

Let us fix $\diag\in \R$, $\alpha_1\in \R$ and set $\alpha_i = 1/2 + \sqrt{n}/2$ for all $i\geqslant 2$ and use the scalings 
\begin{equation}
\uptau=n t/2, \quad \varkappa=\sqrt{n}x/2.
\label{eq:scalingsloggamma}
\end{equation}
 In this case, we may choose $C = 2/\sqrt{n}$ and the family of random variables $w_{\varkappa,\uptau}$ rescales to a white noise in the sense that $n\, \Brescaled_{\varkappa, \uptau} \Rightarrow \sqrt{2} \xi(x,t)$. 

\bigskip 

 At $\uptau = \varkappa$, we have that for large $\varkappa$,  
 \begin{equation}
 Z_r(\varkappa, \varkappa) = \mathrm{Gamma}^{-1}(\diag+\alpha_1) \times   e^{\mathcal B(x)  -\alpha_1 x } + o\left(\frac{1}{n}\right), 
 \end{equation}
where the inverse Gamma random variable (coming from $w_{1,1}$) and the Brownian motion $\mathcal B(x)$ are independent.  

It is then natural to  define the continuous limit 
 \begin{equation}
 Z_\infty(x,t) = \lim_{n\to\infty}  Z_r(\varkappa, \uptau), 
 \end{equation}
 so that $Z_{\infty}$ has the initial data 
$
Z_\infty(x,0) =  \mathrm{Gamma}^{-1}(\diag+\alpha_1) \times   e^{\mathcal B(x)  -\alpha_1 x }.
$
Under the scalings that we consider, the boundary condition \eqref{eq:boundarycondition}  becomes
\begin{equation}
Z_{\infty}(0,t) \approx \frac{w}{C}\, Z_{\infty}\left(\frac{2}{\sqrt{n}},t\right). 
\label{eq:limitboundarycondition}
\end{equation} 
Let us  take the average on both sides of \eqref{eq:limitboundarycondition}.  We use that $\mathbb E[\frac{w}{C}]= \frac{1}{1+\frac{2\diag -1}{\sqrt{n}}},$  and consider that the weight $w$ is independent from $Z_{\infty}(\frac{2}{\sqrt{n}},t)$, as this is true in \eqref{eq:boundarycondition}.  We obtain  
 \begin{equation}
\mathbb E\left[  Z_{\infty}(0,t)  \right] = \left( 1-\frac{2}{\sqrt{n} }  (\diag-1/2) \right)  \mathbb E \left[ Z_{\infty}\left(\frac{2}{\sqrt{n}},t\right) \right]+ o(1/\sqrt{n}), 
 \end{equation}
 which, by Taylor approximation, leads to 
\begin{equation}
\partial_x \mathbb E\left[ Z_\infty(x,t)\right]\Big\vert_{x=0} = (\diag -1/2)\mathbb E\left[Z_{\infty}(0,t)\right].
\label{eq:boundaryconditionexpected} 
\end{equation}
Note that one may also obtain the more general boundary condition \eqref{eq:boundaryconditionformoments} for mixed moments by multiplying both sides of \eqref{eq:limitboundarycondition} by $Z_{\infty}(x_2,t)\dots Z_{\infty}(x_n,t)$ before taking the average.
Finally, 
multiplying  \eqref{eq:discreteSHE} by $n$ we obtain, when taking formally the $n\to\infty$ limit, that $Z_{\infty}(x,t)$ should satisfy the SHE \eqref{SHE}.  Thus, we have arrived at the following. 
\begin{claim}[{Combining \cite{wu2018intermediate} and  \cite{parekh2019}.}]
Let $\mathcal Z(n,m)$ be the partition function of the log-gamma polymer (see Definition~\ref{def:loggammahalfspace}) where $\alpha_2=\alpha_3= \dots = \frac{\sqrt{n}}{2}+\frac{1}{2}$, $\diag = A+\frac{1}{2}$ and $\alpha_1 = B+\frac{1}{2}$.  Let $Z(x,t)$ be the solution of the multiplicative noise stochastic heat equation  from Definition~\ref{def:SHEboundarybrownian} with boundary parameter $A$ and initial drift $-1/2-B$. 
Fix $t>0,x\geqslant 0$. Then the family of random variables 
\begin{equation}
\left\lbrace \frac{ \mathcal Z \left(\frac{nt +\sqrt{n}x}{4} , \frac{nt -\sqrt{n}x}{4} \right) }{\left( \frac{2}{\sqrt{n}}\right)^{nt/2-2}} \right\rbrace_{t>0,x \geqslant 0}
\end{equation}
converges in distribution to $\mathrm{Gamma}^{-1}(A+B+1)\times Z(x,t)$ (in the space of continuous space-time trajectories), where the inverse Gamma random variable is independent from the process $Z(x,t)$. {Moreover, the partition function $\mathcal Z^{\rm stat}(n,m)$ from Section \ref{sec:stationarystructure} converges, under the exact same scalings,  to $Z(x,t)$ (not multiplied by $\mathrm{Gamma}^{-1}(A+B+1)$).}
\label{prop:convergencediscretecontinuous}
\end{claim}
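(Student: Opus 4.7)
The plan is to promote the heuristic derivation preceding the claim into a proof by combining two existing rigorous convergence results: the intermediate-disorder convergence of half-space discrete polymers to the half-space SHE with Robin boundary condition established in \cite{wu2018intermediate}, and the handling of Brownian initial data developed in \cite{parekh2019} (which builds on the full-space argument of \cite{alberts2014intermediate}). The scaling $\alpha_i = 1/2 + \sqrt n/2$ for $i\geqslant 2$, together with $(\uptau,\varkappa) = (nt/2,\sqrt{n}x/2)$ and $C = 2/\sqrt n$, places the model in the intermediate disorder regime in which $n\,\Brescaled_{\varkappa,\uptau}\Rightarrow \sqrt{2}\,\xi(x,t)$ and the discrete equation \eqref{eq:discreteSHE} becomes the SHE \eqref{SHE}; the remaining task is to verify tightness and to identify both the limiting boundary condition and the limiting initial data.

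First I would carry out the bulk convergence by writing the discrete mild-form solution as a Picard/chaos expansion in the discrete heat kernel on the half-line, whose Robin parameter is determined by $\diag = A+1/2$ via the Taylor expansion $\mathbb{E}[w/C] = 1 - \tfrac{2}{\sqrt n}(\diag-1/2) + o(1/\sqrt n)$. Applying the $L^p$ estimates of \cite{wu2018intermediate} term by term gives tightness in $C([0,T]\times\R_{\geqslant 0})$, and passing to the limit identifies each chaos term as its continuum counterpart, so that the limit solves \eqref{eq:mildform} with boundary parameter $A$. To pin down the boundary behavior at the level of moments, one multiplies \eqref{eq:limitboundarycondition} by $Z_\infty(x_2,t)\cdots Z_\infty(x_n,t)$ before taking expectations, recovering \eqref{eq:boundaryconditionformoments}, which (combined with uniqueness of the mild-form solution \cite{corwin2016open}) pins down the limit.

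Second, I would identify the initial data. Along the diagonal $\uptau = \varkappa$, setting $\alpha_1 = B+1/2$, the recurrence $\mathcal Z(n,n) = w_{n,n}\,\mathcal Z(n,n-1)$ reduces $Z_r(\varkappa,\varkappa)$ to the product $w_{1,1}\cdot \prod_{i=2}^n \tfrac{w_{i,1}}{C/2}$ up to subleading corrections from paths that leave the diagonal. A Donsker invariance principle applied to the i.i.d.\ logarithms $\log w_{i,1}$ (centered by $-\psi(\alpha_1+\alpha_i)$, variance $\psi'(\alpha_1+\alpha_i)\sim 2/\sqrt n$) produces the limit $\exp(\mathcal B(x) - (B+1/2)x)$, which matches the required Brownian-with-drift initial condition. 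The factor $w_{1,1}\sim \mathrm{Gamma}^{-1}(\diag+\alpha_1) = \mathrm{Gamma}^{-1}(A+B+1)$ is measurable with respect to an independent $\sigma$-algebra and remains as the multiplicative prefactor; passing to $\mathcal Z^{\rm stat} = \mathcal Z/w_{1,1}$ removes it and yields the unprefactored convergence to $Z(x,t)$.

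The hard part will be uniform tightness of the rescaled process in $C([0,T]\times\R_{\geqslant 0})$ given the heavy left-tails of inverse-gamma weights, whose moments of order $\geqslant \alpha_i+\alpha_j$ diverge at finite $n$. The chaos expansion of \cite{alberts2014intermediate,wu2018intermediate} controls this by truncation together with Gaussian hypercontractivity, but adapting those estimates to Brownian-with-drift initial data (rather than bounded or Dirac data) requires verifying that $e^{\mathcal B(x)-(B+1/2)x}$ belongs to the appropriate weighted $L^2$ space uniformly in the discretization parameter, which is the analytic content of the Brownian-initial-data extension in \cite{parekh2019}. It is precisely because this step has not been carried out in the Robin boundary setting that the paper states the result as a claim and presents only a sketch.
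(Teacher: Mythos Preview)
Your proposal follows essentially the same approach as the paper: the paper itself does not give a full proof but presents the heuristic derivation preceding the claim and then explains that a rigorous proof would combine the chaos-series convergence of \cite{wu2018intermediate} (half-space, Robin boundary, delta initial data) with the Brownian-initial-data arguments of \cite{parekh2019} (Dirichlet boundary), exactly as you outline. Your final paragraph correctly identifies the gap---adapting the tightness/chaos estimates to the joint setting of Robin boundary and Brownian initial data---which is precisely why the paper leaves this as a Claim rather than a Theorem.
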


Note that the derivation that we presented above is only heuristic. We will not provide a complete proof of this result, though we indicate where the needed arguments can be found. The convergence of the polymer partition function for the half-space log-gamma polymer to the multiplicative noise stochastic heat equation is proved in \cite{wu2018intermediate} using a chaos series representation of the polymer partition function, see in particular Section 5 therein. However, the setting of  \cite{wu2018intermediate} restricts to delta initial data $ B=+\infty$ (and Robin type boundary with arbitrary parameter $A$). The convergence for Brownian initial data with arbitrary parameter $B$ was proven in \cite[Theorem 2.2]{parekh2019}, though \cite{parekh2019} works only in the case of Dirichlet boundary condition, that is in the case $A=+\infty$. Hence, one needs to combine the arguments from \cite{wu2018intermediate} and \cite{parekh2019} to deduce this result. 

\bigskip 
Using the stationary structure from Section~\ref{sec:stationarystructure} together with Claim~\ref{prop:convergencediscretecontinuous}, we obtain the following. Let  $Z(x,t)$ be as in Claim~\ref{prop:convergencediscretecontinuous} and assume that $A+B+1=0$. Then, for any time $t>0$, $Z(x,t)/Z(0,t)$ is the exponential of a Brownian motion with drift $-B-1/2$.

\bigskip 
We may also compute the expectation of $h(x,t) = \log Z(x,t)$ in the stationary case when $A+B+1=0$. Using \eqref{eq:expectationlogZ} and plugging there the scalings of Claim \ref{prop:convergencediscretecontinuous}, we obtain that 
\begin{equation}
\mathbb E\left[ h(x,t)\right]= -\frac{t}{12} + \left(B+\frac 1 2\right)^2t-\left(B+\frac 1 2\right)x, \quad \quad A+B+1=0.
\end{equation}
In particular, when $A=B=-1/2$, we have that $\mathbb E\left[ h(0,t)\right]=-t/12$.

\subsection{Moments of the half-space KPZ equation with Brownian initial data}

Using the moment formula from Proposition~\ref{prop:momentsloggamma} and the convergence result from Claim~\ref{prop:convergencediscretecontinuous}, we obtain the following moment formula for the half-space stochastic heat equation $Z(x,t)$. Note that the formula is valid for any $x\geqslant 0$. 
\begin{claim} Let $Z(x,t)$ be the solution to the half-space stochastic heat equation (Definition~\ref{def:SHEboundarybrownian}) with Brownian initial data with drift  $-1/2-B$ and boundary parameter $A$. Assume that  $B>k-1$, and $A+B>k-1$. Then, we have
	\begin{multline}
	\EE[Z(x,t)^k] =2^k \frac{\Gamma(A+B+1)}{\Gamma(A+B+1-k)} \int_{r_1+\I\R}\frac{\mathrm{d}z_1}{2\I\pi}\cdots \int_{r_k+\I\R}\frac{\mathrm{d}z_k}{2\I\pi} \prod_{1\leqslant a<b\leqslant k} \frac{z_a-z_b}{z_a-z_b-1}\, \frac{z_a+z_b}{z_a+z_b-1}\\ \times
	\prod_{i=1}^k \frac{z_i}{z_i+A}  \frac{1}{B^2-z_i^2}    e^{tz_i^2 - xz_i},
	\label{eq:momentsKPZhalfspaceBrownian}
	\end{multline}
	where the contours are chosen so that $B>r_1>r_2+1>\dots, >r_k+k-1>\max\lbrace k-1,k-1-A\rbrace$. 
	\label{prop:nestedcontours}
\end{claim}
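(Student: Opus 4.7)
The plan is to derive \eqref{eq:momentsKPZhalfspaceBrownian} as a scaling limit of Proposition~\ref{prop:momentsloggamma}, feeding in the convergence statement of Claim~\ref{prop:convergencediscretecontinuous}. First, I would specialize the log-gamma parameters to $\diag=A+\tfrac12$, $\alpha_1=B+\tfrac12$ and $\alpha_j=\tfrac{\sqrt{N}}{2}+\tfrac12$ for $j\geqslant 2$, and set the endpoint $(n_{\rm pol},m_{\rm pol})=\big((Nt+\sqrt{N}x)/4,(Nt-\sqrt{N}x)/4\big)$, so that $n_{\rm pol}+m_{\rm pol}-2=Nt/2-2$. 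Taking the $k$-th moment of the convergence in Claim~\ref{prop:convergencediscretecontinuous}, and using $\mathbb{E}[\mathrm{Gamma}^{-1}(A+B+1)^k]=\Gamma(A+B+1-k)/\Gamma(A+B+1)$ together with the independence of the inverse-Gamma prefactor from $Z(x,t)$, yields
\begin{equation*}
\mathbb{E}[Z(x,t)^k] = \frac{\Gamma(A+B+1)}{\Gamma(A+B+1-k)}\,\lim_{N\to\infty}\Big(\tfrac{2}{\sqrt{N}}\Big)^{-k(Nt/2-2)}\,\mathbb{E}\big[\mathcal Z(n_{\rm pol},m_{\rm pol})^k\big].
\end{equation*}
It then remains to insert the nested contour integral of Proposition~\ref{prop:momentsloggamma} and perform the large-$N$ asymptotic analysis.

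For the asymptotics, I would write each bulk Gamma factor as $(\tfrac{\sqrt{N}}{2}\pm z_i)^{-1}=\tfrac{2}{\sqrt{N}}(1\pm 2z_i/\sqrt{N})^{-1}$. The accumulated prefactor $(2/\sqrt{N})^{n_{\rm pol}+m_{\rm pol}-2}=(2/\sqrt{N})^{Nt/2-2}$ cancels exactly against the normalization above, while expanding the remaining logarithms gives, for each variable,
\begin{equation*}
(n_{\rm pol}-1)\left(\tfrac{2z_i}{\sqrt{N}}+\tfrac{2z_i^2}{N}+\cdots\right) -(m_{\rm pol}-1)\left(\tfrac{2z_i}{\sqrt{N}}-\tfrac{2z_i^2}{N}+\cdots\right)\longrightarrow t z_i^2+x z_i,
\end{equation*}
the $\sqrt{N}$-divergent linear contributions canceling thanks to $n_{\rm pol}-m_{\rm pol}=\sqrt{N}x/2$. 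The two boundary factors carrying $\alpha_1=B+\tfrac12$ combine to $(B-z_i)^{-1}(B+z_i)^{-1}=(B^2-z_i^2)^{-1}$, the prefactor $\tfrac{2z_i}{z_i-\diag+1/2}$ becomes $\tfrac{2z_i}{z_i-A}$, and the symmetric/antisymmetric pair product is left untouched. This produces an intermediate moment formula with exponential factor $e^{t z_i^2+x z_i}$ on vertical contours lying in the strip $(-B,\min\{A,B,0\})$ with unit spacing.

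To match \eqref{eq:momentsKPZhalfspaceBrownian}, I would finally perform the combined reflection and index-reversal $z_i=-w_{k+1-i}$. The contour prescription $-B<r_k$ with $r_i>r_{i+1}+1$ and $r_1<\min\{A,B,0\}$ transforms exactly into $B>r'_1>r'_2+1>\cdots>r'_k+k-1>\max\{k-1,k-1-A\}$. Per variable, the prefactor becomes $\tfrac{2w_i}{w_i+A}$, the factor $(B^2-w_i^2)^{-1}$ is invariant, and $e^{tz_i^2+xz_i}\mapsto e^{tw_i^2-xw_i}$. For the pair products, the reflection sends $z_a\pm z_b\mapsto -(w_{a'}\pm w_{b'})$ with $a'=k+1-a$, $b'=k+1-b$, while the range $\{a<b\}$ is mapped to $\{a'>b'\}$; a direct check shows that the combined effect of the two operations is to produce $\prod_{a<b}\tfrac{w_a-w_b}{w_a-w_b-1}\tfrac{w_a+w_b}{w_a+w_b-1}$. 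Pulling out the overall factor $2^k$ from the $k$ copies of $\tfrac{2w_i}{w_i+A}$ then yields exactly \eqref{eq:momentsKPZhalfspaceBrownian}.

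The principal obstacle is that this argument relies on moment convergence, whereas the distribution of $Z(x,t)$ is not determined by its moments (cf.~Section~\ref{sec:mathematical}); this is why the statement is phrased as a claim rather than a theorem. A secondary technical point is the exchange of the large-$N$ limit with the contour integrations: keeping the contours in a bounded vertical strip independent of $N$, the bulk factors $\big(1\pm 2z_i/\sqrt{N}\big)^{-(n_{\rm pol}-1)}$ are dominated uniformly by an exponential in $|z_i|^2$ for $|\mathrm{Im}(z_i)|\lesssim \sqrt{N}$, while the polynomial decay of the remaining rational factors controls the tails, so that dominated convergence applies on each contour.
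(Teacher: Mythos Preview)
Your proposal is correct and follows essentially the same route as the paper: specialize the log-gamma parameters as in Claim~\ref{prop:convergencediscretecontinuous}, pass to the limit in the nested-contour formula of Proposition~\ref{prop:momentsloggamma} by dominated convergence to obtain the intermediate formula with $e^{tz_i^2+xz_i}$, divide out the inverse-Gamma moment, and finish with the change of variables $\tilde z_i=-z_{k-i+1}$. The paper adds only one detail you flagged as an obstacle: the passage from convergence in distribution to convergence of moments is justified there via tail bounds (Markov inequality) derived from the explicit moment formulae, with a pointer to \cite{ghosal2018moments} for the analogous argument.
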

\begin{remark}
	One may also compute mixed moments of $ Z(x,t)$, that is $\mathbb E [Z(x_1,t)\dots Z(x_k,t)]$. 
\end{remark}

\begin{proof}
	We start from the moment formula given in Proposition~\ref{prop:momentsloggamma}. Under the scalings considered in Claim~\ref{prop:convergencediscretecontinuous}, the second line of \eqref{eq:momentsZ} becomes
	$$ \prod_{i=1}^k \frac{2z_i}{z_i-\diag +1/2}  \frac{1}{(\alpha_1-1/2)^2-z_i^2} \left( \frac{1}{\sqrt{n}/2 -z_i } \right)^{\frac{nt+\sqrt{n}x}{4}-1 }  \left( \frac{1}{\sqrt{n}/2  +z_i} \right)^{\frac{nt-\sqrt{n}x}{4}-1 }.$$
	Using dominated convergence, one readily obtains that  
	\begin{multline}
	\lim_{n\to\infty}\mathbb{E} \left( \frac{ \mathcal Z\left(\frac{nt +\sqrt{n}x}{4} , \frac{nt -\sqrt{n}x}{4} \right) }{\left( \frac{2}{\sqrt{n}}\right)^{nt/2-2}} \right)^k =
	 \int_{r_1+\I\R}\frac{\mathrm{d}z_1}{2\I\pi}\cdots \int_{r_k+\I\R}\frac{\mathrm{d}z_k}{2\I\pi} \\ \prod_{1\leqslant a<b\leqslant k} \frac{z_a-z_b}{z_a-z_b-1}\, \frac{z_a+z_b}{1+z_a+z_b}
	\prod_{i=1}^k \frac{2z_i}{z_i-A}  \frac{1}{B^2-z_i^2}    e^{tz_i^2 + xz_i}. 
	\label{eq:limitmoments}
	\end{multline}
Using the convergence in distribution from Claim~\ref{prop:convergencediscretecontinuous} the left hand side in \eqref{eq:limitmoments} converges to \break $m_k \mathbb E[Z(x,t)^k]$, where $m_k$ is the $k$-th moment of an inverse Gamma random variable with parameter $\diag+\alpha_1$  (the convergence in distribution implies the  convergence of moments modulo some tail bounds, which, using Markov inequality, can be proven from our explicit moment formulae. Details of this argument are provided in a similar case in \cite[Section 3]{ghosal2018moments}). It is well known that 
$m_k =\Gamma(\diag+\alpha_1-k)/ \Gamma(\diag+\alpha_1)$, so that 
\begin{equation}
\mathbb E[Z(x,t)^k] = \frac{\Gamma(A+B+1)}{\Gamma(A+B+1-k)} \times \text{R.H.S. of } \eqref{eq:limitmoments}
\end{equation}
	Finally, we have used the change of variables $\tilde z_i=-z_{k-i+1}$ to obtain the statement of the Proposition.
\end{proof}

\subsection{Symmetry between drift and boundary parameters} 
\label{sec:symmetry}
We now exploit one of the symmetries of the log-gamma polymer, arising from more general symmetries of so-called half-space Macdonald
processes \cite{barraquand2018half}, which, in the 
KPZ limit, extends a result of Parekh \cite{parekh2019}. 

\begin{claim}
Let us denote here by $Z_A^{B}(x,t)$ (with explicit dependence in the parameters $A,B$) the solution to the multiplicative noise SHE with boundary parameter $A$ and initial drift $-1/2-B$
(Definition~\ref{def:SHEboundarybrownian}).
 For any fixed $t>0$ and $A,B\in \R$, we have the equality in distribution
\begin{equation}
Z_A^{B}(0,t) = Z_{B}^{A}(0,t).
\end{equation}
\label{prop:symmetryAB}
\end{claim}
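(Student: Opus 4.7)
The plan is to deduce the $A \leftrightarrow B$ symmetry of the law of $Z_A^B(0,t)$ from the manifestly symmetric Fredholm Pfaffian representation \eqref{eq:fredholmKPZ} of the Laplace transform of the decorated variable $\invgamma\, Z_A^B(0,t)$, together with the observation that the inverse-Gamma weight $\invgamma \sim \mathrm{Gamma}^{-1}(A+B+1)$ has itself a law depending only on the symmetric combination $A+B+1$. The first step is to note that since $\invgamma$ is independent of $Z_A^B(0,t)$ and its distribution is itself symmetric in $A, B$, the law of the product $\invgamma\, Z_A^B(0,t)$ is symmetric in $A, B$ if and only if that of $Z_A^B(0,t)$ is. It thus suffices to prove the weaker distributional equality $\invgamma Z_A^B(0,t) \stackrel{d}{=} \invgamma Z_B^A(0,t)$.

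For $A,B > -1/2$, the Laplace transform $\mathbb{E}[\exp(-\varsigma \invgamma e^{H(t)})]$ equals, by \eqref{eq:fredholmKPZ}, a Fredholm Pfaffian whose $2 \times 2$ matrix kernel \eqref{eq:kernelAllA} depends on $(A,B)$ only through the scalar function $G(z)$ defined in \eqref{GG}. Since $G$ is manifestly invariant under $A \leftrightarrow B$, so is the Laplace transform; uniqueness of the Laplace transform of a positive random variable then gives $\invgamma Z_A^B(0,t) \stackrel{d}{=} \invgamma Z_B^A(0,t)$, which combined with the previous step proves the claim throughout the open quadrant $A,B > -1/2$.

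To reach arbitrary real $A, B$, two routes suggest themselves. One is an analyticity argument: the Laplace transform is an analytic function of $(A,B)$ in an open neighborhood of the quadrant $A, B > -1/2$, so the symmetry established there extends by analytic continuation to every $(A,B)$ for which the distribution is well defined. A more structural alternative is to prove the analogous symmetry already at the discrete log-gamma polymer level: for the model of Definition~\ref{def:loggammahalfspace} observed at the diagonal point $(n,n)$, one expects the distribution of the partition function to be symmetric under $\diag \leftrightarrow \alpha_1$, a statement that is natural in the framework of half-space Macdonald processes developed in \cite{barraquand2018half} and whose continuum analogue recovers exactly the claim. Claim~\ref{prop:convergencediscretecontinuous} then transfers the discrete symmetry to the continuum for all real $A, B$ at once.

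The main obstacle is that the symmetry is \emph{not} visible in the individual moment formula \eqref{eq:momentsKPZhalfspaceBrownian} at $x=0$, where the $A,B$-dependent factor $\prod_i z_i/[(z_i+A)(B^2-z_i^2)]$ is blatantly asymmetric. The symmetric function $G$ only emerges after the nontrivial Pfaffian resummation carried out in Section~\ref{sec:FP}, and it is this resummation (or, equivalently, the underlying Macdonald-process structure) that does the real work; once it is in place, the symmetry reduces to inspection of the final formula. An additional, comparatively minor, technical point is to justify commuting the $A \leftrightarrow B$ symmetry with the divergent-moment issues flagged in Section~\ref{sec:mathematical}, which the decorated-variable $\invgamma Z_A^B(0,t)$ is designed to circumvent.
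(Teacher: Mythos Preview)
Your approach is correct for $A,B>-1/2$ but differs from the paper's. The paper argues directly at the discrete level: by \cite[Proposition 8.1]{barraquand2018half}, the half-space log-gamma partition function $\mathcal Z(n,n)$ has the same law after swapping $\diag$ and $\alpha_1$, and Claim~\ref{prop:convergencediscretecontinuous} then carries this symmetry to the continuum. The constraint $\diag+\alpha_1>0$ (i.e.\ $A+B+1>0$) is then removed by passing to the normalised partition function $\mathcal Z^{\rm stat}=\mathcal Z/w_{1,1}$, for which the identity in law persists and can be analytically continued in the parameters, before taking the KPZ limit. Your route instead reads the symmetry off the final Fredholm Pfaffian formula~\eqref{eq:fredholmKPZ}, which does work since the kernel depends on $(A,B)$ only through the manifestly symmetric function $G(z)$ in~\eqref{GG}. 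The trade-off is that this relies on the (non-rigorously derived) Pfaffian formula, valid only for $A,B>-1/2$, whereas the paper's argument is structurally prior to that derivation and covers all real $A,B$ at once---indeed, your second suggested extension is precisely what the paper does as its main proof.

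One small gap in your argument: passing from $\invgamma\, Z_A^B(0,t)\stackrel{d}{=}\invgamma\, Z_B^A(0,t)$ back to $Z_A^B(0,t)\stackrel{d}{=}Z_B^A(0,t)$ is a deconvolution (of $\log\invgamma$), not an automatic consequence of independence. It is legitimate here because the characteristic function of $\log\invgamma$ equals $\Gamma(A+B+1-it)/\Gamma(A+B+1)$, which never vanishes since the Gamma function has no zeros, but this deserves a sentence.
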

\begin{proof}
This is a consequence of \cite[Proposition 8.1]{barraquand2018half} which states that the law of the partition function $\mathcal Z(n,n)$ of the half-space log-gamma polymer is invariant under exchanging parameters $\diag$ and $\alpha_1$ (recall Definition~\ref{def:loggammahalfspace}), along with the convergence result from Claim~\ref{prop:convergencediscretecontinuous}. \\

However, \cite[Proposition 8.1]{barraquand2018half} assumes that $\diag+\alpha_1>0$ as in Definition~\ref{def:loggammahalfspace}, which would require the condition $A+B+1>0$. Let us explain why we do not need to assume  this condition. Recall Definition~\ref{def:loggammahalfspace} and let us denote the partition function by $\mathcal Z_{\diag}^{\alpha_1}(n,n)$,  where we indicate explicitly the dependence on parameters $\diag, \alpha_1$. The result from \cite[Proposition 8.1]{barraquand2018half} implies that for $\tilde\diag,\tilde \alpha_1$ such that   $\tilde\diag+\tilde \alpha_1>0$ we have the equality in distribution 
$$ \mathcal Z_{\tilde \diag}^{\tilde \alpha_1}(n,n) = \mathcal Z_{\tilde \alpha_1}^{\tilde \diag}(n,n).$$ 
Notice that by Definition~\ref{def:loggammahalfspace}, we have 
$$ \mathcal Z_{\tilde \diag}^{\tilde \alpha_1}(n,n) = w_{1,1}  Z_{\tilde \diag}^{\tilde \alpha_1, \mathrm{stat}}, \;\;\;\mathcal Z_{\tilde \alpha_1}^{\tilde \diag}(n,n) = w_{1,1}  Z_{\tilde \alpha_1}^{\tilde \diag , \mathrm{stat}}, $$
where $\mathcal Z_{\diag}^{\alpha_1, \mathrm{stat}}$ is defined as in Section~\ref{sec:stationarystructure} and $w_{1,1}$ has the same distribution in both cases. This implies that we have also the equality in distribution 
\begin{equation}
\mathcal Z_{\tilde \diag}^{\tilde \alpha_1 , \mathrm{stat}}(n,n) = \mathcal Z_{\tilde \alpha_1}^{\tilde \diag , \mathrm{stat}}(n,n).
\label{eq:equalitystationarypartitionfunctions}
\end{equation} 
The distribution of the random variable in \eqref{eq:equalitystationarypartitionfunctions} depends on parameters $\tilde\diag, \tilde \alpha_1, \alpha_2, \dots, \alpha_n$. The equality in distribution can be analytically extended to all parameters such that  $\alpha_i+\alpha_j>0$ for any $2\leqslant i<j\leqslant n$, $\tilde \alpha_1+\alpha_i>0$ for any $2\leqslant i\leqslant n$, and  $\tilde \diag+\alpha_i>0$ for any $2\leqslant i\leqslant n$. In particular, we do not require anymore that $\tilde \diag+\tilde\alpha_1>0$. Passing to the limit in \eqref{eq:equalitystationarypartitionfunctions} using Claim~\ref{prop:convergencediscretecontinuous}, we obtain the desired result. 
\end{proof}

\subsection{Another conjectural identity in law} 
\label{sec:conjecturalidentity}

{The symmetry between parameters $A$ and $B$ stated in Claim~\ref{prop:symmetryAB} relies on a similar property for the log-gamma polymer (symmetry between $\diag$ and $\alpha_1$) based on the theory of half-space Macdonald processes ans stated as \cite[Proposition 8.1]{barraquand2018half}. This result was stated in \cite{barraquand2018half} for the log-gamma polymer model where $\alpha_2 = \dots = \alpha_n$. However, the same property actually holds for general  half-space Macdonald process and for any choice of parameters, see \cite[Proposition 2.6]{barraquand2018half}, as long as we restrict to the partition function on the boundary. Furthermore, the law of $\mathcal Z(n,n)$ is symmetric with respect to permutation of the parameters $\alpha_i$. Thus, we claim that one can also exchange the roles of the parameter $\diag$ and the parameter $\alpha_2$. }\\

{When scaling parameters to the (multiplicative noise) stochastic heat equation in Claim~\ref{prop:convergencediscretecontinuous}, we set $\diag=A-\frac 1 2$, $\alpha_1=B-\frac 1 2$ and $\alpha_i=\frac{\sqrt{n}}{2}$. If we exchange parameters $\diag$ and $\alpha_2$, we expect that we will obtain the stochastic heat equation with boundary parameter equal to $+\infty$ (i.e. with Dirichlet boundary condition $Z(0,t)=0$) and initial condition given by 
\begin{equation}
Z(x,0) = \int_{0}^x \exp\left( \mathcal B_1(y) + \mathcal B_2(x)-\mathcal B_2(y) \right)\mathrm d y , 
\label{eq:initialconditiontwobrownians}
\end{equation} 
where $\mathcal B_1$ and $\mathcal B_2$ are independent Brownian motions with respective drifts $-(B+1/2)$ and $-(A+1/2)$. Let us call $Z_{\rm Dir}^{A,B}$ the solution to the heat equation with Dirichlet boundary condition and the initial condition given above in \eqref{eq:initialconditiontwobrownians}. We refer to \cite{parekh2019kpz123} regarding the exact meaning of the Dirichlet boundary condition in this context. Following similar arguments as in the proof of \cite[Theorem 1.1]{parekh2019kpz123}, we conjecture that for any $t>0$, we have the identity in distribution
\begin{equation}
Z_A^B(0,t) 
= \lim_{x\to 0} \frac{Z_{\rm Dir}^{A,B}(x,t)}{x}.
\label{eq:conjecturalidentity}
\end{equation} 
}

{The identity in law \eqref{eq:conjecturalidentity} allows to predict the law of large numbers for $h(0,t)= \log Z_A^B(0,t)$ in the bound phase. Recall that we expect that  $h(0,t)$ follows the asymptotics 
\be
h(0,t) \simeq v^{A,B}_{\infty} t + t^{\beta} \chi 
\label{eq:genericlimittheorem}
\ee 
where $\chi$ is an $\mathcal{O}(1)$ random variable, and $\beta$ the growth fluctuation exponent. Using \eqref{eq:conjecturalidentity}, $\log \frac{Z_{\rm Dir}^{A,B}(x,t)}{x}$  should also follow the same asymptotics. When $A<-1/2$ or $B<-1/2$, we see that $Z_{\rm Dir}^{A,B}(x,0)$ grows as $e^{\max\lbrace |A+\frac 1 2| , |B+\frac 1 2 | \rbrace x}$. Hence, the polymer  partition function will be dominated by paths from $(x,0)$ to $(0,t)$ where $x$ is of order $t$. More precisely, since the point to point free energy from $(x,0)$ to $(0,t)$ behaves asymptotically as $-\frac{t}{12}- \frac{x^2}{4t}$, the partition function will be dominated by paths leaving from $x=x^*_t$, where $x^*_t = {\rm argmax}_{x>0}( - \frac{x^2}{4t} + \max\lbrace A+\frac 1 2 , B+\frac 1 2  \rbrace x)
=2 \max\lbrace \vert A+\frac 1 2 \vert , \vert B+\frac 1 2 \vert \rbrace t$. Thus, we have that  for general $A,B$, 
$$v^{A,B}_{\infty} = - \frac{1}{12} + \Big(\min\big\lbrace A+\frac{1}{2} , B+\frac{1}{2}, 0\big\rbrace\Big)^2.$$

We can even predict the fluctuation exponent $\beta$ and the nature of fluctuations. When $A<-1/2$ or $B<-1/2$ with $A\neq B$ (say $A<B$ for simplicity), the initial condition for $Z_{\rm Dir}^{A,B}(x,t)$ in \eqref{eq:initialconditiontwobrownians} will essentially be the Brownian motion, i.e. $\mathcal B_2(x)$ with drift $-(A+1/2)$. The fluctuations of the initial condition at the optimal point $x^*_t$ are thus
$\approx \mathcal B_2(x^*_t)$ i.e. Gaussian on the scale $t^{1/2}$, and they  will dominate the fluctuations in the partition function, hence we find that $\beta=1/2$ and $\chi$ is Gaussian. The situation is completely similar when $B<A$. \\

{When $A=B<-1/2$, the situation is a bit more delicate since we cannot approximate the initial condition  \eqref{eq:initialconditiontwobrownians} by a Brownian motion. Instead, we notice that \eqref{eq:initialconditiontwobrownians} can be interpreted as the partition function in the O'Connell-Yor directed polymer model \cite{oconnell2001brownian}. For large values of $x$, it behaves as 
\begin{align} \log Z_{\rm Dir}^{A,B}(x,0) &= \log \int_{0}^x  \exp\left(\mathcal B_1(y) + \mathcal B_2(x)-\mathcal B_2(y) \right) \mathrm d y \\ &\approx  \left\vert B+\frac 1 2\right\vert x 
+
\sqrt{x} \max_{0\leqslant y\leqslant 1} \left\lbrace \mathcal B_1(y) + \mathcal B_2(1)-\mathcal B_2(y) \right\rbrace.
 \end{align}
It was proved \cite{baryshnikov2001gues, gravner2001limit}  that the latter quantity  behaves asymptotically as the largest eigenvalue of a $2\times 2$ GUE matrix in the scale $x^{1/2}$. We must now evaluate this
quantity at the optimal point $x=x^*_t = \vert 2 B+1\vert t$, hence we find that $\beta=1/2$ and that $\chi$ has the same distribution as the top eigenvalue of a $2\times 2$ GUE matrix, discussed for instance in \cite[Section  6]{oconnell2001brownian}.}

\subsection{Residue expansion}

In this section, we restrict to $x=0$ and denote $Z(0,t)=Z(t)$ as in the previous sections. The moment formula \eqref{eq:momentsKPZhalfspaceBrownian} is not convenient for asymptotic analysis because the contours are different for each variable, so that the complexity of the formula significantly increases with $k$. To overcome this issue, one has to deform the contours  to all lie on  a fixed vertical line and take into account the residues encountered during this contour deformation. This procedure was implemented in \cite{borodin2016directed}, but the computation of residues is very involved and the result relies on a conjectural combinatorial simplification. Applying this result \cite[Conjecture 5.2]{borodin2016directed} to the moment formula  \eqref{eq:momentsKPZhalfspaceBrownian}, we conjecture that for  $A>k-1$ and  $B>k-1$,  we have 
\begin{multline}
\EE[Z(t)^k] = 2^k \frac{\Gamma(A+B+1)}{\Gamma(A+B+1-k)} \sum_{\underset{\lambda=1^{m_1}2^{m_2}\dots}{\lambda\vdash k}} \frac{(-1)^{\ell(\la)}}{m_1!m_2!\dots} \int_{\I\R} \frac{\rmd w_1}{2\I\pi} \dots \int_{\I\R} \frac{\rmd w_{\ell(\la)}}{2\I\pi}\\
\times \prod_{j=1}^{\ell(\la)} \frac{(w_j+1/2)_{\la_j-1}}{4(w_j)_{\la_j}} \Pf \left[ \frac{u_i-u_j}{u_i+u_j} \right]_{i,j=1}^{2\ell(\la)} \\
\times E(w_1, w_1+1, \dots, w_1+\lambda_1-1, w_2, \dots, w_2+\lambda_2-1, \dots, w_{\ell(\la)}, \dots,w_{\ell(\la)} + \lambda_{\ell(\la)} -1),
\end{multline}
where we use the Pochhammer notation for rising factorials $(w)_{\la} = w(w+1)\dots (w+\la-1)$, we define variables $u_i$ for $1\leqslant i\leqslant 2\ell(\la)$ as 
\begin{equation}
(u_1, \dots, u_{2\ell(\la)}) = (-w_1+ \tfrac{1}{2}, w_1 - \tfrac{1}{2} + \lambda_1,
-w_2+ \tfrac{1}{2}, w_2 - \tfrac{1}{2} + \lambda_2,\dots,
-w_{\ell(\la)}+ \tfrac{1}{2}, w_{\ell(\la)} - \tfrac{1}{2} + \lambda_{\ell(\la)}),
\end{equation}
and where 
\begin{equation}
E(z_1, \dots, z_k) = \prod_{i=1}^k \frac{e^{tz_i^2} }{B^2-z_i^2}    \sum_{\sigma\in BC_k} \sigma\left( \prod_{1\leqslant j<i\leqslant k}  \frac{z_i-z_j-1}{z_i-z_j} \frac{z_i+z_j-1}{z_i+z_j}  \prod_{i=1}^k \frac{z_i}{z_i+A} \right).
\end{equation}
It turns out that the symmetrization can be performed using \cite[Equation (54)]{borodin2016directed}, relying on the theory of BC-symmetric Hall-Littlewood polynomials \cite{venkateswaran2015symmetric},  and one finds 
\begin{equation}
E(z_1, \dots, z_k) = 2^k k! \prod_{i=1}^k \frac{e^{tz_i^2} }{B^2-z_i^2} \frac{z_i^2}{z_i^2-A^2}.
\end{equation}
\begin{remark}
	It is now apparent that the moment formulae are invariant with respect to the transformation $(A,B)\mapsto(B,A)$. 
\end{remark}

Performing explicitly the evaluation into strings, we obtain
\begin{multline}
\EE[Z(t)^k] =  4^k k! \frac{\Gamma(A+B+1)}{\Gamma(A+B+1-k)} \sum_{\underset{\lambda=1^{m_1}2^{m_2}\dots}{\lambda\vdash k}} \frac{(-1)^{\ell(\la)}}{m_1!m_2!\dots} \int_{\I\R} \frac{\rmd w_1}{2\I\pi} \dots \int_{\I\R} \frac{\rmd w_{\ell(\la)}}{2\I\pi}\\
\times \Pf \left[ \frac{u_i-u_j}{u_i+u_j} \right]_{i,j=1}^{2\ell(\la)}  \prod_{j=1}^{\ell(\la)}  \frac{e^{t \mathtt{G}(w_i+\la_i)}}{e^{t \mathtt{G}(w_i)}}   \frac{(w_j+1/2)_{\la_j-1}}{4(w_j)_{\la_j}}  \left( \frac{\Gamma(w_i+\lambda_i)}{\Gamma(w_i)} \right)^2\\ \times \frac{\Gamma(B-w_j-\la_j+1)\Gamma(B+w_j)}{\Gamma(B-w_j+1)\Gamma(B+w_j+\la_j)}\frac{\Gamma(w_j-A)\Gamma(w_j+A)}{\Gamma(w_j+\la_j-A)\Gamma(w_j+\la_j+A)},
\end{multline}
where 
\begin{equation*}
\mathtt{G}(w) = \frac{w^3}{3}-\frac{w^2}{2}+\frac{w}{6},
\end{equation*}
so that 
\begin{equation}
\mathtt{G}(w+\ell)-\mathtt{G}(w) = w^2+(w+1)^2 + \dots + (w+\ell-1)^2. 
\end{equation}

Note that $B^2-z^2 = (\pm B-z)(\pm B+z)$ and $z^2-A^2 = - (\pm A-z)(\pm A+z)$, so that many choices are possible for the evaluation of $E(z_1, \dots, z_k)$ into strings. The most convenient choice seems to be the following  formula.  

\begin{claim}[{based on \cite[Conjecture 5.2]{borodin2016directed}}]
For  $A>k-1$ and  $B>k-1$, we have 
\begin{multline}
\EE[Z(t)^k] = 4^k k! \frac{\Gamma(A+B+1)}{\Gamma(A+B+1-k)} \sum_{\underset{\lambda=1^{m_1}2^{m_2}\dots}{\lambda\vdash k}} \frac{(-1)^{\ell(\la)}}{m_1!m_2!\dots} \int_{\I\R} \frac{\rmd w_1}{2\I\pi} \dots \int_{\I\R} \frac{\rmd w_{\ell(\la)}}{2\I\pi}\\
\times \Pf \left[ \frac{u_i-u_j}{u_i+u_j} \right]_{i,j=1}^{2\ell(\la)}  \prod_{j=1}^{\ell(\la)}  \frac{e^{t \mathtt{G}(w_i+\la_i)}}{e^{t \mathtt{G}(w_i)}}   \frac{(w_j+1/2)_{\la_j-1}}{4(w_j)_{\la_j}} \frac{\Gamma(-w_j+1)\Gamma(w_j+\lambda_j)}{\Gamma(-w_j-\lambda_j+1)\Gamma(w_j)} \\ \times \frac{\Gamma(B-w_j-\la_j+1)\Gamma(B+w_j)}{\Gamma(B-w_j+1)\Gamma(B+w_j+\la_j)}\frac{\Gamma(A-w_j-\la_j+1)\Gamma(A+w_j)}{\Gamma(A-w_j+1)\Gamma(A+w_j+\la_j)}.
\label{eq:conjecture} 
\end{multline}
\label{conj:momentsPfaffian}
\end{claim}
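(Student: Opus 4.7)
\medskip

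\noindent\textbf{Proof plan.} My starting point is the nested contour formula of Claim~\ref{prop:nestedcontours}, where the contour for $z_c$ lies to the right of $z_{c+1}+1$. The natural strategy is to deform all contours to one common vertical line in $\I\R$, say $\Re(z_i)=\epsilon$ with $\epsilon$ small enough to avoid the poles at $\pm B$, $\pm A$ and $0$. During the deformation of the $z_c$ contour past that of $z_{c+1}$ (and the subsequent contours) one crosses the simple poles of the cross factors $\frac{z_a-z_b}{z_a-z_b-1}\,\frac{z_a+z_b}{z_a+z_b-1}$, namely along the hyperplanes $z_a-z_b=1$ and $z_a+z_b=1$. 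The poles at $z_a+z_b=1$ are harmless as they lie far to the right of the final contour, so the only relevant residues come from the collisions $z_a=z_b+1$.

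\medskip

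\noindent Iterating these string-collision residues recursively clusters the $k$ variables into groups (strings) of sizes $\lambda_1,\lambda_2,\dots,\lambda_{\ell(\lambda)}$, with $\lambda\vdash k$, each of the form $(w_j,w_j+1,\ldots,w_j+\lambda_j-1)$ for a free variable $w_j$ integrated over $\I\R$. This is the exact residue structure carried out in \cite[Section~5]{borodin2016directed} for the droplet initial data, and the initial data only enters multiplicatively through the extra factors $\prod_i\frac{1}{B^2-z_i^2}$. Collecting residues gives
\[
\prod_{j=1}^{\ell(\lambda)}\frac{(w_j+1/2)_{\lambda_j-1}}{(w_j)_{\lambda_j}},
\]
together with a prefactor $\frac{1}{m_1!m_2!\cdots}$ coming from symmetries of the partition, and a sign $(-1)^{\ell(\lambda)}$. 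The remaining factors inherited from the integrand are evaluated on the string $(w_j,\ldots,w_j+\lambda_j-1)$ and turn into the Gamma-ratio products appearing in \eqref{eq:conjecture}, using telescoping identities such as $\prod_{a=0}^{\lambda_j-1}(B+w_j+a)=\Gamma(B+w_j+\lambda_j)/\Gamma(B+w_j)$ and similarly for $A$; the exponential piece $\prod e^{t z_i^2}$ telescopes into $e^{t(\mathtt{G}(w_j+\lambda_j)-\mathtt{G}(w_j))}$ thanks to $\mathtt{G}(w+\ell)-\mathtt{G}(w)=\sum_{a=0}^{\ell-1}(w+a)^2$.

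\medskip

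\noindent The nontrivial step is the treatment of the remaining symmetric rational factor produced by the cross terms across strings. This is precisely what the Borodin--Bufetov--Corwin conjecture \cite[Conjecture~5.2]{borodin2016directed} addresses: it identifies the result of the BC-type symmetrization of $\prod_{j<i}\frac{(z_i-z_j-1)(z_i+z_j-1)}{(z_i-z_j)(z_i+z_j)}\prod_i\frac{z_i}{z_i+A}$ restricted to the strings as a Schur Pfaffian $\mathrm{Pf}\!\left[\frac{u_i-u_j}{u_i+u_j}\right]_{i,j=1}^{2\ell(\lambda)}$ in the endpoints $u_{2j-1}=-w_j+\tfrac12$ and $u_{2j}=w_j+\lambda_j-\tfrac12$ of the strings. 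Using this conjecture, together with the explicit BC-Hall--Littlewood evaluation \cite[Eq.~(54)]{borodin2016directed} and the refinements from \cite{venkateswaran2015symmetric} to absorb the factor $\frac{z_i^2}{z_i^2-A^2}$, one obtains the closed form $E(z_1,\ldots,z_k)=2^k k!\prod_i\frac{e^{tz_i^2}}{B^2-z_i^2}\frac{z_i^2}{z_i^2-A^2}$. Substituting this back and choosing the factorizations $B^2-z^2=(B-z)(B+z)$, $z^2-A^2=-(A-z)(A+z)$ so that, string by string, the Gamma functions arrange themselves as in \eqref{eq:conjecture}, produces the claimed formula.

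\medskip

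\noindent The main obstacle is precisely the BC-symmetrization identity \cite[Conjecture~5.2]{borodin2016directed}, which remains conjectural; this is the reason the statement is formulated as a Claim. Secondary technical issues are: (i) justifying that the contour deformation does not pick up boundary contributions at infinity, which follows from the super-polynomial Gaussian decay $e^{tz_i^2}$ along $\I\R$ and the controlled polynomial growth of the Gamma-ratio factors in the regime $A,B>k-1$ where all moments exist; and (ii) checking the combinatorial bookkeeping for the multiplicities $1/(m_1!m_2!\cdots)$ and signs, which is identical to the analogous step performed in \cite[Section~5]{borodin2016directed} since the new Brownian initial-data factor is symmetric and purely multiplicative in the $z_i$.
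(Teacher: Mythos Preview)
Your proposal is correct and follows essentially the same route as the paper: start from the nested contour formula of Claim~\ref{prop:nestedcontours}, apply \cite[Conjecture~5.2]{borodin2016directed} to collapse the residue expansion into the string/Pfaffian form, evaluate the $BC$-symmetrization via \cite[Eq.~(54)]{borodin2016directed} to obtain $E(z_1,\dots,z_k)=2^k k!\prod_i\frac{e^{tz_i^2}}{B^2-z_i^2}\frac{z_i^2}{z_i^2-A^2}$, and then choose the Gamma-function factorization that yields~\eqref{eq:conjecture}. One small correction: your dismissal of the poles at $z_a+z_b=1$ as ``harmless'' is too quick---the half-space residue expansion genuinely involves both families of poles, and the resulting combinatorics is intricate; this is exactly why the outcome is packaged as the conjectural identity of \cite{borodin2016directed} rather than derived by a clean string-only argument. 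This does not affect your overall plan, since you correctly invoke the conjecture to absorb the full cross-term structure.
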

Comparing with the formula \eqref{znn2} obtained from the replica Bethe ansatz, we
see that \eqref{eq:conjecture} and \eqref{znn2} agree after the substitutions
\begin{equation}
w_j \to \I k_j + \frac{1-m_j}{2}, \quad  \quad \ell(\lambda) \to n_s, \quad  \quad \lambda_j \to m_j.
\end{equation}
Indeed, under this change of variables,
we have 
\begin{equation}
(u_1,u_2,\dots,u_{2\ell(\la)-1} u_{2\ell(\la)}) = (X_2,X_1, \dots, X_{2n_s}, X_{2n_s-1}) 
\end{equation}
where $X_j = m_p + 2 \I k_p$for $j=2 p -1$, $X_j=
	m_p - 2 \I k_p$ for $j=2 p$. Thus we have that 
	\begin{equation} (-1)^{\ell(\la)} \Pf \left[ \frac{u_i-u_j}{u_i+u_j} \right]_{i,j=1}^{2\ell(\la)} = \underset{2 n_s \times 2 n_s}{\rm Pf} \left[ \frac{X_i-X_j}{X_i+X_j} \right].\end{equation}
One may also check that  $e^{\mathtt{G}(w_i+\la_i)-\mathtt{G}(w_i)} = e^{ (m_j^3-m_j) \frac{ t}{12} - m_j k_j^2 t }.$ Using the reflection formula for the Gamma function, we may write the hyperbolic sine in the definition of $B_{k,m}$ in \eqref{B} as a product of Gamma functions so that we have (dropping indices of variables $k_j,m_j,\lambda_j, w_j$)
\begin{equation}\frac{B_{k,m}}{4 \I k} = \frac{1}{2}\frac{\Gamma(2w-2\la -1)\Gamma(1-2w)}{\Gamma(2w+\la)\Gamma(1-2w-\la)}\frac{\Gamma(B-w-\la+1)\Gamma(B+w)}{\Gamma(B-w+1)\Gamma(B+w+\la)}\frac{\Gamma(A-w-\la+1)\Gamma(A+w)}{\Gamma(A-w+1)\Gamma(A+w+\la)}.\end{equation}
Using the duplication formula for the Gamma function and the reflection formula a few times, we arrive at 
\begin{multline} \frac{B_{k,m}}{4 \I k} =  2^{2\la} \frac{(w+1/2)_{\la-1}}{4(w)_{\la}} \frac{\Gamma(-w+1)\Gamma(w+\lambda)}{\Gamma(-w-\lambda+1)\Gamma(w)} \\ \times \frac{\Gamma(B-w-\la+1)\Gamma(B+w)}{\Gamma(B-w+1)\Gamma(B+w+\la)}\frac{\Gamma(A-w-\la+1)\Gamma(A+w)}{\Gamma(A-w+1)\Gamma(A+w+\la)}.\end{multline}
Thus, we have shown that \eqref{eq:conjecture} and \eqref{znn2} agree as claimed.

\section{Generating function in terms of a Fredholm Pfaffian}
\label{sec:FP} 

{ We will now write the moment generating function of $Z(t)$. We define, for $\varsigma>0$,
\begin{equation}
g(\varsigma) =\mathbb{E} \left[ \exp(- \varsigma e^{\frac{t}{12}} \invgamma Z(t)) \right].
\end{equation}
Ignoring the fact that the summation over $n$ cannot be exchanged with the expectation due to the divergence of moments, we will consider the following formal power series 
$$ 1 + \sum_{n=1}^\infty \frac{(- \varsigma e^{\frac{t}{12}})^n}{n!} \mathbb{E} \left[\invgamma^n Z(t)^n\right],$$
that we will again denote by $g(\varsigma)$. 
\begin{remark}
Following \cite{SasamotoStationary2},  we may informally write 
 \begin{equation}
\mathbb{E}[\invgamma^n]\varsigma^n =\frac{\Gamma(A+B-n+1)}{\Gamma(A+B+1)}\varsigma^n=\frac{\Gamma(A+B-\varsigma\partial_{\varsigma}+1)}{\Gamma(A+B+1)}\varsigma^n
 \end{equation}
 so that 
 \begin{equation}
 \mathbb{E} \left[ \exp(- \varsigma e^{\frac{t}{12}}  Z(t)) \right] =\frac{\Gamma(A+B+1)}{\Gamma(A+B+1-\varsigma\partial_{\varsigma})} \mathbb{E} \left[ \exp(- \varsigma e^{\frac{t}{12}} \invgamma Z(t)) \right],
 \label{eq:removingweights}
 \end{equation}
 where the operator $\frac{\Gamma(A+B+1)}{\Gamma(A+B+1-\varsigma\partial_{\varsigma})}$ should be understood in the following sense: If 
 $$  \frac{\Gamma(A+B+1)}{\Gamma(A+B+1-z)} = \sum_{n=0}^{+\infty}a_nz^n,$$
 in a neighborhood of $z=0$, then 
$$  \frac{\Gamma(A+B+1)}{\Gamma(A+B+1-\varsigma\partial_{\varsigma})} = \sum_{n=0}^{+\infty}a_n(\varsigma\partial_{\varsigma})^n.$$
Equation~\eqref{eq:removingweights} is a positive temperature analogue of \cite[Eq. (4.5)]{ferrari2006scaling} which we will use in the next section. This type of arguments can be traced back to the work of Baik and Rains \cite{baik2000limiting}. 
\label{rem:remark5.1}
\end{remark} 
\begin{remark} 
 Alternatively, one may use the density of the inverse Gamma distribution and write 
 \begin{equation} g(\varsigma) = \int_{0}^{+\infty} \frac{\mathrm d u}{\Gamma(A+B+1)}u^{-A-B-2}e^{-1/u}\mathbb E\left[ \exp(- \varsigma u e^{\frac{t}{12}}  Z(t))  \right].\label{eq:multiplicativeconvolution}\end{equation} 
We may rewrite this equation using the following representation of the Bessel function
 \begin{equation}
 \int_{0}^{+\infty}\rmd u \, u^{-\nu} e^{1/u-x^2 u} = 2 x^{\nu} K_{\nu}(2 x),
 \end{equation}
  for $x>0$, where $K_{\nu}$ denotes the modified Bessel $K$ function.
 Exchanging the expectation with the integral over $u$ in \eqref{eq:multiplicativeconvolution}, and using this integral representation,  we arrive at 
 \begin{equation}
 \EE\left[ 2 \left(\varsigma Z(t)e^{\frac{t}{12}}\right)^{\frac{1+A+B}{2}} K_{1+A+B}\left(2\sqrt{\varsigma Z(t)e^{\frac{t}{12}}}\right)  \right] = g(\varsigma)\Gamma(1+A+B), 
 \end{equation} 
 where $K_{\nu}(z)$ denotes the modified Bessel $K$ function. Note that similar integral transforms involving the modified Bessel function $K$ appear in  \cite[Theorem 2.9]{BCFV}. 
 It is plausible that this expression can be inverted to compute the distribution of $Z(t)$. For $A+B+1=0$, an inversion formula is provided in   \cite[Appendix E]{BCFV}. 
 We will see in Section~\ref{sec:largeTlimit} that we will actually not need to perform this inversion.  
\end{remark}}
The constraint $\sum_{i=1}^{n_s} m_i=n$ in \eqref{znn2} can then be relaxed by reorganizing the series according to the number of strings:
\begin{eqnarray} \label{defg}
g(\varsigma) = 1 +  \sum_{n_s=1}^\infty \frac{1}{n_s!}  {\sf Z}(n_s,\varsigma) 
\end{eqnarray}
where ${\sf Z}(n_s,\varsigma)$ is the partition sum at fixed number of strings $n_s$, calculated below. We now show that one can write the generating function as a Fredholm Pfaffian. It will
be possible thanks to the Schur Pfaffian identity, \eqref{pfid}, given above.  The partition sum at fixed number of strings, expressed in terms of the reduced variables $X_{2p-1} = m_p + 2 \I k_p $ and $X_{2p} = m_p - 2 \I k_p$ for $p\in [1,n_s]$, reads 
\be
{\sf Z}(n_s,\varsigma) = \prod_{p=1}^{n_s} \sum_{m_p \geqslant 1} \int_\mathbb{R} \frac{\rmd k_p}{2 \pi} 
(-\varsigma)^{m_p} \frac{B_{k_p,m_p}}{4 \I k_p} 
e^{-t  m_p k_p^2 + \frac{t}{12} m_p^3} 
\underset{2 n_s \times 2 n_s}{\rm Pf} \left[ \frac{X_i-X_j}{X_i+X_j}\right]
\label{eq:afterRefSummand}
\ee
where $B_{k,m}$ was given in \eqref{B}. The summation over the variables $m_p$ can be done using the Mellin-Barnes summation trick similarly to Refs.~\cite{SasamotoStationary2,SasamotoStationary}. The barrier $A>(n-1)/2$ is overcome exactly as in Ref.~\cite{SasamotoStationary2} (see Lemma.~6 and the discussion therein) from an analytic continuation of Gamma functions included in the $B_{k,m}$ factor, the introduction of a particular contour $C_0$ and a final requirement for the drift $A+1/2>0$. Indeed, define the contour  $C_0 = a + \I \mathbb{R}$ with  $a \in (0,\min\lbrace 2B+1,2A+1,1\rbrace)$, then for any holomorphic function $f$ having sufficient decay at infinity and in particular denoting the summand of Eq.~\eqref{eq:afterRefSummand} by the function $f(m_p)$, we have 

\bea \label{MB} 
\sum_{m \geqslant 1} (-\varsigma)^m f(m) = 
- \int_{C_0}  \frac{\rmd w}{2 \I \pi} \varsigma^w \frac{\pi}{\sin \pi w} f(w)
=
- \int_\mathbb{R} \rmd r \frac{\varsigma}{ \varsigma +e^{-r}}  \int_{C_0}  \frac{\rmd w}{2 \I \pi} f(w) e^{-w r}.
\eea 
For each $m_p$ we therefore introduce two variables $r_p$ and $w_p$ and we redefine the reduced variables $X_{2p}$ and $X_{2p-1}$ under the minimal replacement $m_p \to w_p$ imposed by the Mellin-Barnes formula, which we will apply despite the presence of poles on the right of the contour $C_0$. This is an a priori illegal step, but it will  exactly turn the diverging moment generating series into a well-defined and converging series equal to the Laplace transform.  We refer to \cite[Section 6]{borodin2016directed} where this procedure and its degree of rigor is discussed in great details. This leads to the following rewriting of the coefficient ${\sf Z}(n_s,\varsigma)$  as 
(see section 5 in \cite{krajenbrink2018large} for similar manipulations)
\be
\begin{split}
 {\sf Z}(n_s,\varsigma) =& (-1)^{n_s}\prod_{p=1}^{n_s} \int_\mathbb{R} \rmd r_p \frac{\varsigma}{\varsigma  + e^{ -r_p }} 
 \iint_{C_0^2} \frac{\rmd X_{2p-1}}{4\I\pi}   \frac{\rmd X_{2p}}{4\I\pi}
\frac{\sin(\frac{\pi}{2} (X_{2p}-X_{2p-1}))}{2\pi} \\
& \times  \frac{\Gamma(A+\frac{1}{2}-\frac{X_{2p}}{2})}{\Gamma(A+\frac{1}{2}+\frac{X_{2p}}{2})}\frac{\Gamma(A+\frac{1}{2}-\frac{X_{2p-1}}{2})}{\Gamma(A+\frac{1}{2}+\frac{X_{2p-1}}{2})} \frac{\Gamma(B+\frac{1}{2}-\frac{X_{2p}}{2})}{\Gamma(B+\frac{1}{2}+\frac{X_{2p}}{2})}\frac{\Gamma(B+\frac{1}{2}-\frac{X_{2p-1}}{2})}{\Gamma(B+\frac{1}{2}+\frac{X_{2p-1}}{2})}  \\
&\times\Gamma(X_{2p-1}) \Gamma(X_{2p})  e^{-(X_{2p-1}+X_{2p}) \frac{r_p}{2} + t ( \frac{X_{2p-1}^3}{24} + \frac{X_{2p}^3}{24} )}  \underset{2 n_s \times 2 n_s}{\rm Pf} \left[ \frac{X_i-X_j}{X_i+X_j}\right]
\end{split}
\ee
\begin{remark}
Note that the contour $C_0$ passes to the left of the poles of the Gamma function at $X=2A+1, 2B+1$.
\end{remark}
We observe that the integrals are almost separable in $X_{2p}$ and $X_{2p-1}$ except for the sine function which couples them.  By anti-symmetrization and similarly to \cite[Section 5]{krajenbrink2018large}, we can proceed to the replacement\footnote{This replacement is done using the trigonometric identity $\sin(x-y)=\sin(x)\cos(y)-\sin(y)\cos(x)$ and the antisymmetry of the Schur Pfaffian upon the relabelling $X_{2p}\leftrightarrow X_{2p-1}$.  Note that similarly to \cite{krajenbrink2018large}, we could have the more general decomposition $\sin\left(\frac{\pi}{2} (X_{2p}-X_{2p-1})\right)\to 2\sin\left(\frac{\pi}{2} X_{2p}+\theta \right)\cos\left(\frac{\pi}{2} X_{2p-1}+\theta\right).$ valid for all $\theta \in \R$ but we do not make use of it here.}
\begin{equation}
\sin\left(\frac{\pi}{2} (X_{2p}-X_{2p-1})\right)\to 2\sin\left(\frac{\pi}{2} X_{2p}\right)\cos\left(\frac{\pi}{2} X_{2p-1}\right).
\end{equation}
The last manipulations consist in rescaling all variables $X$ by a factor $2$ and replacing the contours of integration by $C=\frac{a}{2} + i \mathbb{R}$. Hence we have
\be
\begin{split}
 {\sf Z}(n_s,\varsigma) =& (-1)^{n_s}\prod_{p=1}^{n_s} \int_\mathbb{R} \rmd r_p \frac{\varsigma}{\varsigma  + e^{ -r_p }} 
 \iint_{C^2} \frac{\rmd X_{2p-1}}{2\I\pi}  \frac{\rmd X_{2p}}{2\I\pi}
\frac{\sin(\pi X_{2p})\cos(\pi X_{2p-1})}{\pi} \\
&\times 
\frac{\Gamma(A+\frac{1}{2}-X_{2p})}{\Gamma(A+\frac{1}{2}+X_{2p})}\frac{\Gamma(A+\frac{1}{2}-X_{2p-1})}{\Gamma(A+\frac{1}{2}+X_{2p-1})}  \frac{\Gamma(B+\frac{1}{2}-X_{2p})}{\Gamma(B+\frac{1}{2}+X_{2p})}\frac{\Gamma(B+\frac{1}{2}-X_{2p-1})}{\Gamma(B+\frac{1}{2}+X_{2p-1})}\\
&\times  \Gamma(2 X_{2p-1}) \Gamma(2X_{2p})  e^{-(X_{2p-1}+X_{2p}) r_p + t ( \frac{X_{2p-1}^3}{3} + \frac{X_{2p}^3}{3} )}  \underset{2 n_s \times 2 n_s}{\rm Pf} \left[ \frac{X_i-X_j}{X_i+X_j}\right]
\end{split}
\ee 
There are a few last steps before we introduce the Fredholm Pfaffian. First define the functions
\begin{equation}
\begin{split}
&\phi_{2p}(X)=\frac{\sin(\pi X)}{\pi}\Gamma(2X)\frac{\Gamma(A+\frac{1}{2}-X)}{\Gamma(A+\frac{1}{2}+X)}\frac{\Gamma(B+\frac{1}{2}-X)}{\Gamma(B+\frac{1}{2}+X)}e^{ -r_p X + t  \frac{X^3}{3} }\\
&\phi_{2p-1}(X)=\cos(\pi X)\Gamma(2X)\frac{\Gamma(A+\frac{1}{2}-X)}{\Gamma(A+\frac{1}{2}+X)}\frac{\Gamma(B+\frac{1}{2}-X)}{\Gamma(B+\frac{1}{2}+X)}e^{ -r_p X + t  \frac{X^3}{3} }
\end{split}
\end{equation}
Using a known property of Pfaffians (see De Bruijn \cite{de1955some}), we can rewrite the partition sum at fixed number of strings itself as a Pfaffian, i.e. we use that
\begin{equation}
\prod_{\ell=1}^{2n_s}\int_{C}\frac{\mathrm{d}X_{\ell}}{2\I\pi}\Phi_{\ell}(X_{\ell}) \underset{2 n_s \times 2 n_s}{\rm Pf} \left[ \frac{X_i-X_j}{X_i+X_j}\right]=\underset{2 n_s \times 2 n_s}{\rm Pf}\left[\iint_{C^2} \frac{\mathrm{d}w}{2\I\pi}\frac{\mathrm{d}z}{2\I\pi} \Phi_i(w)\Phi_j(z)\frac{w-z}{w+z} \right] 
\label{eq:5.12}
\end{equation}
This leads to the definition of a $2 n_s \times 2 n_s$ matrix $M$ such that 
\begin{equation}
M_{i j} =\iint_{C^2} \frac{\mathrm{d}w}{2\I\pi}\frac{\mathrm{d}z}{2\I\pi} \Phi_i(w)\Phi_j(z)\frac{w-z}{w+z}
\end{equation}
Since a variable $r_p$ will be shared between four elements of this matrix, it is more convenient to view $M$ as 
composed of $2\times 2$ blocks which we denote $K$, whose elements are presented in Eqs.~\eqref{eq:kernelAllA}.
Finally, the string-replicated partition function is given by an infinite series of Pfaffians
\begin{equation}
g(\varsigma)=1+\sum_{n_s=1}^\infty\frac{(-1)^{n_s}}{n_s!} \prod_{p=1}^{n_s} \int_\mathbb{R} \rmd r_p\frac{\varsigma}{\varsigma + e^{-r_p}} \underset{ n_s \times  n_s}{\rm Pf}\left( K(r_k,r_{\ell})\right)
\label{eq:5.14}
\end{equation}
This series is a Fredholm Pfaffian,
\begin{equation}\label{eq:PfaffAllTimesAllA}
g(\varsigma)=\mathbb{E} \left[ \exp(-\varsigma \invgamma e^{H(t)}) \right]={\rm Pf}(J-\sigma_\varsigma K)_{\mathbb{L}^2(\mathbb{R})}, 
\end{equation}
where $K$ is given in \eqref{eq:kernelAllA}, 
the function $\sigma_\varsigma$ is given by $\sigma_\varsigma(r)=\frac{\varsigma}{\varsigma+e^{-r}}$
and the $2\times 2$ kernel $J$ is given by  $J(r,r')=\bigg(\begin{array}{cc}
0 & 1 \\ 
-1 & 0
\end{array} \bigg)\mathds{1}_{r=r'}$. For the precise definition and properties of Fredholm Pfaffians see Section 8 in \cite{rains2000correlation},
as well as e.g. Section 2.2. in \cite{baik2018pfaffian}, Appendix B in 
\cite{Ortmann} and Appendix G in \cite{CLDflat, CLDflat2}.

\section{Large time limit of the Fredholm Pfaffian and the distribution of the KPZ height: crossover kernel}\label{sec:largeTlimit}

We will now study the large time limit of our kernel. To understand the scaling required at large time, let us recall the 
expression of the partition sum at fixed number of strings
\begin{equation}
\begin{split}
 {\sf Z}(n_s,\varsigma) =& (-1)^{n_s}\prod_{p=1}^{n_s} \int_\mathbb{R} \rmd r_p \frac{\varsigma}{\varsigma  + e^{ -r_p }} 
 \iint_{C^2} \frac{\rmd X_{2p-1}}{2\I\pi}  \frac{\rmd X_{2p}}{2\I\pi}
\frac{\sin(\pi X_{2p})\cos(\pi X_{2p-1})}{\pi} \\
&\times 
\frac{\Gamma(A+\frac{1}{2}-X_{2p})}{\Gamma(A+\frac{1}{2}+X_{2p})}\frac{\Gamma(A+\frac{1}{2}-X_{2p-1})}{\Gamma(A+\frac{1}{2}+X_{2p-1})}  \frac{\Gamma(B+\frac{1}{2}-X_{2p})}{\Gamma(B+\frac{1}{2}+X_{2p})}\frac{\Gamma(B+\frac{1}{2}-X_{2p-1})}{\Gamma(B+\frac{1}{2}+X_{2p-1})}\\
&\times \Gamma(2 X_{2p-1}) \Gamma(2X_{2p})  e^{-(X_{2p-1}+X_{2p}) r_p + \frac{t}{3} ( X_{2p-1}^3 + X_{2p}^3 )}  \underset{2 n_s \times 2 n_s}{\rm Pf} \left[ \frac{X_i-X_j}{X_i+X_j}\right]
\end{split}
\end{equation}
At large time, we want to eliminate the time factor in the exponential, hence we perform the change of variables
\begin{equation}
X=t^{-1/3}\tilde{X}, \qquad  r=t^{1/3}\tilde{r}, \qquad A+\frac{1}{2}=\Arescaled t^{-1/3}, \qquad B+\frac{1}{2}=\Brescaled t^{-1/3}.
\label{eq:scalings1}
\end{equation}
In the large time limit, the Gamma, cosine and sine functions simplify using that for small positive argument 
\begin{equation}
\Gamma(x)\simeq \frac{1}{x}, \qquad \cos(x)\simeq 1, \qquad \sin(x) \simeq x.
\label{eq:simpleequivalents}
\end{equation}
Under these simplifications, the partition sum at fixed number of strings reads 
in the limit $t \to +\infty$ (dropping all tildes) 
\be
\begin{split}
 {\sf Z}(n_s,\varsigma) =& (-1)^{n_s}\prod_{p=1}^{n_s} \int_\mathbb{R} \rmd r_p \frac{\varsigma}{\varsigma  + e^{ -t^{1/3}r_p }} 
 \iint_{C^2} \frac{\rmd X_{2p-1}}{2\I\pi}  \frac{\rmd X_{2p}}{2\I\pi} \frac{1}{4X_{2p-1}}
\frac{\Arescaled+X_{2p}}{\Arescaled- X_{2p}}\frac{\Arescaled+X_{2p-1}}{\Arescaled- X_{2p-1}} \\
&\times \frac{\Brescaled+X_{2p}}{\Brescaled - X_{2p}}\frac{\Brescaled +X_{2p-1}}{\Brescaled - X_{2p-1}}  e^{-(X_{2p-1}+X_{2p}) r_p +   \frac{X_{2p-1}^3}{3} + \frac{X_{2p}^3}{3} }  \underset{2 n_s \times 2 n_s}{\rm Pf} \left[ \frac{X_i-X_j}{X_i+X_j}\right]
\end{split}
\ee 
The contours $C$ have now to be understood as $C=\tilde{a}+\I\mathbb{R}$, where $\tilde{a}\in (0,\min\lbrace \Arescaled,\Brescaled\rbrace )$. We emphasize that the contours all lie at the left of the poles at $X=\Arescaled$ and $X=\Brescaled$. We may now use \eqref{eq:5.12} and \eqref{eq:5.14} to write
the Laplace transform $g(\zeta)$
under the scalings in \eqref{eq:scalings1} as the Pfaffian $\mathrm{Pf}[J-\sigma_{\zeta}K^{(a,b)}]_{L^2(\R)} $. The matrix valued kernel $K^{(\Arescaled, \Brescaled)}$ reads in this limit
\begin{equation}\label{eq:largeTimeK}
\begin{split}
&K^{(\Arescaled, \Brescaled)}_{11}(r,r')=\frac{1}{4}\iint_{C^2}\frac{\mathrm{d}w}{2\I\pi}\frac{\mathrm{d}z}{2\I\pi}\frac{w-z}{w+z}\frac{1}{wz}\frac{\Arescaled+w}{\Arescaled-w}\frac{\Arescaled+z}{\Arescaled-z} \frac{\Brescaled+w}{\Brescaled-w}\frac{\Brescaled+z}{\Brescaled-z} e^{-rw-r'z +   \frac{w^3+z^3}{3} }\\
&K^{(\Arescaled, \Brescaled)}_{22}(r,r')=\frac{1}{4}\iint_{C^2} \frac{\mathrm{d}w}{2\I\pi}\frac{\mathrm{d}z}{2\I\pi}\frac{w-z}{w+z}\frac{\Arescaled+w}{\Arescaled-w}\frac{\Arescaled+z}{\Arescaled-z} \frac{\Brescaled+w}{\Brescaled-w}\frac{\Brescaled+z}{\Brescaled-z}e^{ -rw-r'z +  \frac{w^3+z^3}{3} }\\
&K^{(\Arescaled, \Brescaled)}_{12}(r,r')=\frac{1}{4}\iint_{C^2} \frac{\mathrm{d}w}{2\I\pi}\frac{\mathrm{d}z}{2\I\pi}\frac{w-z}{w+z}\frac{1}{w}\frac{\Arescaled+w}{\Arescaled-w}\frac{\Arescaled+z}{\Arescaled-z}\frac{\Brescaled+w}{\Brescaled-w}\frac{\Brescaled+z}{\Brescaled-z} e^{ -rw-r'z +   \frac{w^3+z^3}{3} }
\end{split}
\end{equation}
\begin{remark}
The kernel $K^{(\Arescaled, \Brescaled)}$ has a particular structure, indeed its elements are related through derivative identities: $K^{(\Arescaled, \Brescaled)}_{22}(r,r')=\partial_r \partial_{r'} K^{(\Arescaled, \Brescaled)}_{11}(r,r')$,  $K^{(\Arescaled, \Brescaled)}_{12}(r,r')=-\partial_{r'} K^{(\Arescaled, \Brescaled)}_{11}(r,r')$ and $K^{(\Arescaled, \Brescaled)}_{22}(r,r')=-\partial_r K^{(\Arescaled, \Brescaled)}_{12}(r,r')$.\\
\end{remark}

\begin{remark}
The kernel $K^{(\Arescaled, \Brescaled)}$ can be obtained equivalently from the kernel
\eqref{eq:kernelAllA} by rescaling, as in \cite{AlexLD}. 
\end{remark}

Finally, choosing the variable $\varsigma$ as $\varsigma=e^{-st^{1/3}}$, at large time we have  $\lim_{t\to +\infty} \sigma_{\varsigma}(rt^{1/3})=\Theta(r-s)$, where $\Theta$ is the Theta Heaviside function. The Fredholm Pfaffian formula for the generating function then becomes in the limit
\begin{equation}\label{eq:PfaffianLargeTime}
\lim_{t\to +\infty}g(\varsigma=e^{-st^{1/3}})={\rm Pf}(J-K^{(\Arescaled, \Brescaled)})_{\mathbb{L}^2(s,+\infty)}.
\end{equation}
On the other hand, at large time, the Laplace transform of the distribution of the exponential of the KPZ height converges towards the cumulative probability of the height (see \cite[Lemma 4.1.39]{borodin2014macdonald}), i.e.
\begin{equation}
\begin{split}
g(\varsigma=e^{-st^{1/3}})=\mathbb{E}\left[ \exp(- \invgamma e^{H(t)-st^{1/3}}) \right]&\simeq_{t \to +\infty} \mathbb{E}\left[ \Theta(st^{1/3}-H(t)-\log \invgamma ) \right]\\
&\simeq_{t \to +\infty} \mathbb{P}\left(\frac{H(t)+\log \invgamma }{t^{1/3}}\leqslant s\right)
\end{split}
\end{equation}
where $\Theta$ is the Theta Heaviside function. 

Note that since $\invgamma$ is an inverse Gamma random variable with parameter $A+B+1=t^{-1/3}(\Arescaled+\Brescaled)$, the random variable $\log(\invgamma)/t^{1/3}$ weakly converges to an exponential random variable with parameter $\Arescaled+\Brescaled$. \\

At this point, we obtain that for any $\Arescaled, \Brescaled>0$, 
\begin{equation} 
\lim_{t \to +\infty} \mathbb{P}\left(\frac{H(t) }{t^{1/3}}\leqslant s-\mathcal E\right)={\rm Pf}(J-K^{(\Arescaled, \Brescaled)})_{\mathbb{L}^2(s,+\infty)}
\label{eq:shifteddistribution}
\end{equation}
where $\mathcal E$ is an exponential random variable with parameter $\Arescaled+\Brescaled$ independent from $H(t)$, and the matrix kernel $K^{(\Arescaled, \Brescaled)}$ is given in \eqref{eq:largeTimeK}. {Using the density of the exponential distribution, and denoting $F^{(a,b)}(s) = \lim_{t\to\infty }\mathbb{P}\left(\frac{H(t) }{t^{1/3}}\leqslant s\right)$, we may rewrite \eqref{eq:shifteddistribution} as  
\begin{equation}
\int_{0}^{+\infty} \mathrm d x\,   (a+b)e^{-x(a+b)} F^{(a,b)}(s-x) = {\rm Pf}(J-K^{(\Arescaled, \Brescaled)})_{\mathbb{L}^2(s,+\infty)}.
\label{eq:exponentialtransform}
\end{equation}
Following \cite[Eq. (4.3)]{ferrari2006scaling} (see also Remark \ref{rem:remark5.1}),  we differentiate in $s$ in \eqref{eq:exponentialtransform} and use integration by parts in the left hand side.  We obtain 
\begin{align*}
\partial_s {\rm Pf}(J-K^{(\Arescaled, \Brescaled)})_{\mathbb{L}^2(s,+\infty)} &= (a+b)F^{(a,b)}(s) - (a+b)^2 \int_{0}^{+\infty} \mathrm d x\,   e^{-x(a+b)} F^{(a,b)}(s-x) \\ 
  &= (a+b)F^{(a,b)}(s)- (a+b) {\rm Pf}(J-K^{(\Arescaled, \Brescaled)})_{\mathbb{L}^2(s,+\infty)}.
\end{align*}
Finally, we can write}
\begin{equation}
\lim_{t \to +\infty} \mathbb{P}\left(\frac{H(t) }{t^{1/3}}\leqslant s\right)=\left(1+\frac{\partial_s}{\Arescaled+\Brescaled} \right){\rm Pf}(J-K^{(\Arescaled, \Brescaled)})_{\mathbb{L}^2(s,+\infty)}
:= F^{(\Arescaled,\Brescaled)}(s). 
\label{eq:defFepsiloneta}
\end{equation}
In the next sections, we show that the distribution $F^{(\Arescaled,\Brescaled)}$ interpolates between various known distribution as $\Brescaled$ or $\Arescaled$ goes to infinity. The most interesting case, corresponding to stationary growth, is when $\Brescaled, \Arescaled $ go to zero and will be studied in details in Section~\ref{sec:scalarkernel}.

\begin{remark} \label{r3} 
Performing the large time limit at any fixed $A > -1/2$ and $B>-1/2$ corresponds -- modulo an exchange of limits -- to the scaling considered above with $\Arescaled,\Brescaled=+\infty$. One can show
that in the limit $\Arescaled,\Brescaled\to +\infty $, the kernel 
\eqref{eq:largeTimeK} converges to the GSE matrix kernel, 
by the same manipulations as in \cite[Section 4.1]{AlexLD}. Indeed, as the contours of kernel $K^{(\Arescaled, \Brescaled)}$ are parallel to the imaginary axis and cross the real axis between $0$ and $\min\lbrace \Arescaled,\Brescaled \rbrace$, we can take the limit $\Arescaled,\Brescaled  \to \infty$ in the integrand without affecting the contours. All rational functions involving the parameter $\Arescaled, \Brescaled$ in the large time limit of the kernel $K^{(\Arescaled, \Brescaled)}$ in \eqref{eq:largeTimeK} converge to the value $-1$. Hence in this limit we obtain the kernel $K^\infty$ given in \cite[Eq. (113)]{AlexLD}, which is precisely the kernel  associated to the Gaussian Symplectic Ensemble (GSE) of random matrices as also given in Lemma 2.7. of \cite{baik2018pfaffian}. Hence this shows that the distribution of the height at $x=0$ converges at large time for boundary conditions such that $\Arescaled, \Brescaled \to \infty$ 
(e.g. for any fixed $A,B>-1/2$) to the GSE Tracy-Widom distribution, as we will also
show below.
\end{remark}

\begin{remark} \label{r4} 
In the limit $\Brescaled\to +\infty$, the kernel $K^{(a,b)}$ in \eqref{eq:largeTimeK} converges to the 
kernel $K^{\epsilon}$ with $\epsilon=a$ obtained in \cite[Eq. (64)]{AlexLD} in the study of the droplet initial condition. The Fredholm Pfaffian $F^{(\Arescaled)}(s):={\rm Pf}(J-K^{(\Arescaled,+\infty)})_{\mathbb L^2(s, +\infty)}$ interpolates between the CDF of the GSE Tracy-Widom distribution, $F_4(s)$, (at $\Arescaled\to+\infty$) and CDF of the GOE Tracy-Widom distribution function, $F_1(s)$, (at $\Arescaled=0$). Note that
the GOE kernel obtained in \cite[Eq. (83)]{AlexLD} was found to provide a new representation of the GOE Tracy-Widom distribution. If instead we take $\Arescaled \to +\infty$ for fixed $\Brescaled$, we obtain again the same kernel $K^{\Brescaled}$ because of the symmetry $\Arescaled \leftrightarrow \Brescaled$. Physically, it describes the CDF of the distribution of the rescaled height of the KPZ equation with Brownian initial data and Dirichlet boundary condition.
\end{remark} 

\section{From a matrix valued kernel to a scalar kernel}
\label{sec:scalarkernel}

\subsection{Solution for the KPZ generating function at all times for generic $A,B$ in terms of a scalar kernel}

The general kernel we have obtained in \eqref{eq:kernelAllA} has a particular structure in the form of a Schur Pfaffian. With this structure, the kernel verifies the hypothesis of Proposition B.2 of \cite{krajenbrink2018large} recalled in Lemma~\ref{prop:2D_to_1DMOI} in Appendix~\ref{app:2D_to_1D}. This proposition states that we can transform the Fredholm Pfaffian of \eqref{eq:PfaffAllTimesAllA} which involves a matrix valued kernel, into a Fredholm determinant of a scalar kernel. To proceed, let us first define the functions 
\begin{equation}
\begin{split}
&\ratioGamma(w) = \frac{\Gamma(A+\frac{1}{2}-w)}{\Gamma(A+\frac{1}{2}+w)}\frac{\Gamma(B+\frac{1}{2}-w)}{\Gamma(B+\frac{1}{2}+w)}\Gamma(2w)\\
&f_{\rm odd}(r)=\int_C \frac{\rmd w}{2\I\pi}\ratioGamma(w)\cos(\pi w)e^{-rw +t\frac{w^3}{3}}\\
&f_{\rm even}(r)=\int_C \frac{\rmd z}{2\I\pi}\ratioGamma(z)\frac{\sin(\pi z)}{\pi}e^{-rz +t\frac{z^3}{3}}
\end{split}
\end{equation}
and the kernel $\bar{K}_{t,\varsigma}$ such that for all $(x,y)\in \mathbb{R}_+^2$
\begin{equation}\label{eq:1DkernelAllT}
\begin{split}
\bar{K}_{t,\varsigma}(x,y)&=2\partial_x \int_\mathbb{R} \rmd r \frac{\varsigma}{\varsigma+ e^{-r}}[f_{\rm even}(r+x)f_{\rm odd}(r+y)-f_{\rm odd}(r+x)f_{\rm even}(r+y)]\\
&=2\partial_x \int_\mathbb{R}\rmd r \frac{\varsigma}{\varsigma+ e^{-r}}\iint_{C^2} \frac{\rmd w \rmd z}{(2\I\pi)^2 }\ratioGamma(z)\ratioGamma(w) \frac{\sin(\pi (z-w))}{\pi}e^{-xz-yw -rw-rz + t  \frac{w^3+z^3}{3} }\\
&{=2\partial_x \iint_{C^2} \frac{\rmd w \rmd z}{(2\I\pi)^2 }\ratioGamma(z)\ratioGamma(w) \frac{\sin(\pi (z-w))}{\sin(\pi(z+w))}\varsigma^{w+z}e^{-xz-yw  + t  \frac{w^3+z^3}{3} } }
\end{split}
\end{equation}

Then, the Laplace transform of the one-point distribution of the exponential of the KPZ height admits the following representation:
\begin{equation} \label{res-scalar} 
g(\varsigma)=\mathbb{E}\left[ \exp(-\varsigma \invgamma e^{H(t)}) \right]={\rm Pf}(J-\sigma_\varsigma K)_{\mathbb{L}^2(\mathbb{R})}=\sqrt{\mathrm{Det}(I-\bar{K}_{t,\varsigma})_{\mathbb{L}^2(\mathbb{R}_+)}}.
\end{equation}

\subsection{Large time limit of the scalar kernel}
\label{subsec:largetime} 

 {To deduce the large time asymptotics of $H(t)$ from the determinantal formula  \eqref{res-scalar}, 
one performs the same rescaling as in Sec.~\ref{sec:largeTlimit},
namely one chooses $\varsigma = e^{- t^{1/3} s}$ and one rescales
$(w,z) \to t^{-1/3} (w,z)$, $r \to t^{1/3} r$.  The kernel $\bar{K}_{t,\varsigma}$ becomes 
	\begin{equation}
	\bar{K}^{(\Arescaled, \Brescaled)}(x,y) =\frac{1}{2}  \iint_{C^2} \frac{\rmd w \rmd z}{(2\I\pi)^2 }\frac{\Arescaled+w}{\Arescaled-w}\frac{\Brescaled+w}{\Brescaled-w} \frac{\Arescaled+z}{\Arescaled-z}\frac{\Brescaled+z}{\Brescaled-z}\frac{w-z}{w+z}\frac{1}{w} e^{-xz-yw +   \frac{w^3+z^3}{3} },
	\label{eq:Kbarscalaire}
	\end{equation}
 where the contour $C$ is an upwardly oriented vertical line with real part between $0$ and $\min\lbrace\Arescaled, \Brescaled \rbrace$ as previously. 	Then, using the defintion of $F^{(\Arescaled, \Brescaled)}$ from \eqref{eq:defFepsiloneta}, the determinantal formula \eqref{res-scalar} implies  the following: for any $\Arescaled, \Brescaled>0$, 
$$ F^{(\Arescaled, \Brescaled)}(s) = \lim_{t \to +\infty} \mathbb{P}\left(\frac{H(t) }{t^{1/3}}\leqslant s\right)=\left(1+\frac{\partial_s}{\Arescaled+\Brescaled} \right)\sqrt{\det(I-\bar K^{(\Arescaled, \Brescaled)})_{\mathbb{L}^2(s,+\infty)}}$$ 
where $\bar K^{(\Arescaled,\Brescaled)}$ is defined in \eqref{eq:Kbarscalaire}.  Using $\frac{1}{2} \frac{w-z}{(w+z)w} = \frac{1}{w+z} -\frac{1}{2w}$, 
	we obtain that 
	\begin{equation}
\bar K^{(\Arescaled, \Brescaled)}(x,y) = \int_{0}^{+\infty} \mathrm d \la A^{(\Arescaled, \Brescaled)}(x+\la)A^{(\Arescaled, \Brescaled)}(y+\la)\, -\frac 1 2 A^{(\Arescaled, \Brescaled)}(x) \int_{0}^{+\infty}A^{(\Arescaled, \Brescaled)}(y+\la)\, \mathrm d\la,
\label{eq:GLDform}
	\end{equation}
	where the function $A^{(\Arescaled, \Brescaled)}(x)$ is defined by 
\begin{equation}
A^{(\Arescaled, \Brescaled)}(x) = \int \frac{\mathrm d z}{2\I\pi} \frac{\Arescaled +z}{\Arescaled -z}  \frac{\Brescaled+z}{\Brescaled -z} e^{-xz+ \frac{z^3}{3}},
\label{eq:defA}
\end{equation} 
	where the contour is a vertical line with real part between $0$ and $\min \lbrace \Arescaled, \Brescaled \rbrace $. Note that the function  $A^{(\Arescaled, \Brescaled)}$ has exponential decay at $+\infty$, that is for any $c\in (0, \min \lbrace \Arescaled, \Brescaled \rbrace)$, there exist $C\in \R$ such that $\left\vert A^{(\Arescaled, \Brescaled)}(x)\right\vert\leqslant Ce^{-cx}$. 	Let us introduce an operator $\hat A_s$ acting on $\mathbb L^2(0,+\infty)$ with kernel
\begin{equation} \label{khatA} 
\hat A_s(x,y)= A^{(\Arescaled, \Brescaled)}(x+y+s), 
\end{equation}
and an operator $\bar{K}^{(a,b)}_s$  acting on $\mathbb L^2(0,+\infty)$ with kernel 
$\bar{K}^{(a,b)}_s(x, y) := \bar{K}^{(a,b)}(x+s, y+s)$.

\begin{claim} For any $s\in \R$, and $\Arescaled, \Brescaled>0$, 
\begin{equation}
\sqrt{\det(I - \bar K^{(\Arescaled, \Brescaled)}_s)} = \frac{1}{2} \left(\det(I - \hat A_s) + \det(I + \hat A_s)\right),
\end{equation} 
where all operators act on $\mathbb L^2(0,+\infty)$.
\end{claim}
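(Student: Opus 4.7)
\noindent My plan is to reduce the claimed identity to a purely algebraic statement via the matrix determinant lemma. Setting $\phi(x)=A^{(\Arescaled,\Brescaled)}(x+s)$ and $\psi(y)=(\hat A_s\mathbf{1})(y)=\int_0^\infty A^{(\Arescaled,\Brescaled)}(y+s+\lambda)\,d\lambda$, the definition \eqref{eq:GLDform0} rewrites the scalar kernel as $\bar K^{(\Arescaled,\Brescaled)}_s=\hat A_s^2-\tfrac{1}{2}|\phi\rangle\langle\psi|$. The key observation is that $\phi$ is the boundary value $\hat A_s\delta_0$ while $\psi=\hat A_s\mathbf{1}$, so by self-adjointness of $\hat A_s$ the rank-one term factors as $|\phi\rangle\langle\psi|=\hat A_s|\delta_0\rangle\langle\mathbf{1}|\hat A_s$, giving
\[ I-\bar K^{(\Arescaled,\Brescaled)}_s = I-\hat A_s\bigl(I-\tfrac{1}{2}|\delta_0\rangle\langle\mathbf{1}|\bigr)\hat A_s. \]

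\noindent Applying Sylvester's identity $\det(I-UV)=\det(I-VU)$ and then the matrix determinant lemma for the resulting rank-one perturbation of $I-\hat A_s^2$, together with $\det(I-\hat A_s^2)=D_-D_+$ where $D_\pm:=\det(I\pm\hat A_s)$, I get
\[ \det(I-\bar K^{(\Arescaled,\Brescaled)}_s) = D_-D_+\Bigl(1+\tfrac{1}{2}\langle\mathbf{1},\hat A_s^2(I-\hat A_s^2)^{-1}\delta_0\rangle\Bigr). \]
Using the resolvent identities $\hat A_s^2(I-\hat A_s^2)^{-1}=(I-\hat A_s^2)^{-1}-I$ and $(I-\hat A_s^2)^{-1}=\tfrac{1}{2}\bigl[(I-\hat A_s)^{-1}+(I+\hat A_s)^{-1}\bigr]$, the inner product collapses to $\tfrac{1}{2}(G_++G_-)-1$, where $G_\pm:=\langle\mathbf{1},(I\mp\hat A_s)^{-1}\delta_0\rangle$.

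\noindent The crux, and the step I expect to be the main obstacle, is the identity
\[ G_\pm=\frac{D_\pm}{D_\mp},\qquad\text{equivalently}\qquad \det\bigl(I-\hat A_s+|\delta_0\rangle\langle\mathbf{1}|\bigr)=D_-+D_+, \]
which depends crucially on the Hankel structure $\hat A_s(x,y)=A^{(\Arescaled,\Brescaled)}(x+y+s)$ and fails for generic self-adjoint operators. I would first verify it by direct computation in the single-exponential case $A(u)=\alpha e^{-u}$ (where $\hat A_s$ is rank one) and the two-exponential case, which exposes the Hankel-specific cancellations. In general, I would establish it by a differential-equation argument: both $G_\pm$ and $D_\pm/D_\mp$ equal $1$ as $s\to+\infty$ (where $\hat A_s\to 0$) and both satisfy the same first-order ODE in $s$, obtained from Jacobi's formula $\partial_s\log D_\pm=\mp\mathrm{tr}\bigl((I\mp\hat A_s)^{-1}\partial_s\hat A_s\bigr)$ together with the commutation $D\hat A_s+\hat A_s D=-|\phi\rangle\langle\delta_0|$ (with $D$ the spatial derivative, obtained by integration by parts) and the parallel computation $\partial_s G_\pm=\pm\langle\mathbf{1},(I\mp\hat A_s)^{-1}(\partial_s\hat A_s)(I\mp\hat A_s)^{-1}\delta_0\rangle$. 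The principal technical difficulty is handling the distributional $\delta_0\notin\mathbb{L}^2(\mathbb{R}_+)$, which I would deal with by a regularization $\delta_\varepsilon\in\mathbb{L}^2(\mathbb{R}_+)$ with $\delta_\varepsilon\to\delta_0$, checking all manipulations for the regularized operator and passing to the limit.

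\noindent With the identity in hand, $G_++G_-=D_-/D_++D_+/D_-$, and substituting back gives
\[ \det(I-\bar K^{(\Arescaled,\Brescaled)}_s) = \frac{D_-D_+}{4}\Bigl(\frac{D_-}{D_+}+\frac{D_+}{D_-}+2\Bigr) = \frac{(D_-+D_+)^2}{4}, \]
so taking the positive square root (which is justified since both sides are positive on the regime of interest) yields the claim.
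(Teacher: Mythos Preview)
Your proof is correct and follows essentially the same route as the paper: both write $\bar K^{(\Arescaled,\Brescaled)}_s=\hat A_s^2-\tfrac{1}{2}|\hat A_s\delta\rangle\langle\mathbf{1}\hat A_s|$, apply the matrix determinant lemma to the rank-one perturbation of $I-\hat A_s^2$, use the partial-fraction decomposition of $\hat A_s^2(I-\hat A_s^2)^{-1}$, and then invoke the identity $\langle\mathbf{1},(I\mp\hat A_s)^{-1}\delta_0\rangle=\det(I\pm\hat A_s)/\det(I\mp\hat A_s)$. The identity you flag as the crux is exactly \cite[Proposition~1]{ferrari2005determinantal} (Ferrari--Spohn), valid for any kernel of the form $B_s(x,y)=B(x+y+s)$ with sufficient decay at $+\infty$; the paper simply cites this result, whereas your proposed ODE-in-$s$ argument is essentially a rederivation of their proof.
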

\begin{proof}
Equation~\ref{eq:GLDform} implies that, as operators acting on $\mathbb L^2(0,+\infty)$, we have  
$$\bar K^{(\Arescaled, \Brescaled)}_s= \hat A_s^2 - \frac{1}{2} \vert\hat A_s \delta \rangle \langle 1 \hat A_s\vert.$$ 
At this point, we recognize that the operator $\bar K^{(\Arescaled, \Brescaled)}_s$ has the same structure as in \cite{gueudre2012directed} and we may apply the same steps as Equations (32)--(35) therein. More precisely, we may use the matrix determinant Lemma to obtain that 
\begin{equation}
\det\left( I - \bar K_s \right) = \det\left( I-\hat A_s^2 \right) \left( 1+\frac{1}{2} \langle 1\vert \frac{\hat A_s^2}{I-\hat A_s^2} \vert\delta \rangle\right). 
\label{eq:afterShermanMorrisson}
\end{equation}
Then, we use the decomposition 
$$ \frac{\hat A_s^2}{I-\hat A_s^2}  = -I+ \frac{1}{2} \left( \frac{I}{I-\hat A_s} +\frac{I}{I+\hat A_s} \right),$$
and recall that $\langle 1\vert I\vert \delta \rangle =1$. 
Using (the proof of) \cite[Proposition 1]{ferrari2005determinantal} we have

\begin{equation}
\langle 1\vert \frac{I}{I\pm\hat A_s} \vert \delta \rangle = \frac{\det\left(I\mp \hat A_s \right)}{\det\left(I\pm \hat A_s \right)}. 
\label{eq:FerrariSpohnLemma}
\end{equation}

Plugging \eqref{eq:FerrariSpohnLemma} into \eqref{eq:afterShermanMorrisson} yields the statement of the Lemma. 
\end{proof}
\begin{remark} \label{r1} 
	The identity \eqref{eq:FerrariSpohnLemma} is true for any kernel of the type $B_s(x,y)=B(x+y+s)$ such that $B$ has sufficient decay at $+\infty$ so that $\det(I\pm B_s)$ and $\langle 1\vert \frac{I}{I+ B_s} \vert \delta \rangle$ converges to $1$ as $s$ goes to $+\infty$ (see the proof of \cite[Proposition 1]{ferrari2005determinantal} for details). In our case, this condition is satisfied due to the exponential decay of the function $A^{(\Arescaled, \Brescaled)}$ for fixed $\Arescaled, \Brescaled>0$). 
\end{remark}

Hence one has 
\be
\label{eq:Det1DNoSquareRoot}
F^{(\Arescaled, \Brescaled)}(s) = \frac{1}{2} \left(1 + \frac{\partial_s}{\Arescaled+\Brescaled}\right)   \left(\det(I - \hat A_s) + \det(I + \hat A_s)\right)
\ee 
}

\subsection{Limit $\Arescaled, \Brescaled \to0$: the critical stationary case}
\label{sec:criticalstationarycase}
\subsubsection{CDF in terms of a Fredholm determinant}

Moving the contour to the right in the definition of $A^{(\Arescaled, \Brescaled)}$ in \eqref{eq:defA}, we obtain 
\begin{equation} 
A^{(\Arescaled, \Brescaled)}(x) = \tilde A^{(\Arescaled, \Brescaled)}(x) + 2 \frac{\Arescaled+\Brescaled}{\Arescaled-\Brescaled} (h_\Brescaled(x) - h_\Arescaled(x) ),
\end{equation}
with  $h_\Brescaled(x)=\Brescaled e^{- x \Brescaled + \Brescaled^3/3}$ and 
\begin{equation}
\tilde A^{(\Arescaled, \Brescaled)}(x) = \int \frac{\rmd z}{2 \I \pi} \frac{\Arescaled+z}{\Arescaled-z} \frac{\Brescaled+z}{\Brescaled-z} e^{-x z + z^3/3} 
= {\rm Ai}(x) + 2 (\Arescaled + \Brescaled) \int_x^{+\infty}\rmd \lambda \, \Ai(\lambda) + \mathcal{O}(\Brescaled^2,\Arescaled^2,\Brescaled \Arescaled), 
\end{equation}
where in the integral over $z$, the contour passes to the right of $\Brescaled, \Arescaled$. We introduce the operator $\tilde A_s$ acting on $\mathbb L^2(0,+\infty)$ with kernel 
$$ \tilde A_s (x,y)= \tilde A^{(\Arescaled, \Brescaled)}(s+x+y).$$

We thus write
\be
\hat A_s(x,y)= \tilde A_s(x,y) + 2 \frac{\Arescaled+\Brescaled}{\Arescaled-\Brescaled} ( |f_\Brescaled(x) \rangle \langle f_\Brescaled(y) | 
-  |f_\Arescaled(x) \rangle \langle f_\Arescaled(y) | ), 
\ee 
with $f_\Brescaled(x) = \sqrt{\Brescaled} e^{\Brescaled^3/6 - \Brescaled s/2 -\Brescaled x}$.
Using the matrix determinant lemma, we have 
\begin{equation} \label{dets} 
\det(I  \mp  \hat A_s) = 
\det(I \mp \tilde A_s) \left( \left(1 \mp 2 \frac{\Arescaled+\Brescaled}{\Arescaled-\Brescaled} I_{\Brescaled,\Brescaled} \right) 
\left(1 \pm 2 \frac{\Arescaled+\Brescaled}{\Arescaled-\Brescaled} I_{\Arescaled,\Arescaled} \right) + 4 \left(\frac{\Arescaled+\Brescaled}{\Arescaled-\Brescaled}\right)^2
I_{\Brescaled,\Arescaled} I_{\Arescaled,\Brescaled} \right) 
\end{equation} 
where 
{\be
I_{\alpha,\beta} = \langle f_\alpha | f_\beta \rangle \pm \langle f_\alpha | \frac{\tilde A_s}{I \mp \tilde A_s}  | f_\beta \rangle.
\ee 
}

We now consider the limit $\Brescaled,\Arescaled \to 0$ with a fixed arbitrary ratio $r=\Arescaled/\Brescaled$. We use the
exact expressions for the scalar products
\be
\langle f_\alpha | f_\beta \rangle = \frac{\sqrt{\alpha \beta}}{\alpha + \beta}  e^{\frac{\alpha^3}{6} + \frac{\beta^3}{6} - \frac{\alpha+\beta}{2} s} 
\ee 
as well as
{
\be
\bra{f_\alpha }\frac{\tilde A_s}{I \mp \tilde A_s} \ket{ f_\beta }
= \sqrt{\alpha \beta} e^{\frac{\alpha^3}{6} + \frac{\beta^3}{6} }
e^{- \frac{\alpha+\beta}{2} s} 
\bra{e^{-\alpha x} } \frac{\tilde A_s}{I \mp \tilde A_s}  \ket{ e^{-\beta x}}
\ee 
}

One has
\bea
&& I_{\Brescaled,\Brescaled}= e^{-\Brescaled s} \left( \frac{1}{2} \pm \Brescaled R^\mp_s + \mathcal{O}(\Brescaled^2) \right) \quad , \quad 
R^\mp_s =\bra{1} 
\frac{\tilde A_s}{I \mp \tilde A_s}  \ket{1} \\
&& I_{\Arescaled,\Arescaled}= e^{-\Arescaled s} \left( \frac{1}{2} \pm \Arescaled R^\mp_s + \mathcal{O}(\Arescaled^2) \right) \\
&& I_{\Brescaled,\Arescaled}= I_{\Arescaled,\Brescaled}= \sqrt{\Arescaled \Brescaled} 
e^{-(\Brescaled+\Arescaled) s/2} \left( \frac{1}{\Arescaled+\Brescaled} \pm R^\mp_s + \mathcal{O}(\Arescaled,\Brescaled) \right) 
\eea
Plugging these asymptotics  in \eqref{dets} we obtain 
\bea
&& \det(I -  \hat A_s) = \mathcal O(\Brescaled^2) \\
&& \det(I + \hat A_s) =  \det(I + \tilde A_s)\times  2\Brescaled(1+r)(2 R^{+}_s+s)+ \mathcal O(\Brescaled^2).
\eea
Furthermore, if we keep only the first order in $\Brescaled$, we may replace  $\tilde A_s(x,y)$ by $ {\rm Ai}(x+y+s)$ as $\Arescaled,\Brescaled \to 0$. Thus, we have found that 
\begin{equation}
F(s) := F^{0,0}(s) = \partial_s  \left[ \det(I + \Ai_s)(2 R^{+,0}_s+s) \right]
\label{eq:defF}
\end{equation}
where
{\be
R^{+,0}_s = \bra{1} \frac{\Ai_s }{I + \Ai_s } \ket{1}
\label{eq:defresolvant}
\ee 
}
and ${\rm Ai}_s$ is the operator with kernel ${\rm Ai}(x+y+s)$. Using the Sherman-Morrison formula, we have that 
{
\begin{equation}
\bra{1} \frac{\Ai_s }{I + \Ai_s } \ket{1} = \bra{1} \frac{I }{I + \Ai_s } \ket{\Ai_s 1} = \frac{\det(I + \Ai_s+ | \Ai_s 1 \rangle \langle   1|)}{\det(I + \Ai_s)}-1.
\end{equation}
}
which yields the following alternative formula in terms of Fredholm determinants 
\begin{equation}
F(s) = \partial_s  \left[  2 \det(I + \Ai_s+| \Ai_s 1 \rangle \langle   1| ) + (s-2)\det(I + \Ai_s)\right].
\label{eq:Fdeterminatalformula}
\end{equation}
which is given in the Introduction in \eqref{eqs}.

\subsubsection{CDF in terms of the solution of the Painlev\'e II equation}

In this Section, we show that the distribution $F(s)$ can also be written in terms of the Tracy-Widom distributions $F_1(s)$ and $F_2(s)$ only. We also provide formulae in terms of the Hastings-McLeod solution of Painlev\'e II equation (see Appendix~\ref{app:Painleve}). Define $q(s)$ to be the solution of the Painlev\'e II equation for $s\in \R$,
\begin{equation}
q''(s)= s q(s) +2q(s)^3 
\end{equation}
which satisfies the asymptotic condition $q(s) \sim_{s \to +\infty} \Ai(s)$. This solution is called the Hastings-McLeod solution of the Painlev\'e II equation \cite{hastings1980boundary}.

Let us first recall some results from \cite{imamura2004fluctuations}. 
Combining the formula 4.18 and the one just below 4.52 in \cite{imamura2004fluctuations} 
we can write
\begin{equation} \label{neweq} 
\begin{split}
\bra{1}\frac{I}{I+\Ai_s} \ket{\delta}=e^{-\int_s^{+\infty} \rmd t \, q(t)},
\end{split}
\end{equation}

\begin{remark} 
One has, see \cite[Eq. 4.22]{imamura2004fluctuations}, 
$q(s)= \bra{\delta}\frac{\Ai_s}{1-K_{\Ai,s}}\ket{\delta}$. 
\end{remark}

Another result is obtained combining Remark \ref{r1} and \eqref{neweq} as
\begin{equation}
\bra{1}\frac{I}{I\pm\Ai_s}\ket{\delta}=\frac{\Det(I\mp \Ai_s)}{\Det(I\pm \Ai_s)}=e^{\mp\int_s^{+\infty} \rmd t \, q(t)}=\left(\frac{F_1(s)^2}{F_2(s)}\right)^{\pm 1}.
\end{equation}
where the last identity is from \cite{baik2008asymptotics}.\\

The next result allows to rewrite the resolvant $R_s^{+,0}$ appearing in \eqref{eq:defF} and \eqref{eq:defresolvant} in terms of $q$. 
\begin{proposition}We have 
\begin{equation} 
\bra{1}\frac{\Ai_s}{I+\Ai_s}\ket{1}=\frac{1}{2}\left(\int_{-\infty}^s \rmd r \, e^{-2\int_r^{+\infty} \rmd t \, q(t)} -s\right).
\end{equation}
\label{prop:computationbraresolvantket}
\end{proposition}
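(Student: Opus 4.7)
The strategy is to differentiate both sides of the claimed identity in $s$, check that the derivatives coincide, and fix the resulting constant of integration by a boundary limit.

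Denote by $B_s$ the operator on $\mathbb{L}^2(\mathbb{R}_+)$ with kernel $\Ai(x+y+s)$ (the operator $\Ai_s$ of the text), set $v := (I+B_s)^{-1}\mathbf{1}$, and abbreviate $\phi(s) := e^{-\int_s^{+\infty}\rmd t\, q(t)}$. By construction $v$ solves the integral equation $v + B_s v = \mathbf{1}$, so that $(B_s v)(x) = 1-v(x)$; super-exponential Airy decay gives $v(y)\to 1$ as $y\to+\infty$ with $1-v$ decaying super-exponentially, hence $G(s) := \bra{1}B_s(I+B_s)^{-1}\ket{1} = \int_0^{+\infty}(1-v(x))\,\rmd x$ is finite. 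Combining \eqref{neweq} with the obvious $\bra{\delta}(I+B_s)^{-1}\ket{1} = v(0)$, I get $v(0) = \phi(s)$, and therefore $(B_s v)(0) = 1-\phi(s)$.

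For the derivative, $\partial_s(I+B_s)^{-1} = -(I+B_s)^{-1}\partial_s B_s\,(I+B_s)^{-1}$ together with symmetry of $B_s$ yields
\begin{equation*}
G'(s) = \bra{v}\partial_s B_s\ket{v} = \int_0^{+\infty}\!\!\int_0^{+\infty}\Ai'(x+y+s)\,v(x)v(y)\,\rmd x\,\rmd y.
\end{equation*}
Integrating by parts in $x$ (the boundary at $+\infty$ vanishes by Airy decay) gives
\begin{equation*}
G'(s) = -\phi(s)(B_s v)(0) - \int_0^{+\infty} v'(x)(B_s v)(x)\,\rmd x.
\end{equation*}
Substituting $(B_s v)(x) = 1-v(x)$ and evaluating $\int_0^{+\infty}v'(x)(1-v(x))\,\rmd x = [v - v^2/2]_0^{+\infty} = \tfrac{1}{2}-\phi(s)+\tfrac{\phi(s)^2}{2}$ via $v(0)=\phi(s)$, $v(+\infty)=1$, a brief simplification produces
\begin{equation*}
G'(s) = \frac{\phi(s)^2 - 1}{2}.
\end{equation*}
The right-hand side of the claim has derivative $\tfrac{1}{2}(e^{-2\int_s^{+\infty}\rmd t\, q(t)}-1) = \tfrac{1}{2}(\phi(s)^2-1)$, so the two expressions agree up to an additive constant.

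The remaining step is to fix this constant by matching asymptotics as $s\to+\infty$. On the one hand $G(s)\to 0$ trivially, because $B_s$ has super-exponentially small trace norm. On the other hand, the super-exponential decay of $\phi(r)^2$ at $-\infty$ (via the Hastings--McLeod asymptotic $q(r)\sim\sqrt{-r/2}$) and of $1-\phi(r)^2$ at $+\infty$ ensures that the limit
$\lim_{s\to+\infty}\bigl(\int_{-\infty}^{s}\phi(r)^2\,\rmd r - s\bigr) = \int_{-\infty}^{+\infty}(\phi(r)^2 - \mathbf{1}_{r\geqslant 0})\,\rmd r$ exists. The main obstacle of the proof is to show that this sum rule for the Hastings--McLeod solution vanishes. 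I would attack it using the Painlev\'e~II relation $(\phi^2)' = 2 q\phi^2$ combined with the connection formulae between $\phi$ and the Tracy--Widom CDFs $F_1,F_2$ invoked above \eqref{pasdelabel}; an alternative but technically more demanding route is to match the divergent behavior $G(s)+s/2\to 0$ as $s\to-\infty$, which requires careful control of $v$ in the regime where $B_s$ is large and oscillatory.
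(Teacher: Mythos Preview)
Your strategy---differentiate in $s$, match derivatives, fix the integration constant at $s\to+\infty$---is exactly the paper's. Where you differ is in how you compute $G'(s)$: the paper invokes an operator identity of Ferrari--Spohn (Lemma~\ref{Lemma2}) together with the determinant relations $\bra{1}(I\pm\Ai_s)^{-1}\ket{\delta}=\Det(I\mp\Ai_s)/\Det(I\pm\Ai_s)$ to arrive at $G'(s)=\tfrac{1}{2}(\phi(s)^2-1)$, whereas you reach the same formula by a direct integration by parts on the kernel, using only $v+B_sv=\mathbf 1$ and $v(0)=\phi(s)$. Your route is more elementary and self-contained; the paper's route ties the computation to the Tracy--Widom determinant picture.

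The step you call the ``main obstacle''---the vanishing of $\lim_{s\to+\infty}\bigl(\int_{-\infty}^{s}\phi(r)^2\,\rmd r-s\bigr)$---is not a gap: it follows from the Baik--Rains identity quoted in the paper as \eqref{eq:equivalence},
\[
s-2q'(s)+2q(s)^2=\phi(s)^{-2}\int_{-\infty}^{s}\phi(r)^2\,\rmd r,
\]
which gives $\int_{-\infty}^{s}\phi(r)^2\,\rmd r-s=s(\phi(s)^2-1)-2\phi(s)^2q'(s)+2\phi(s)^2q(s)^2$; each term on the right vanishes as $s\to+\infty$ by the Airy asymptotics of $q$. So your missing constant is fixed by a single citation, and neither a new Painlev\'e manipulation nor a $s\to-\infty$ analysis is needed.
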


\begin{proof}
	Before proving this proposition, we need the following known lemma. 
	\begin{lemma}[{\cite[Lemma 3]{ferrari2005determinantal}}]\label{Lemma2}
	We have the following relation
		\begin{equation}
		\partial_s\frac{\Ai_s}{I+\Ai_s}=-\frac{\Ai_s}{I-K_{\Ai,s}}  D - \frac{\Ai_s}{I-K_{\Ai,s}}\ket{\delta} \bra{\delta} \frac{I}{I+\Ai_s}
		\end{equation}
		where $D$ is the derivative operator.
	\end{lemma}
Since $D\ket{1}=0$ we have 
\begin{equation}
\begin{split}
\partial_s \bra{1}\frac{\Ai_s}{I+\Ai_s}\ket{1}&=-\bra{1}\frac{\Ai_s}{I-K_{\Ai,s}}\ket{\delta}\bra{\delta}\frac{I}{I+\Ai_s}\ket{1}\\
&= -\frac{1}{2}\bra{1}\left(\frac{I}{I-\Ai_s}-\frac{I}{I+\Ai_s}\right)\ket{\delta}\bra{\delta}\frac{I}{I+\Ai_s}\ket{1}\\
&= -\frac{1}{2}\left( \frac{\Det(I+\Ai_s)}{\Det(I-\Ai_s)}-\frac{\Det(I-\Ai_s)}{\Det(I+\Ai_s)} \right)\times \frac{\Det(I-\Ai_s)}{\Det(I+\Ai_s)}\\
&= -\frac{1}{2}\left( 1-\frac{F_1(s)^4}{F_2(s)^2} \right)
\end{split}
\end{equation}
Using the expressions of $F_1$ and $F_2$ in terms of the Painlevé II equation, we have
\begin{equation}
\begin{split}
\partial_s \bra{1}\frac{\Ai_s}{I+\Ai_s}\ket{1}
&=-\frac{1}{2}\left( 1-e^{-2\int_s^{+\infty} \rmd t \, q(t)} \right)\\
\end{split}
\end{equation}
Recalling the notation $R_s^{+,0}=\bra{1}\frac{\Ai_s}{1+\Ai_s}\ket{1}$, we have 
\begin{equation}
\partial_s(2R_s^{+,0}+s)=e^{-2\int_s^{+\infty} \rmd t \, q(t)} =\frac{F_1(s)^4}{F_2(s)^2}
\end{equation}
 We integrate the last quantity between $-\infty$ and $s$ using that $q(s) \to +\infty$ for $s \to -\infty$ as in 
 \eqref{eq:leftasymptoticsq} and obtain 
\begin{equation}
2R_s^{+,0}+s=\int_{-\infty}^{s} \rmd r \, e^{-2\int_r^{+\infty} \rmd t \, q(t)} + \kappa.
\end{equation}
Since $R_s^{+,0} \to 0$ in the limit $s\to +\infty$, and using \eqref{eq:equivalence} and the  asymptotics \eqref{eq:rightasymptotics1} and \eqref{eq:rightasymptotics2}, we obtain that $\kappa=0$, which concludes the proof. 
\end{proof}

Using Proposition~\ref{prop:computationbraresolvantket} we arrive at the following equivalent expressions for $F(s)$ (defined in \eqref{eq:defF}).  The first one reads
\begin{equation}
F(s)=\partial_s\left[\frac{F_2(s)}{F_1(s)} \int_{-\infty}^s \rmd t \, \frac{F_1(t)^4}{F_2(t)^2} \right]. 
\label{eq:FintermsofF1F2}
\end{equation}
where the first line was given in \eqref{eq:FintermsofF1F2intro}.
The second formula is expressed in terms of the Hastings–McLeod solution of the Painlevé II equation (see Appendix~\ref{app:Painleve}) and reads
\begin{equation} 
\begin{split}
F(s)&=\partial_s\left[e^{ -\frac{1}{2} \int_s^{+\infty} \mathrm d r [ (r-s)q(r)^2-q(r) ]} \int_{-\infty}^{s} \rmd r \, e^{-2\int_r^{+\infty} \rmd t \, q(t)} \right]\\
&=\partial_s\left[e^{ -\frac{1}{2} \int_s^{+\infty} \mathrm d r \,  [(r-s)q(r)^2+3q(r)]}(s-2q'(s)+2q(s)^2) \right] . 
\end{split}
\end{equation}
which was given in \eqref{pasdelabel}. A third formula, useful for the asymptotics, is obtained 
applying the derivative in front of the bracket and using Eqs.~\eqref{eq:equivalence} and \eqref{eq:PrimitivePainleve}
\begin{equation}
\begin{split}
F(s)=&\; e^{ -\frac{1}{2} \int_s^{+\infty} \mathrm d r \, [ (r-s)q(r)^2+3 q(r)]} \\
& \times  \left( 1+\frac{1}{2}(q'(s)^2-sq(s)^2-q(s)^4-q(s))(s-2q'(s)+2q(s)^2)\right)
\end{split}
\label{eq:CDFwithoutDerivative}
\end{equation}

\subsubsection{Properties of $F(s)$: first moments}

We check in Appendix~\ref{app:Painleve} using the formulae \eqref{eq:CDFwithoutDerivative}
that the function $F(s)$ has the behaviour at $ s \to \pm \infty$ that is required for a CDF, i.e. its limit at $s= -\infty$ is $0$ and its limit at $s=+\infty$ is $1$. The detailed asymptotics for $s \to \pm \infty$ is
performed in Appendix \ref{app:F}. The CDF takes the form $F(s)=\partial_s [{\sf F}(s)]$.
Provided ${\sf F}(s) \to 0$ sufficiently fast for $s \to -\infty$ and 
${\sf F}(s) - s \to 0$ sufficiently fast for $s \to +\infty$, conditions which can be checked
from Appendix \ref{app:F}, integration by parts give the following formula for the 
$k$-th positive integer moment $M_k$ of the distribution $F(s)$ 
\be 
M_k = k(k-1)  \int_\R ({\sf F}(s)-\max(s,0)) s^{k-2} \mathrm d s 
\ee 
This formula can be used to obtain the moments and the cumulants through a numerical evaluation of
$F(s)$. One notices that the mean vanishes, $M_1=0$, which indeed must be the case since $\mathbb E[H(t)]=0$ in the critical stationary case (see Section \ref{sec:convergenceloggammaSHE} where we have computed the expectation of $h(x,t)$ using the stationary structure of the log-gamma polymer). We used two
numerical methods to evaluate $F(s)$. The first one uses Eq. \eqref{eq:Fdeterminatalformula}
where the Fredholm determinants are calculated using the method described in Ref.
\cite{Bornemann,Bornemann2}. The second method uses the formula 
\eqref{eq:FintermsofF1F2} and uses the Mathematica routines for $F_{1,2}(s)$, and is
in agreement with the first one. The CDF $F(s)$ and its derivative $F'(s)$ are plotted 
in Fig. \ref{fig:plots}. The mean, variance, skewness and excess kurtosis are given
in Table \ref{table:moments of the distribution}.

\subsection{Limit $\Arescaled, \Brescaled \to +\infty$: convergence to the GSE}

As we already discussed in Remark \ref{r3} the limit $\Arescaled, \Brescaled \to +\infty$ can be performed on the Fredholm pfaffian formula and leads to GSE Tracy-Widom fluctuations. This limit can also be performed
on the formula \eqref{eq:Det1DNoSquareRoot} involving the scalar kernel $\hat A_s$ defined in \eqref{khatA} in terms of the function $A^{(\Arescaled, \Brescaled)}(x)$ defined in \eqref{eq:defA}.
It is clear from the definition of $A^{(\Arescaled, \Brescaled)}(x)$ that if $\Arescaled,\Brescaled \to +\infty$ simultaneously, then $A^{(\Arescaled, \Brescaled)}(x)$
converges to the standard Airy function. The CDF $F^{(a,b)}(s)$ in \eqref{eq:Det1DNoSquareRoot}, for 
$\Arescaled, \Brescaled \to +\infty$ then takes the form of the GSE Tracy-Widom distribution found in \cite[Eq (35)]{GueudrePLD}. This result thus matches smoothly with the result \eqref{GSE1} valid for any fixed $A,B>-1/2$ in the large time limit.

\subsection{Limit $(\Arescaled,\Brescaled) \to (0,+\infty)$: convergence to the GOE}
Another interesting limit, that we call $F^{(\Arescaled)}(s)= \lim_{\Brescaled \to +\infty} F^{(\Arescaled,\Brescaled)}(s)$,
is the limit $\Brescaled \to + \infty$ at fixed $\Arescaled$ and by the $A \leftrightarrow B$ symmetry,  the case $A\to+\infty$ at fixed $\Brescaled $ is similar. In particular we consider now the limit $\Arescaled\to 0$.\\

The manipulations follow closely the ones of Section.~\ref{sec:criticalstationarycase}. We start by moving the contour to the right in the definition of $A^{(\Arescaled, \infty)}$ in \eqref{eq:defA} already taking into account the $b\to+\infty$ limit. We obtain $ A^{(\Arescaled, \infty)}(x) = \breve A^{(\Arescaled, \infty)}(x) + 2  h_\Arescaled(y)$, 
with  $h_\Arescaled(x)=\Arescaled e^{- x \Arescaled + \Arescaled^3/3}$ and 
\begin{equation}
\breve A^{(\Arescaled, \infty)}(x) = \int \frac{\rmd z}{2 \I \pi} \frac{\Arescaled+z}{\Arescaled-z}  e^{-x z + z^3/3} 
=- {\rm Ai}(x) - 2 \Arescaled  \int_x^{+\infty}\rmd \lambda \, \Ai(\lambda) + \mathcal{O}(\Arescaled^2), 
\end{equation}
where in the integral over $z$, the contour passes to the right of $\Arescaled$. We introduce the operator $\breve{A}_s$ acting on $\mathbb L^2(0,+\infty)$ with kernel  $ \breve{A}_s (x,y)= \breve A^{(\Arescaled, \infty)}(s+x+y)$. 
We thus write
\be
\hat A_s(x,y)= \breve{A}_s(x,y) + 2   |f_\Arescaled(x) \rangle \langle f_\Arescaled(y) | , 
\ee 
with $f_\Arescaled(x) = \sqrt{\Arescaled} e^{\Arescaled^3/6 - \Arescaled s/2 -\Arescaled x}$.
Using the matrix determinant lemma, we have 
\begin{equation} \label{dets222} 
\det(I  \pm  \hat A_s) = 
\det(I \pm \breve{A}_s) \left(1 \pm 2  \langle f_a| f_a \rangle -  2 \langle f_a | \frac{\breve{A}_s}{I \pm \breve{A}_s}  | f_a \rangle \right) 
\end{equation} 

We now consider the limit $\Arescaled \to 0$ and we use the
exact expressions for the scalar product $\langle f_a | f_a \rangle = \frac{1}{2}  e^{\frac{a^3}{3} - a s} $ as well as
\be
\bra{f_a }\frac{ \breve{A}_s}{I \pm  \breve{A}_s} \ket{ f_a }
= a e^{\frac{a^3}{3}  -as} 
\bra{e^{-a x} } \frac{\breve{A}_s}{I \pm \breve{A}_s}  \ket{ e^{-a x}}
\ee 
One has
\begin{equation} \label{dets333} 
\det(I  \pm  \hat A_s) = 
\det(I \pm \breve{A}_s) \left(1 \pm e^{-as} -  2 e^{-as} a \bra{1} 
\frac{\breve{A}_s}{I \pm \breve{A}_s}  \ket{1}  +\mathcal{O}(a^2)\right) 
\end{equation} 
Looking at formula \eqref{eq:Det1DNoSquareRoot} in the limit $b \to +\infty$
we see that we need only the following estimates from \eqref{dets333}  up to $\mathcal O(a)$ as $a \to 0$
\bea
&& \det(I -  \hat A_s) = \mathcal O(\Arescaled) \\
&& \det(I + \hat A_s) =  \det(I-\Ai_s)(2+\mathcal{O}(a)).
\eea
which, inserted in \eqref{eq:Det1DNoSquareRoot}, lead to
\begin{equation}
F^{(0)}(s):=F^{(0,\infty)}(s)=F^{(\infty,0)}(s)=\det(I-\Ai_s) = F_1(s)
\end{equation}
This coincides with the determinantal representation of the GOE Tracy-Widom CDF. 
This formula matches smoothly with the result \eqref{GOE11} which states that the CDF of the one-point KPZ height field is given by the GOE Tracy-Widom distribution for $A=-1/2$ and any $B>-1/2$.

\newpage
\appendix
\begin{center}
{\bf \Large Appendix}
\end{center}

\addcontentsline{toc}{section}{Appendix}

\section{Overlap of the half-line Bethe states with the Brownian initial condition} 
\label{app:overlap}

Here we give some details on the calculation of the overlap 
$\langle \Psi_\mu | \Phi_0 \rangle$ between the half-line Bethe states and the Brownian
initial condition. We recall that $\Phi_0$, given in \eqref{initial}, is a fully symmetric function of its arguments,
and that in the sector $0 \leqslant x_1\leqslant \dots \leqslant x_n$ it equals
\be 
\Phi_0(x_1,\dots,x_n)  = \exp\left(\frac{1}{2}\sum_{j=1}^n (2n-2j+1)x_j - (1/2+B) x_j \right), 
\ee
where $b$ is the drift of the Brownian. Since we will find that the
overlap is real, we will instead calculate its complex conjugate and use that
$\langle \Psi_\mu | \Phi_0 \rangle^*= \langle \Phi_0 |\Psi_\mu  \rangle = \langle \Psi_\mu | \Phi_0 \rangle$.
Since $\Psi_\mu$ is also a symmetric
function of its arguments, by definition the overlap
can thus be written as
\be
\langle \Phi_0 |\Psi_\mu  \rangle =
n! 
\int_{0<y_1<y_2<\dots<y_n} \, \rmd y_1\dots \rmd y_n \, \Psi_\mu(y_1,\dots,y_n) 
e^{\sum_{j=1}^n \frac{1}{2}(2n +1-2j)y_j - (1/2+B) y_j}\
\ee 
Inserting the explicit form of the Bethe eigenstate \eqref{wave} as a superposition of plane waves
we obtain
\bea \label{over1} 
&& \langle \Phi_0 |\Psi_\mu  \rangle =
\frac{n!}{(2 \I)^{n}}  
\sum_{P \in S_n} 
\prod_{p=1}^n \left( \sum_{\varepsilon_p=\pm 1} \varepsilon_p  \right) \prod_{\ell=1}^n \left(1+\I\frac{\varepsilon_\ell \lambda_{P(\ell)}}{A}\right) \\
&& \times 
\prod_{k<\ell}(1+\frac{\I}{\varepsilon_{\ell} \lambda_{P(\ell)}-
\varepsilon_{k}  \lambda_{P(k)}})\, (1+\frac{\I}{\varepsilon_{\ell} \lambda_{P(\ell)} +
\varepsilon_{k}  \lambda_{P(k)}})
G_{n,w}(\varepsilon_{1}  \lambda_{P(1)},\dots, \varepsilon_{n}\lambda_{P(n)}) \nonumber 
\eea 
where we have defined the integrals
\be
G_{n,w}(\lambda_1,\dots,\lambda_n) =\int_{0<y_1<y_2<\dots<y_n}
\rmd y_1\dots \rmd y_p \, e^{\sum_{j=1}^n (-B-1/2+\I \lambda_j)y_j+\frac{1}{2}(2n+1-2j)y_j}
\ee
These integrals can be explicitly evaluated
\be
G_{n,w}(\lambda_1,\dots,\lambda_n) =\prod_{j=1}^n \frac{-1}{-j (B+1/2)+\I\lambda_n+\dots+\I \lambda_{n+1-j}+ j^2/2}
\ee
Now in \eqref{over1} for each permutation $P$ we can relabel all the $\varepsilon_p \to \varepsilon_{P(p)}$ 
and denoting by $\sum_{\varepsilon = \lbrace \pm 1\rbrace^n}$ the operation of summation over
all the variables $\varepsilon_i$ (an operation independent of their labeling) we can rewrite \eqref{over1} as
\bea
&& \langle \Phi_0 |\Psi_\mu  \rangle =
\frac{n!}{(2 \I)^{n}} 
\sum_{\varepsilon = \lbrace \pm 1\rbrace^n} \prod_{\ell=1}^n \varepsilon_\ell \left(1+\I\frac{\varepsilon_\ell \lambda_{\ell}}{A}\right)
\prod_{k<\ell}\left(1+\frac{\I}{\varepsilon_k \lambda_k+\varepsilon_\ell \lambda_\ell}\right)   \\
&& 
\sum_{P\in S_n}\prod_{k<\ell}\left(1+\frac{\I}{\varepsilon_{P(\ell)} \lambda_{P(\ell)}-
\varepsilon_{P(k)}  \lambda_{P(k)}}\right)\, G_{n,w}(\varepsilon_{P(1)}  \lambda_{P(1)},\dots, \varepsilon_{P(n)}\lambda_{P(n)})
\nonumber 
\eea 
where we have used the fact that the products
\be
\prod_{k<\ell} \left(1+\frac{\I}{\varepsilon_{P(\ell)} \lambda_{P(\ell)} +
\varepsilon_{P(k)}  \lambda_{P(k)}}\right) = \prod_{k<\ell}\left(1+\frac{\I}{\varepsilon_k \lambda_k+\varepsilon_\ell \lambda_\ell}\right) 
\ee 
and
\begin{equation}
\prod_{\ell=1}^n \left(1+\I\frac{\varepsilon_{P(\ell)} \lambda_{P(\ell)}}{A}\right)=\prod_{\ell=1}^n \left(1+\I\frac{\varepsilon_{\ell} \lambda_{\ell}}{A}\right)
\end{equation}
are independent of the permutation $P$.  Now we use the following symmetrization identity, given in \cite{SasamotoHalfBrownian} (this is a limit of \cite[Eq. (9)]{tracy2009asep} which was a slight generalization of \cite[Eq.~(1.6)]{tracy2008integral}),
\begin{equation} \label{miracle} 
\sum_{P\in S_n}\prod_{k<\ell}\left(1+\frac{\I}{\lambda_{P(\ell)}-\lambda_{P(n)}}\right) \, G_{n,w}(\lambda_{P(1)},\dots,\lambda_{P(n)})=\prod_{j=1}^n \frac{1}{B-\I\lambda_j}
\end{equation}
Applying it to the set $\{ \varepsilon_k \lambda_k \}$ we obtain
\be \label{over2}
\langle \Phi_0 |\Psi_\mu  \rangle =
\frac{n!}{(2 \I)^{n}}  
\sum_{\varepsilon = \lbrace \pm 1\rbrace^n} \prod_{k<\ell}\left(1+\frac{\I}{\varepsilon_k \lambda_k+\varepsilon_\ell \lambda_\ell}\right) 
\prod_{j=1}^n \frac{\varepsilon_j+\I\frac{\lambda_j}{A}}{B-\I \varepsilon_j \lambda_j}.
\ee
It remains to perform the symmetrization over $\varepsilon$.

\begin{lemma}
For any set of
complex numbers $\lbrace \lambda_j\rbrace_{1\leqslant j \leqslant n}$, 
\begin{equation}
\sum_{\varepsilon = \lbrace \pm 1\rbrace^n}\prod_{k<\ell}\left(1+\frac{\I}{\varepsilon_k \lambda_k+\varepsilon_\ell \lambda_\ell}\right)\prod_{j=1}^n \frac{\varepsilon_j+\I\frac{\lambda_j}{A}}{B-\I\varepsilon_j \lambda_j}=\left(\frac{2\I}{A}\right)^n \frac{\Gamma(A+B+1)}{\Gamma(A+B-n+1)} \prod_{j=1}^n \frac{\lambda_j}{B^2+\lambda_j^2}.
\label{eq:symmetrizationweuse}
\end{equation}
\label{lem:symetrization}
\end{lemma}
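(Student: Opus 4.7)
My plan is to prove Lemma~\ref{lem:symetrization} by induction on $n$, handling the summation over one sign variable at a time via a contour-integral rewriting. Before starting the induction I would make the change of variables $z_j = \varepsilon_j \lambda_j$ and observe that $\varepsilon_j + \I\lambda_j/A = \varepsilon_j (A + \I z_j)/A$, which casts the left-hand side of \eqref{eq:symmetrizationweuse} in the symmetric form
\[
\mathrm{LHS} \;=\; \frac{1}{A^n}\sum_{\varepsilon\in\{\pm 1\}^n}\Big(\prod_j \varepsilon_j\Big)\prod_{k<\ell}\frac{z_k+z_\ell+\I}{z_k+z_\ell}\prod_j \frac{A+\I z_j}{B-\I z_j}.
\]
The base case $n=1$ is then a direct computation: expanding the two terms $\varepsilon_1=\pm 1$ over a common denominator $(B-\I\lambda_1)(B+\I\lambda_1) = B^2+\lambda_1^2$ yields $\frac{2\I(A+B)\lambda_1}{A(B^2+\lambda_1^2)}$, which matches the right-hand side since $\Gamma(A+B+1)/\Gamma(A+B) = A+B$.

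For the inductive step, I would isolate the inner sum over $\varepsilon_n$, writing $\mathrm{LHS}$ as the sum over $\varepsilon_1,\dots,\varepsilon_{n-1}$ of a prefactor times an inner sum $\mathcal S_n(\varepsilon_1,\dots,\varepsilon_{n-1}) = \sum_{\varepsilon_n}\varepsilon_n h(\varepsilon_n\lambda_n)$, and then apply the elementary identity
\[
\sum_{\varepsilon_n\in\{\pm 1\}}\varepsilon_n h(\varepsilon_n\lambda_n) \;=\; 2\lambda_n\oint_{\{\pm\lambda_n\}}\frac{\mathrm d z_n}{2\pi\I}\,\frac{h(z_n)}{z_n^2-\lambda_n^2}
\]
to express $\mathcal S_n$ as a contour integral with the contour encircling only $\pm \lambda_n$. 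The integrand decays like $z_n^{-2}$ at infinity (its numerator and denominator have degrees $n$ and $n+2$ in $z_n$), so I can deform the contour outward and rewrite $\mathcal S_n$ as minus the sum of residues at the remaining poles: a single pole at $z_n = -\I B$ (from $B-\I z_n$) and poles at $z_n = -\varepsilon_k\lambda_k$ for each $k<n$ (from the factors $\varepsilon_k\lambda_k + \varepsilon_n\lambda_n$ in the Pfaffian-like denominator).

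The residue at $z_n=-\I B$ naturally produces the factor $A+\I(-\I B)=A+B$, which is exactly what is needed to promote the Gamma-function ratio $\Gamma(A+B)/\Gamma(A+B-n+1)$ at level $n-1$ to $\Gamma(A+B+1)/\Gamma(A+B-n+1)$ at level $n$. After rewriting the leftover factor $\prod_{k<n}\tfrac{\varepsilon_k\lambda_k + \I(1-B)}{\varepsilon_k\lambda_k - \I B}$ using $\varepsilon_k\lambda_k -\I B = -\I(B+\I\varepsilon_k\lambda_k)$, the resulting expression will have the same structure as the left-hand side of \eqref{eq:symmetrizationweuse} at level $n-1$, up to absorbable shifts in $A,B$, so the induction hypothesis can be applied. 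The principal obstacle will be showing that the residues at the \emph{spurious} poles $z_n = -\varepsilon_k\lambda_k$ contribute zero to the full symmetrized sum. I expect this cancellation to follow from an antisymmetry argument: the map $\varepsilon_k\mapsto -\varepsilon_k$ exchanges the residue at $z_n=-\varepsilon_k\lambda_k$ with that at $z_n=+\varepsilon_k\lambda_k$, while the sign prefactor $\prod_j \varepsilon_j$ in the reformulated $\mathrm{LHS}$ is antisymmetric under this exchange, forcing the contributions to cancel in pairs. As a cleaner alternative I would attempt to adapt the Sasamoto-type symmetrization identity \eqref{miracle} used earlier in this section, applying it with the set $\{\varepsilon_k\lambda_k\}$ combined with their reflections, to perform the full sum over $\varepsilon\in\{\pm 1\}^n$ in one stroke and read off the right-hand side of \eqref{eq:symmetrizationweuse} directly.
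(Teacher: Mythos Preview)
Your inductive/residue approach is valid and genuinely different from the paper's proof. The paper does not induct at all: it quotes a BC-type Hall--Littlewood symmetrization identity due to Venkateswaran, first performs the inner $S_n$ sum via Macdonald's identity $\sum_{P}\prod_{k<l}\tfrac{w_{P(k)}-w_{P(l)}-C}{w_{P(k)}-w_{P(l)}}=n!$, and then specializes $C=1$, $z_j=\I\lambda_j$ to read off \eqref{eq:symmetrizationweuse} in one stroke. Your route is more elementary and self-contained---no external black box---at the cost of more bookkeeping. Two points you left vague but which do go through: (i) the cancellation of the spurious residues at $z_n=-\varepsilon_k\lambda_k$ indeed follows from $\varepsilon_k\leftrightarrow-\varepsilon_k$, because the combined product $\prod_{j\neq k}\tfrac{z_j+z_k+\I}{z_j+z_k}\,\tfrac{z_j-z_k+\I}{z_j-z_k}$ and the factor $\tfrac{A^2+\lambda_k^2}{B^2+\lambda_k^2}$ are even in $z_k$ while $\prod_j\varepsilon_j$ is odd; (ii) after the residue at $z_n=-\I B$, the new per-$j$ factor $\tfrac{A+\I z_j}{B-\I z_j}\cdot\tfrac{z_j+\I(1-B)}{z_j-\I B}$ is \emph{not} directly of the original linear-over-linear form, but since $(B-\I z_j)(z_j-\I B)=-\I(B^2+\lambda_j^2)$ and $z_j+\I(1-B)=-\I((B-1)+\I z_j)$, one may pull out the $\varepsilon$-independent factors $B^2+\lambda_j^2$ and $(B-1)^2+\lambda_j^2$ to land exactly on the level-$(n{-}1)$ sum with the shift $B\mapsto B-1$, closing the recursion. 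The paper's route is shorter and ties the lemma to known representation-theoretic identities; yours gives an independent computational proof.
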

\begin{remark}
	In the limit $A\to +\infty$, this formula yields back the overlap for Dirichlet boundary condition, see \cite{krajenbrink2018large}.
\end{remark}
\begin{proof}
	This identity was essentially known in the context of Hall-Littlewood polynomials of type BC. In particular, a generalization of it was proved in \cite[Theorem 2.6]{venkateswaran2015symmetric}. We refer to \cite[Eq. (54)]{borodin2016directed} for more details about how to degenerate Venkateswaran's symmetrization identity to the one we need (trigonometric to rational limit). For any parameters $A,B,C \in \mathbb C$ and complex variables $(z_i)_{1\leqslant i\leqslant n}$, we have 
	\begin{multline}
		\sum_{\varepsilon \in \lbrace \pm1\rbrace^n} \sum_{P\in S_n} \prod_{k<l} \frac{\varepsilon_{P(k)}z_{P(k)}+\varepsilon_{P(l)}z_{P(l)}-C}{\varepsilon_{P(k)}z_{P(k)}+\varepsilon_{P(l)}z_{P(l)}}\frac{\varepsilon_{P(k)}z_{P(k)}-\varepsilon_{P(l)}z_{P(l)}-C}{\varepsilon_{P(k)}z_{P(k)}-\varepsilon_{P(l)}z_{P(l)}}\\ \times \prod_{j=1}^n \frac{(\varepsilon_{P(j)}z_{P(j)}+A)(\varepsilon_{P(j)}z_{P(j)}+B)}{\varepsilon_{P(j)}z_{P(j)}} = 2^n n! \prod_{j=0}^{n-1}(A+B-j).
		\label{eq:BCsymetrization}
	\end{multline}
	Let us perform first the symmetrization over $P\in S_n$. Since the first product in the left hand side of \eqref{eq:BCsymetrization} is $S_n$-invariant, and using the symmetrization identity (\!\cite[Chap. III, Eq. (1.4)]{macdonald1995symmetric}) 
	\begin{equation*}
		\sum_{P\in S_n}  \frac{\varepsilon_{P(k)}z_{P(k)}-\varepsilon_{P(l)}z_{P(l)}-C}{\varepsilon_{P(k)}z_{P(k)}-\varepsilon_{P(l)}z_{P(l)}} = n!, 
	\end{equation*} 
	we obtain that 
	\begin{equation}
		\sum_{\varepsilon \in \lbrace \pm1\rbrace^n} \prod_{k<l} \frac{\varepsilon_kz_k+\varepsilon_lz_l-C}{\varepsilon_kz_k+\varepsilon_lz_l}\prod_{j=1}^n \frac{(\varepsilon_jz_j+A)(\varepsilon_jz_j+B)}{\varepsilon_jz_j} = 2^n   \prod_{j=0}^{n-1}(A+B-j).
		\label{eq:signsymetrization}
	\end{equation}
	We obtain \eqref{eq:symmetrizationweuse} using the substitutions $C\to 1, z_j\to \I\lambda_j $ in \eqref{eq:signsymetrization}. 
\end{proof}

Applying Lemma~\ref{lem:symetrization} in \eqref{over2}, we obtain the formula for the overlap \eqref{over0} given in the text. \\

We note that the overlap formula it is a priori valid before the insertion of the solution of the Bethe equations,
i.e. it is valid for any set of complex $\lambda_j$ such that the overlap integral converges. The condition for that to be true can be read
from \eqref{miracle} as
\be
{\rm Re}(\I \lambda_j) < B
\ee 
for all $j \in [1,n]$. Inserting a string state, labeled by $\{ k_j, m_j\}_{j=1,\dots,n_s}$, the condition becomes
$\max_j \frac{1}{2}(2 m_j-1) \leq \frac{1}{2}(2 n -1) < B$, which leads to the condition
$\frac{n}{2} < B+ \frac{1}{2}$. 

\section{From two-dimensional kernels to scalar kernels}\label{app:2D_to_1D}

We present in this section an equivalent representation of a class of Fredholm Pfaffians with $2\times 2$ block kernels in terms of a Fredholm determinant with a scalar valued kernel. Consider a measure $\rmd \mu$ on a contour $C$ in the complex plane and another measure $\rmd \nu_\varsigma$ on the real line $\mathbb{R}$, depending on a real parameter $\varsigma$. Consider the quantity $g(\varsigma)$ defined by the series
\begin{equation}
g(\varsigma)=1+\sum_{n_s= 1}^\infty \frac{(-1)^{n_s}}{n_s!} Z(n_s,\varsigma) 
\end{equation}
and
\begin{equation}
\begin{split}
 Z(n_s,\varsigma) = \prod_{p=1}^{n_s} \int_{\mathbb{R}}\rmd \nu_\varsigma( r_p)
 &\iint_{C^2} \rmd  \mu(X_{2p-1}) \rmd \mu( X_{2p})\\
& \phi_{\mathrm{odd}}(X_{2p-1}) \phi_{\mathrm{even}}(X_{2p})e^{-r_p[X_{2p-1}+X_{2p}]  } \;  {\rm Pf} \left[ \frac{X_i-X_j}{X_i+X_j}\right]_{i,j=1}^{2n_s} 
\end{split}
\end{equation}
Then, following Ref.~ \cite{krajenbrink2018large}, we have the following equivalent representations for $g(\varsigma)$.

\begin{lemma}[Fredholm Schur Pfaffian]\label{lemma:BRUIJN}
$g(\varsigma)$ is equal to a Fredholm Pfaffian with a $2\times2$ matrix valued skew-symmetric kernel
\begin{equation}
g(\varsigma)=\mathrm{Pf}( J-K)_{\mathbb{L}^2(\mathbb{R}, \nu_\varsigma)}
\end{equation}
For $(r,r')\in \mathbb{R}^2$ the matrix kernel $K$ is given by
\begin{equation}
\label{app:K_block}
\begin{split}
&K_{11}(r,r')=\iint_{C^2} \rmd \mu(v)\rmd \mu(w) \frac{v-w}{v+w}\phi_{\mathrm{odd}}(v)\phi_{\mathrm{odd}}(w)e^{ -rv-r'w }\\
&K_{22}(r,r')=\iint_{C^2}\rmd \mu(v)\rmd \mu(w) \frac{v-w}{v+w}\phi_{\mathrm{even}}(v)\phi_{\mathrm{even}}(w)e^{ -rv-r'w}\\
&K_{12}(r,r')=\iint_{C^2}\rmd \mu(v)\rmd \mu(w) \frac{v-w}{v+w}\phi_{\mathrm{odd}}(v)\phi_{\mathrm{even}}(w)e^{ -rv-r'w  }\\
&K_{21}(r,r')=\iint_{C^2}\rmd \mu(v)\rmd \mu(w) \frac{v-w}{v+w}\phi_{\mathrm{even}}(v)\phi_{\mathrm{odd}}(w)e^{ -rv-r'w  }
\end{split}
\end{equation}
and the matrix kernel $J$ is defined by $J(r,r')=\bigg(\begin{array}{cc}
0 & 1 \\ 
-1 & 0
\end{array} 
\bigg)\mathds{1}_{r=r'}$.
\end{lemma}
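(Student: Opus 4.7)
The plan is to apply the de Bruijn integration formula \cite{de1955some} (used earlier in the paper as \eqref{eq:5.12}) to rewrite the $2n_s$-fold contour integral defining $Z(n_s,\varsigma)$ as a single $2n_s\times 2n_s$ Pfaffian, then to reorganize the rows and columns into $2\times 2$ blocks indexed by the variables $r_p$, and finally to sum over $n_s$ in order to identify the resulting series with a Fredholm Pfaffian.

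First, I would fix the tuple $(r_1,\dots,r_{n_s})\in\mathbb R^{n_s}$ and absorb the exponential factors $e^{-r_p(X_{2p-1}+X_{2p})}$ into functions
\[
\Phi_{2p-1}(X):=\phi_{\mathrm{odd}}(X)\,e^{-r_pX},\qquad \Phi_{2p}(X):=\phi_{\mathrm{even}}(X)\,e^{-r_pX},\qquad 1\leqslant p\leqslant n_s.
\]
Applying the de Bruijn identity
\[
\prod_{\ell=1}^{2n_s}\int_C \rmd\mu(X_\ell)\,\Phi_\ell(X_\ell)\;{\rm Pf}\!\left[\frac{X_i-X_j}{X_i+X_j}\right]_{i,j=1}^{2n_s}
={\rm Pf}\!\left[\iint_{C^2}\rmd\mu(v)\,\rmd\mu(w)\,\Phi_i(v)\,\Phi_j(w)\,\frac{v-w}{v+w}\right]_{i,j=1}^{2n_s}
\]
turns the integrated Schur Pfaffian into the Pfaffian of a $2n_s\times 2n_s$ antisymmetric matrix $\widetilde K$ whose entries are exactly, by the definitions in \eqref{app:K_block}, the four components of the $2\times 2$ kernel $K$: $\widetilde K_{2p-1,2q-1}=K_{11}(r_p,r_q)$, $\widetilde K_{2p,2q}=K_{22}(r_p,r_q)$, $\widetilde K_{2p-1,2q}=K_{12}(r_p,r_q)$, and $\widetilde K_{2p,2q-1}=K_{21}(r_p,r_q)$. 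Antisymmetry is automatic from the factor $(v-w)/(v+w)$ in each entry, which also yields the relation $K_{21}(r,r')=-K_{12}(r',r)$.

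Next, I would view $\widetilde K$ as obtained from the $n_s\times n_s$ array of $2\times 2$ blocks $[K(r_p,r_q)]_{1\leqslant p,q\leqslant n_s}$ by the canonical block-to-scalar embedding, so that $\operatorname{Pf}(\widetilde K)$ coincides with the block-Pfaffian ${\rm Pf}[K(r_i,r_j)]_{i,j=1}^{n_s}$. Thus
\[
Z(n_s,\varsigma)=\prod_{p=1}^{n_s}\int_{\mathbb R}\rmd\nu_\varsigma(r_p)\;{\rm Pf}\!\bigl[K(r_i,r_j)\bigr]_{i,j=1}^{n_s}.
\]
Inserting this into the series defining $g(\varsigma)$ and comparing with the Fredholm Pfaffian expansion
\[
{\rm Pf}(J-K)_{\mathbb L^2(\mathbb R,\,\nu_\varsigma)}=\sum_{n_s=0}^{\infty}\frac{(-1)^{n_s}}{n_s!}\prod_{p=1}^{n_s}\int_{\mathbb R}\rmd\nu_\varsigma(r_p)\;{\rm Pf}\!\bigl[K(r_i,r_j)\bigr]_{i,j=1}^{n_s},
\]
in the sense recalled e.g.\ in \cite{rains2000correlation,baik2018pfaffian,CLDflat}, then yields the claimed identity, with $J$ playing the role of the symplectic form whose presence is precisely what makes the expansion an alternating sum of block-Pfaffians.

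The main technical obstacle is the bookkeeping of signs in identifying $\operatorname{Pf}(\widetilde K)$ with the block Pfaffian ${\rm Pf}[K(r_i,r_j)]$: this must be checked from the standard definition $\operatorname{Pf}(M)=\sum_{\sigma}\operatorname{sgn}(\sigma)\prod_p M_{\sigma(2p-1),\sigma(2p)}$ using that the row/column permutation pairing $\{2p-1,2p\}$ carries sign $+1$ and that the remaining matching is in bijection with a matching of the $n_s$ block-indices. A secondary, more formal issue, present throughout the paper and discussed in Section \ref{sec:mathematical}, is the interchange of the $n_s$-series with the contour and $\nu_\varsigma$-integrations; at the level of formal power series the identity is exact, and in the regimes where decay of $\phi_{\mathrm{odd}},\phi_{\mathrm{even}}$ along $C$ together with the measure $\nu_\varsigma$ makes everything absolutely convergent, dominated convergence justifies the manipulation directly.
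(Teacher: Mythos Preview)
Your proposal is correct and follows exactly the approach used in the paper: apply the de Bruijn identity \eqref{eq:5.12} to turn the $2n_s$-fold contour integral of the Schur Pfaffian into the Pfaffian of a $2n_s\times 2n_s$ matrix, regroup into $2\times 2$ blocks $K(r_p,r_q)$, and identify the resulting series over $n_s$ with the Fredholm Pfaffian expansion (cf.\ \eqref{eq:5.14}--\eqref{eq:PfaffAllTimesAllA}). Your additional remarks on sign bookkeeping and convergence are reasonable caveats but do not alter the argument.
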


\begin{lemma}[Scalar Fredholm determinant - Proposition B.2 of Ref.~\cite{krajenbrink2018large}]\label{prop:2D_to_1DMOI}
$g(\varsigma)$ is equal to the square root of a Fredholm determinant with scalar valued kernel
\begin{equation}
g(\varsigma)=\sqrt{\mathrm{Det}( I-\bar{K})_{\mathbb{L}^2(\mathbb{R_+})}}
\end{equation}
where $\mathbb{L}^2(\mathbb{R}_+)$ is considered with the Lebesgue measure on $\R_+$. Introducing the functions
\begin{equation}\label{eq:def_f_even_odd}
\begin{split}
f_{\mathrm{odd}}(r)=\int_{C}\rmd \mu(v) \, \phi_{\mathrm{odd}}(v)e^{ -rv }, \quad f_{\mathrm{even}}(r)=\int_{C}\rmd \mu(v) \, \phi_{\mathrm{even}}(v)e^{ -rv }
\end{split}
\end{equation}
which are assumed to be in $\mathbb{L}^2(\mathbb{R}_+)$, the scalar kernel $\bar{K}$ is given, for $(x,y)\in \mathbb{R}^2_+$, by
\begin{equation} \label{eq:1D_kernel}
\bar{K}(x,y)=2\partial_x\int_{\mathbb{R}}\rmd \nu_\varsigma( r)\, \left[\mathrm{f}_{\mathrm{even}}(x+r) \mathrm{f}_{\mathrm{odd}}(r+y)- \mathrm{f}_{\mathrm{odd}}(x+r)\mathrm{f}_{\mathrm{even}}(r+y) \right] 
\end{equation}
and the scalar kernel $I$ is the identity kernel $I(x,y)=\mathds{1}_{x=y}$.
\end{lemma}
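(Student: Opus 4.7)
The plan is to reduce the $2 \times 2$ matrix-valued Fredholm Pfaffian of Lemma~\ref{lemma:BRUIJN} to a scalar Fredholm determinant on $\mathbb{L}^2(\mathbb{R}_+)$ by exposing a hidden factorization of the kernel through this auxiliary space. First I would square both sides and use the standard identity $\mathrm{Pf}(J - K)^2 = \det(I \pm JK)$, with the sign fixed by the Fredholm-Pfaffian convention \cite{rains2000correlation}, reducing the task to an equality of two Fredholm determinants.

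The central input is the Cauchy/Schur integral representation
\begin{equation*}
\frac{v - w}{v + w} = (v - w) \int_0^{+\infty} e^{-\lambda(v+w)} \, \rmd\lambda,
\end{equation*}
valid since $\mathrm{Re}(v + w) > 0$ on the contour $C$. Inserting this into each block \eqref{app:K_block} and performing the $v, w$ integrals gives
\begin{equation*}
K_{\alpha\beta}(r, r') = \int_0^{+\infty} \bigl[ f_\alpha(r + \lambda)\, f_\beta'(r' + \lambda) - f_\alpha'(r + \lambda)\, f_\beta(r' + \lambda) \bigr]\, \rmd\lambda,
\end{equation*}
with $\alpha, \beta \in \{\mathrm{odd}, \mathrm{even}\}$ and $f_\alpha$ from \eqref{eq:def_f_even_odd}. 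In operator form this reads $K = U V^* - V U^*$, where $U, V : \mathbb{L}^2(\mathbb{R}_+) \to \mathbb{L}^2(\mathbb{R},\nu_\varsigma) \otimes \mathbb{C}^2$ have kernels built respectively from $f_\alpha$ and $f_\alpha'$; hence $JK$ factors through $\mathbb{L}^2(\mathbb{R}_+)^2$.

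Applying the Sylvester cyclic identity $\det(I - \mathcal{A}\mathcal{B}) = \det(I - \mathcal{B}\mathcal{A})$ transfers the Fredholm determinant down to $\mathbb{L}^2(\mathbb{R}_+) \oplus \mathbb{L}^2(\mathbb{R}_+)$, producing a $2 \times 2$ block operator whose four entries are the antisymmetric kernel
\begin{equation*}
\Psi(x, y) := \int_\mathbb{R} \rmd\nu_\varsigma(r) \bigl[ f_{\mathrm{odd}}(r + x) f_{\mathrm{even}}(r + y) - f_{\mathrm{even}}(r + x) f_{\mathrm{odd}}(r + y) \bigr]
\end{equation*}
together with its partial derivatives $\partial_x \Psi$, $\partial_y \Psi$ and $\partial_x \partial_y \Psi$. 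Using $\partial_x \Psi = D \Psi$ (with $D = \partial_x$) and, via integration by parts, $\partial_y \Psi = -\Psi D$, the Schur complement formula combined with another application of the cyclic identity makes most factors cancel, leaving exactly $\det(I - \bar{K})$ with $\bar{K} = -2 \partial_x \Psi$, matching \eqref{eq:1D_kernel}. Taking square roots then gives $g(\varsigma) = \sqrt{\det(I - \bar{K})_{\mathbb{L}^2(\mathbb{R}_+)}}$.

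The main obstacle is this last reduction. The operator $D$ is unbounded, and the integration by parts used to write $\partial_y \Psi = -\Psi D$ generates a boundary term at $\lambda = 0$ (decay at $+\infty$ follows from the integrability assumed for $f_\alpha$). Controlling or showing cancellation of this boundary term — together with verifying the trace-class / Hilbert--Schmidt hypotheses required at each application of Sylvester's identity and for the Fredholm determinants themselves — constitutes the main technical content, and is precisely what is carried out in Appendix~B of \cite{krajenbrink2018large}.
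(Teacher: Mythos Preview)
The paper does not prove this lemma; it is merely stated in Appendix~\ref{app:2D_to_1D} with an explicit attribution to Proposition~B.2 of \cite{krajenbrink2018large}, where the actual argument lives. So there is no ``paper's own proof'' to compare against here.

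Your outline --- squaring to pass from the Pfaffian to a determinant, factorizing each block of $K$ through $\mathbb{L}^2(\mathbb{R}_+)$ via the Cauchy integral $(v-w)/(v+w)=(v-w)\int_0^\infty e^{-\lambda(v+w)}\,\rmd\lambda$, then applying Sylvester's identity to transfer the determinant to the auxiliary space --- is the correct strategy and is essentially the route taken in that reference. One notational confusion in your last paragraph: the kernel $\Psi(x,y)$ as you defined it is an integral over $r\in\mathbb{R}$ with measure $\nu_\varsigma$ and contains no variable $\lambda$; the boundary term you worry about arises on the $\mathbb{L}^2(\mathbb{R}_+)$ side (the space introduced by the Cauchy representation, i.e.\ at $x=0$ or $y=0$), not from an integration inside $\Psi$. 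Keeping the roles of $r$ (the original Pfaffian variable) and the auxiliary $\mathbb{R}_+$ variable separate would make the final reduction cleaner. Otherwise the sketch is sound, and you correctly identify that the delicate step is the $2\times 2\to 1\times 1$ collapse, which is exactly what Appendix~B of \cite{krajenbrink2018large} carries out.
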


\section{Painlev\'e II equation and Tracy-Widom distributions}
\label{app:Painleve}

\subsection{The Hastings–McLeod solution of the Painlev\'e II equation}
Define $q(s)$ to be the solution of the Painlev\'e II equation for $s\in \R$,
\begin{equation}
q''(s)=sq(s)+2q(s)^3 
\end{equation}
which satisfies the asymptotic condition $q(s) \sim_{s \to +\infty} \Ai(s)$. The unique smooth solution of the Painlev\'e II equation with that asymptotic condition is called the Hastings-McLeod solution
\cite{hastings1980boundary}.  Let us indicate the two following formula which we use in the manuscript. The first one
is given in Ref.~\cite[Eq. (2.18)]{png}
\begin{equation}
\label{eq:equivalence}
s-2q'(s)+2q(s)^2=e^{2\int_s^\infty \rmd r \, q(r)}\int_{-\infty}^s \rmd r \, e^{-2\int_r^\infty \rmd r \, q(r)}
\end{equation}
The second one is given in Ref.~\cite[Eqs. (2.5), (2.6)]{png}
\begin{equation}
\label{eq:PrimitivePainleve}
\int_s^{+\infty}\rmd r \, q(r)^2=q'(s)^2-sq(s)^2-q(s)^4 
\end{equation}

\subsubsection{Left asymptotics}
\label{sec:leftasymptoticsq}
Useful asymptotics are given \cite{baik2008asymptotics} and

\begin{equation} 
q(s)=_{s\to -\infty}\sqrt{-\frac{s}{2}}\left(1+\frac{1}{8s^3}-\frac{73}{128s^6}+\frac{10657}{1024 s^9}+o\left(\frac{1}{s^9}\right)\right)
\label{eq:leftasymptoticsq}
\end{equation}

\subsubsection{Right asymptotics}

We also needed the following right asymptotics in the proof of Proposition~\ref{prop:computationbraresolvantket}, see \cite[Section 1.3]{nadalRight}.
\begin{equation}
\int_s^{+\infty} \rmd r \,(r-s) q(r)^2=_{s\to +\infty}\frac{e^{-4/3 s^{3/2}}}{15\pi s^{3/2}}\left(1-\frac{35}{24s^{3/2}}+\mathcal{O}\left(\frac{1}{s^3}\right)\right)
\label{eq:rightasymptotics1}
\end{equation}
and
\begin{equation}
\int_s^{+\infty}\rmd r \, q(r)=_{s\to +\infty}\frac{e^{-2/3s^{3/2}}}{2\sqrt{\pi}s^{3/4}}\left(1-\frac{41}{48s^{3/2}}+\mathcal{O}\left(\frac{1}{s^3}\right)\right)
\label{eq:rightasymptotics2}
\end{equation}

\subsection{Relations between the Tracy-Widom, Baik-Rains distributions and the Hastings-McLeod solution of the Painlev\'e II equation}
 The Tracy-Widom distributions for $\beta=2,1,4$ and the Baik-Rains distribution (denoted BR)  are given by
 \cite{png,ferrari2005determinantal}
\begin{itemize}
\item For $\beta=2$
\begin{equation}
\begin{split}
F_2(s)&=\exp \left( -\int_s^{+\infty} \rmd r (r-s) q^2(r) \right)=\Det(I-\Ai_s^2)_{\mathbb{L}^2(\R_+)}
\end{split}
 \end{equation} 
\item For $\beta=1$ 
\begin{equation}
\begin{split}
F_1(s)&=\exp\left(-\frac{1}{2}\int_s^{+\infty} \rmd r [(r-s) q^2(r)+q(r)]\right)=\Det(I-\Ai_s)_{\mathbb{L}^2(\R_+)}
\end{split}
\end{equation}
\item For $\beta=4$ (everywhere in the paper we use the conventions of \cite{baik2008asymptotics})
\begin{equation}
\begin{split}
F_4(
s)&=\exp \left( -\frac{1}{2}\int_s^{+\infty} \rmd r (r-s) q^2(r) \right)\cosh\left(\frac{1}{2}\int_s^{+\infty} \rmd r \, q(r)\right)\\
&=\frac{1}{2}(\Det(I-\Ai_s)_{\mathbb{L}^2(\R_+)}+\Det(I+\Ai_s)_{\mathbb{L}^2(\R_+)})
\end{split}
\end{equation}
\item For the Baik-Rains distribution 
\begin{equation}
\begin{split}
&F_{\rm BR}(s)\\
&=(1+(s-2q'(s)+2q^2(s))(\int_s^{+\infty} \rmd r q(r)^2))\exp \left( -\int_s^{+\infty} \rmd r [(r-s) q^2(r)+2q(r)] \right)\\
&=\partial_s \left[\int_{-\infty}^{s} \rmd t\,  \exp \left( -2\int_t^{+\infty} \rmd r \,q(r) \right)\exp \left( -\int_s^{+\infty} \rmd r (r-s) q^2(r) \right)\right]
\end{split}
\end{equation}
\end{itemize}
It allows to obtain the following relation between the distributions:
\begin{equation}
F_4(
s)=\frac{1}{2}\left(F_1(s)+\frac{F_2(s)}{F_1(s)}\right)
\end{equation}
and 
\begin{equation}
F_{\rm BR}(s)=\partial_s \left[ F_2(s) \int_{-\infty}^s \rmd t \, \frac{F_1(t)^4}{F_2(t)^2} \right].
\label{eq:defBaikRains}
\end{equation}
The next sections provide tail asymptotics for $F_1$ and $F_2$. 
\subsubsection{Left asymptotics of $F_1$ and $F_2$}
The left asymptotics of the Tracy-Widom $\beta=1,2$ are given by (see e.g. \cite{nadalLeft} and references therein)
\begin{equation}
\label{eq:leftasymptoticsF1F2}
\begin{split}
F_1(s) & =  2^{-11/48}\,e^{\zeta'(-1)/2}\,\exp\Big[ -\frac{|s|^3}{24} - \frac{|s|^{3/2}}{3\sqrt{2}} - \frac{\log |s|}{16} - \frac{|s|^{-3/2}}{24\sqrt{2}}  \\
 & \phantom{2^{-11/48}\,e^{\zeta'(-1)/2}\,\exp}\, + \frac{3|s|^{-3}}{128} - \frac{73|s|^{-9/2}}{1152\sqrt{2}} + \frac{63|s|^{-6}}{512} + \mathcal{O}\big(|s|^{-15/2}\big)\Big] \\
F_2(s) & =  2^{1/24}\,e^{\zeta'(-1)}\,\exp\Big[-\frac{|s|^3}{12} - \frac{\log |s|}{8} + \frac{3|s|^{-3}}{64} + \frac{63|s|^{-6}}{256} + \mathcal{O}\big(|s|^{-9}\big)\Big] 
\end{split}
\end{equation}
\subsubsection{Right asymptotics of $F_1$}
The right asymptotics of the Tracy-Widom $\beta=1,2$ are given by (see e.g. \cite{nadalRight} and references therein)
\begin{equation}
\label{eq:TWright1}
\begin{split}
1 - F_{1}(s) & =  \frac{e^{-\frac{2s^{3/2}}{3}}}{4\sqrt{\pi}\,s^{3/4}}\Big[1 - \frac{41}{2^4\cdot 3}\,s^{-3/2} + \frac{9241}{2^9\cdot 3^2}\,s^{-3} - \frac{5075225}{2^{13}\cdot 3^4}\,s^{-9/2} + \frac{5153008945}{2^{19}\cdot 3^5}\,s^{-6}  \\
 & \phantom{\frac{e^{-\frac{2s^{3/2}}{3}}}{2\sqrt{\pi},s^{3/4}}}\, - \frac{1674966309205}{2^{23}\cdot 3^6}\,s^{-15/2} + \frac{3985569631633205}{2^{28}\cdot 3^8}\,s^{-9} + \mathcal{O}(s^{-21/2})\Big]  
 \end{split}
\end{equation}
and
\begin{equation}
\label{eq:TWright2}
\begin{split}
1 -F_{2}(s)  = & \frac{e^{-\frac{4s^{3/2}}{3}}}{16\pi\,s^{3/2}}\Big[1 - \frac{35}{2^3\cdot 3}\,s^{-3/2} + \frac{3745}{2^7\cdot 3^2}\,s^{-3} - \frac{805805}{2^{10}\cdot 3^4}\,s^{-9/2} + \frac{289554265}{2^{15}\cdot 3^5}\,s^{-6}  \\
 & \phantom{\frac{e^{-\frac{4s^{3/2}}{3}}}{16\pi\,s^{3/2}}}\, - \frac{31241084875}{2^{18}\cdot 3^6}\,s^{-15/2} + \frac{23604769513325}{2^{22}\cdot 3^8}\,s^{-9} + \mathcal{O}(s^{-21/2})\Big] 
\end{split}
\end{equation}

\section{Asymptotics of $F(s)$}
\label{app:F} 
We compute the asymptotics of $F(s)$ for large positive and negative values of $s$ and plot in Fig.~\ref{fig:plotsTails} the overlap between the complete PDF $F'(s)$ and the asymptotics obtained.
\subsection{Right tail using the determinantal formula}
Let us perform the trace expansion for $s \to +\infty$ on the form
\begin{equation}
F(s) = \partial_s  \left[  2 \det(I + \Ai_s+| \Ai_s 1 \rangle \langle   1| ) + (s-2)\det(I + \Ai_s)\right].
\end{equation}
Here we can perform an expansion in powers of Airy functions that is
in $e^{- \frac{2 k}{3} s^{3/2}}$, $k=1,2,\dots$ using the first two orders of the trace expansion of the Fredholm determinant
\be
\det(I + M) = 1 + {\rm Tr} M + \frac{1}{2} (  ({\rm Tr} M)^2 -  {\rm Tr} M^2 ) + \mathcal{O}(M^3)
\ee 
We have, where $\mathcal{O}(K^\ell)$ indicate the higher order traces in the expansion. Up to to the second order in powers of Airy functions, we obtain
\be
\begin{split}
 2 \det(I +& \Ai_s+| \Ai_s 1 \rangle \langle   1| ) + (s-2)\det(I + \Ai_s) \\
& = s + s {\rm Tr} \Ai_s + 2 {\rm Tr}  | \Ai_s 1 \rangle \langle   1|    + \mathcal{O}(\Ai^2) \\
& = s + s \int_0^{+\infty} \rmd x \Ai(2 x+s) + 2 \int_0^{+\infty} \int_0^{+\infty} \rmd x \rmd y \Ai(s+x+y) + \mathcal{O}(\Ai^2)
\end{split}
\ee
Differentiating allows to find back the CDF $F(s)$
\be
\begin{split}
 F(s)& = 1 + \int_0^{+\infty} \rmd x \Ai(2 x+s) - \frac{s}{2} \Ai(s)  - 2 \int_0^{+\infty} \rmd x \Ai(x+s) + \mathcal{O}(\Ai^2) \\
& = 1 - \frac{s}{2} \Ai(s)  - \frac{3}{2} \int_0^{+\infty} \rmd x \Ai(x+s)  + \mathcal{O}(\Ai^2)
\end{split}
\label{eq:RightTailAiry}
\ee
\subsection{Right tail using the asymptotics of $q(s)$ and the Tracy-Widom distributions}
We can now compare with the formula \eqref{eq:CDFwithoutDerivative}
\begin{equation}
\begin{split}
F(s)&=e^{ -\frac{1}{2} \int_s^{+\infty} \mathrm d r \, [ (r-s)q(r)^2+3 q(r)]} \\
& \times \left( 1+\frac{1}{2}(q'(s)^2-sq(s)^2-q(s)^4-q(s))(s-2q'(s)+2q(s)^2)\right) 
\end{split}
\end{equation}
which we recast into
\begin{equation}
\begin{split}
F(s)&=\frac{F_1(s)^3}{F_2(s)} \left( 1+\frac{1}{2}(q'(s)^2-sq(s)^2-q(s)^4-q(s))(s-2q'(s)+2q(s)^2)\right) \\
\end{split}
\end{equation}
Since $q(s)$ behaves asymptotically for large positive $s$ as the Airy function, we obtain at first order in $q(s)$ the expansion
\begin{equation}
\begin{split}
F(s)&=\frac{F_1(s)^3}{F_2(s)} \left( 1-\frac{s}{2}q(s) \right) +\mathcal{O}(q(s)^2)\\
\end{split}
\end{equation}
Using the asymptotics of the Tracy-Widom distributions \eqref{eq:TWright1} and \eqref{eq:TWright2} and the asymptotics of the Airy function, we obtain 

\begin{equation}
\label{eq:appRight}
\begin{split}
1 -& F(s)  \\
&= \frac{s^{3/4}e^{-\frac{2 s^{3/2}}{3}}}{4\sqrt{\pi}} \left[1+\frac{139s^{-3/2}}{48  }-\frac{11423s^{-3}}{4608  }+\frac{3907027s^{-9/2}}{663552  }-\frac{2886147455s^{-6}}{127401984  }+o(s^{-6})\right]
 \end{split}
\end{equation}

\subsection{Left tail using the asymptotics of $q(s)$ and the Tracy-Widom distributions}

Now we investigate the behaviour of $F(s)$ when $s\to -\infty$.  Using once again that
\begin{equation}
\begin{split}
F(s)&=\frac{F_1(s)^3}{F_2(s)} \left( 1+\frac{1}{2}(q'(s)^2-sq(s)^2-q(s)^4-q(s))(s-2q'(s)+2q(s)^2)\right) \\
\end{split}
\end{equation}
and reading the asymptotics of the Painlevé transcendent \eqref{eq:leftasymptoticsq} and of the Tracy-Widom distributions \eqref{eq:leftasymptoticsF1F2} we obtain the left tail of $F(s)$ as 

\begin{equation}
\label{eq:appLeft}
\begin{split}
F(s)&=2^{-203/48}e^{\zeta'(-1)/2} \exp \big[-\frac{\left| s\right| ^3}{24}-\frac{\left| s\right| ^{3/2}}{\sqrt{2}}+\frac{23}{16}\log\abs{s}+\frac{91}{8 \sqrt{2}
   \left| s\right| ^{3/2}}\\
   & \hspace*{5cm} -\frac{3957}{128 \, \abs{s} ^3}+\frac{28717}{128 \sqrt{2} \left| s\right| ^{9/2}}-\frac{469683}{512 \left| s\right|
   ^6}+o(s^{-6})\big]
   \end{split}
\end{equation}

\begin{figure}[t!]
\begin{center}
		\includegraphics[width=11cm]{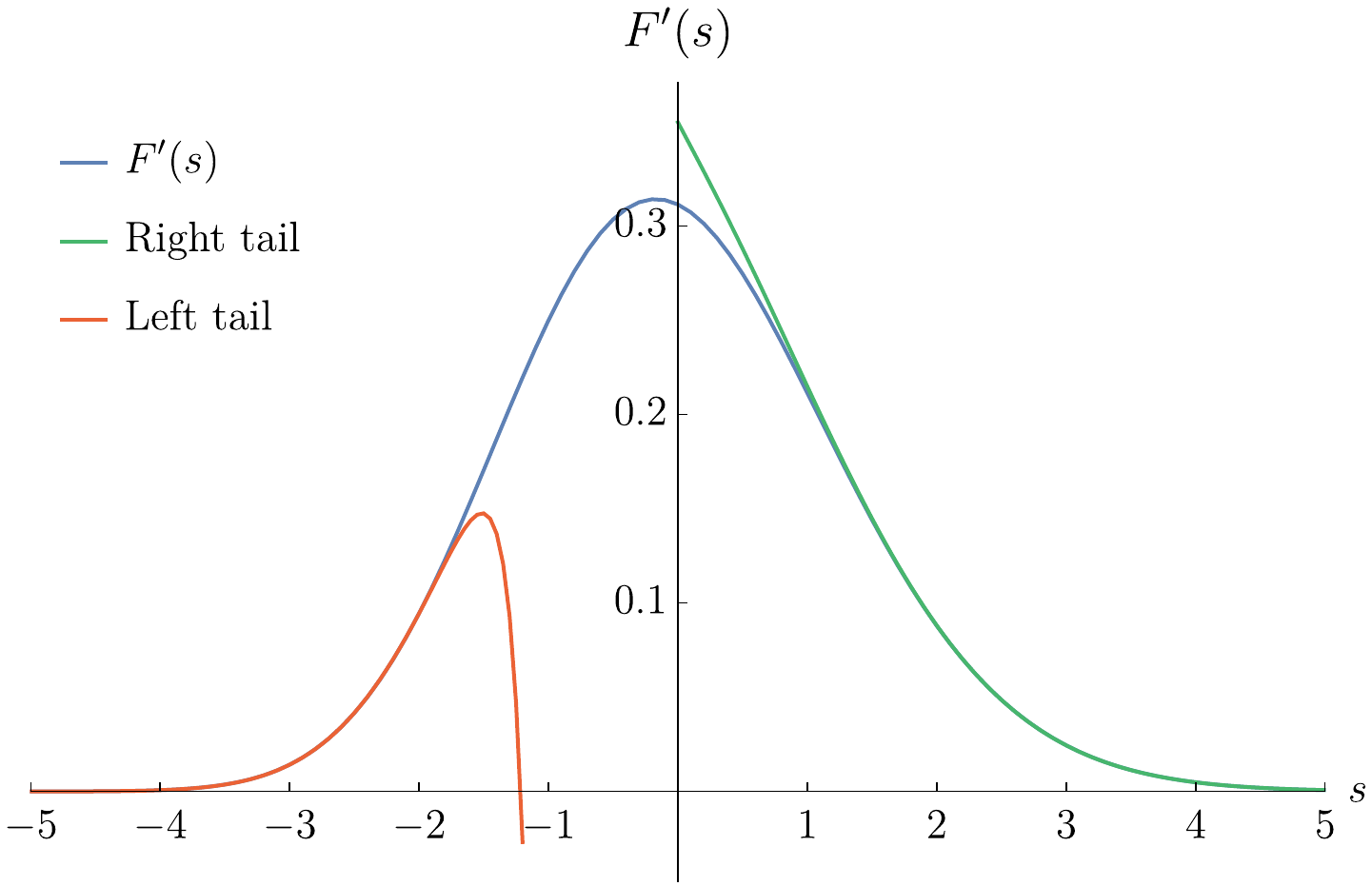}
		\includegraphics[width=11cm]{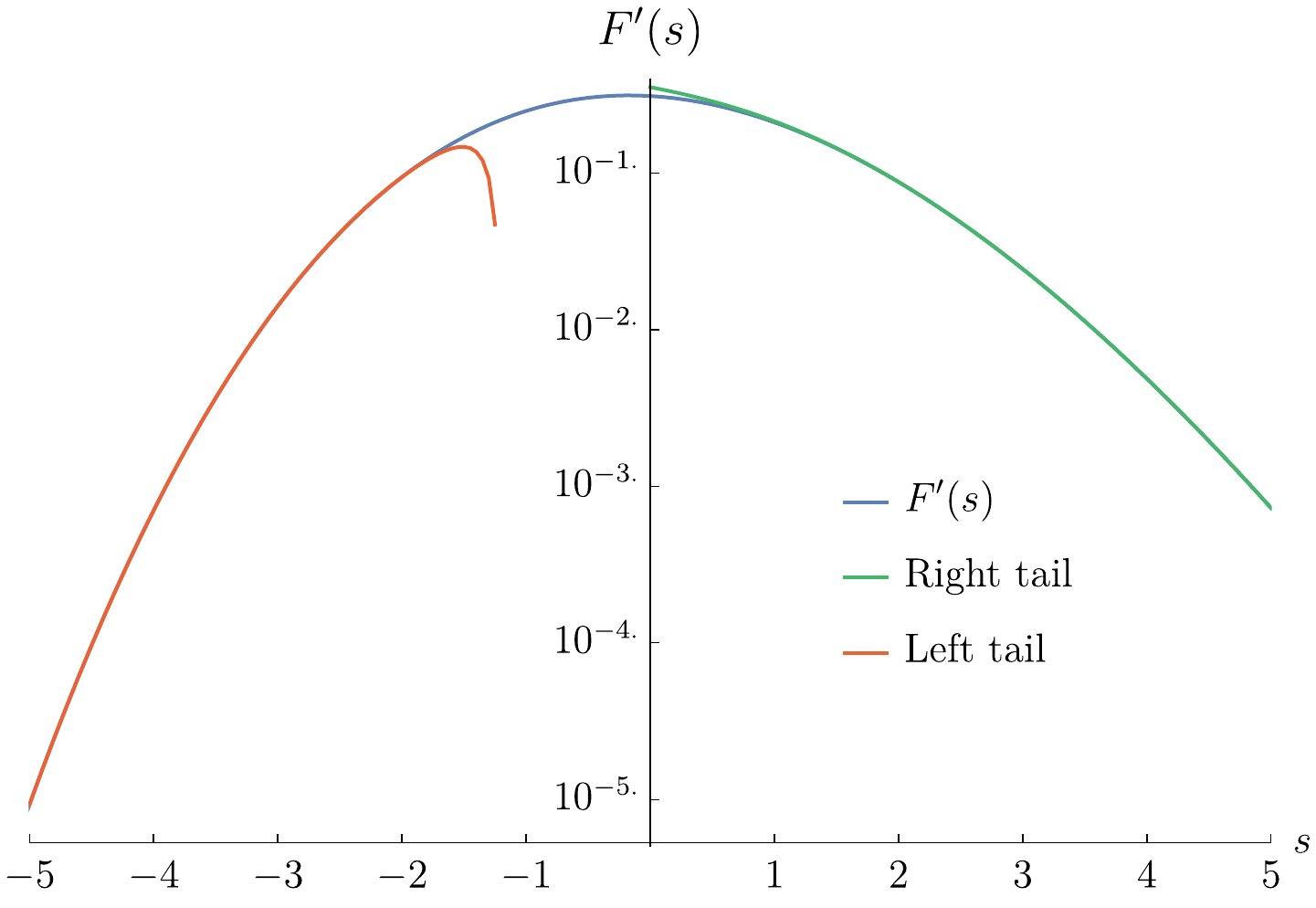}
\end{center}
\caption{Overlap of the left and right tails of the PDF of the critical stationary case (derivative of Eqs.~\eqref{eq:appLeft} and \eqref{eq:RightTailAiry}) with the complete PDF (derivative of Eq.~\eqref{eq:FintermsofF1F2}). \textbf{Top}. True scale. \textbf{Bottom}. Logarithmic scale on the vertical axis.}
\label{fig:plotsTails}
\end{figure}

\section{Extended kernel and the Kadomtsev--Petviashvili equation}
\label{app:KP} 

Recently it was shown that the height CDF at the (large time) KPZ fixed point for the full space problem is related to scale-invariant solutions of the Kadomtsev-Petviashvili (KP) equation \cite{quastel2019kp}. A related observation was made for the periodic KPZ fixed point \cite{prolhac2020}. This connection to the KP equation extends to the generating function at arbitrary time for the KPZ equation in full space, for some particular initial conditions, droplet, half-Brownian \cite{quastel2019kp} and Brownian \cite{doussal2019KP}. 
A similar relation was also obtained for a class of linear statistics associated to the Airy process \cite{doussal2019KP}. However at present no such relation is known for the half-space problem.

Here we provide an extended version of our kernel which can be related to the KP equation. It involves however an additional variable which plays the role of a "fictitious space". Although at this stage
we could not see a physical interpretation for this variable, we believe this fact is curious enough to be reported.

We recall that for the half-space problem at all times, the generating function of the exponential of the KPZ height at the origin, $x=0$, reads \eqref{res-scalar}
\begin{equation} 
\mathbb{E}\left[ \exp(-\varsigma \invgamma e^{H(t)}) \right]=\sqrt{\mathrm{Det}(I-\bar{K}_{t,\varsigma})_{\mathbb{L}^2(\mathbb{R}_+)}}.
\end{equation}
Denoting $\varsigma=e^{-r}$, one rewrites the kernel $\bar{K}$ as 
\begin{equation}
\begin{split}
\bar{K}_{t,e^{-r}}(x,y)
&=2\partial_x \iint_{C^2} \frac{\rmd w \rmd z}{(2\I\pi)^2 }G(z) G(w) \frac{\sin(\pi (z-w))}{\sin(\pi (z+w))}e^{-(x+r)z-(y+r)w  + t  \frac{w^3+z^3}{3} }\, ,
\end{split}
\end{equation}
where the function $G$ reads
\be
\ratioGamma(z) = \frac{\Gamma(A+\frac{1}{2}-z)}{\Gamma(A+\frac{1}{2}+z)}\frac{\Gamma(B+\frac{1}{2}-z)}{\Gamma(B+\frac{1}{2}+z)} \Gamma(2z) \, .
\ee
The kernel $\bar{K}_{t,e^{-r}}$ verifies two simple identities
\begin{equation}
\begin{cases}
\partial_t \bar{K}_{t,e^{-r}}(x,y)&=-\frac{1}{3}[\partial_x^3+\partial_y^3]\bar{K}_{t,e^{-r}}(x,y)\\
&~\\
\partial_r \bar{K}_{t,e^{-r}}(x,y)&=[\partial_x+\partial_y]\bar{K}_{t,e^{-r}}(x,y)
\end{cases}
\label{eq:DiffEqKernel}
\end{equation}
We can now extend the kernel by introducing a fictitious variable $u$ in the following way
\begin{equation}\label{eq:1DkernelAllTnew}
\begin{split}
\mathcal{K}_{t,e^{-r},u}(x,y)&=2\partial_x \iint_{C^2} \frac{\rmd w \rmd z}{(2\I\pi)^2 }G(z) G(w) \frac{\sin(\pi (z-w))}{\sin(\pi (z+w))}e^{-(x+r)z-(y+r)w+\frac{u}{2}(w^2-z^2)  + t  \frac{w^3+z^3}{3} }
\end{split}
\end{equation}
so that
\be
\mathcal{K}_{t,e^{-r},u=0}(x,y) = \bar{K}_{t,e^{-r}}(x,y)
\ee
The new kernel $\mathcal{K}_{t,e^{-r},u}$ verifies the same set of differential equations \eqref{eq:DiffEqKernel} and in addition verifies a third one
\begin{equation} \label{cond3} 
\partial_u \mathcal{K}_{t,e^{-r},u}(x,y)=\frac{1}{2}[\partial_y^2-\partial_x^2]\mathcal{K}_{t,e^{-r},u}(x,y)
\end{equation}
It was shown in \cite{quastel2019kp},
and known earlier in \cite{poppe1989general} (see discussion in \cite[Appendix E]{doussal2019KP}),
that the three conditions \eqref{eq:DiffEqKernel} and \eqref{cond3} imply that the Fredholm determinant associated to $\mathcal{K}$ is a $\tau$-function of the KP equation.
Thus, denoting $\mathcal{F}(r,t,u)=\sqrt{\mathrm{Det}(I-\mathcal{K}_{t,e^{-r},u})_{\mathbb{L}^2(\mathbb{R}_+)}}$, the function $\phi(r,t,u):=2\partial_r^2 \log(\mathcal{F}(r,t,u))$ solves the KP equation for $(r,t,u)$
\begin{equation}
\partial_t \phi+\phi \partial_r \phi +\frac{1}{12}\partial_r^3 \phi +\partial_r^{-1}\partial_u^2 \phi=0
\end{equation}
Note that the knowledge of $\phi(r,t,u=0)$ is equivalent to the knowledge of the half-space generating 
function for the KPZ equation. For the full-space KPZ equation, the variable $u$ was interpreted as a spatial variable whereas in our case, it is a fictitious variable with no obvious direct interpretation.

\newpage


\begin{thebibliography}{130}

\bibitem{KPZ} M. Kardar, G. Parisi and Y-C. Zhang, 
\textit{Dynamic Scaling of Growing Interfaces}, 
\doidoi{10.1103/PhysRevLett.56.889}{Physical Review Letters {\bf 56}, 889}, (1986).

\bibitem{baik2001symmetrized} J.~Baik and E.~M. Rains,
\newblock \textit{Symmetrized random permutations},
\newblock \href{https://arxiv.org/abs/math/9910019}{arXiv:math/9910019} and in Random matrix models and their applications, vol.~40 of {Math. Sci. Res. Inst. Publ.}, pp. 1--19. Cambridge Univ. Press, (2001).

\bibitem{SpohnTASEPGSE} M. Prahofer and H. Spohn
\textit{Current fluctuations for the totally asymmetric simple exclusion process}, 
\doidoi{10.1007/978-1-4612-0063-5_7}{Progress in Probability, Vol. 51, edited by V. Sidoravicius (Birkhauser,Boston, 2002) 185, 
arXiv:cond-mat/0101200}.

\bibitem{spohn2006exact} H.~Spohn,
\newblock \textit{Exact solutions for {KPZ}-type growth processes, random matrices, and equilibrium shapes of crystals},
\newblock \doidoi{10.1016/j.physa.2006.04.006}{Physica A: Stat. Mech. Appl. \textbf{369}(1), 71}, (2006).

\bibitem{quastel2012lectures} J.~Quastel,
\newblock \textit{Introduction to {KPZ}},
\newblock available online \href{https://www.math.toronto.edu/quastel/survey.pdf}{on the website of J.~Quastel} (2012).

\bibitem{corwin2012kardar} I.~Corwin, \textit{The Kardar-Parisi-Zhang equation and universality class}, Random Matrices: Theory Appl. 01(2012), 113 0001

\bibitem{corwin2014macdonald} I.~Corwin,
\textit{Macdonald processes, quantum integrable systems and the Kardar-Parisi-Zhang universality class}, 
\href{https://arxiv.org/abs/1403.6877}{Proceedings of the ICM, arXiv:1403.6877}.

\bibitem{borodin2012lectures} A.~Borodin and V.~Gorin,
\newblock \textit{Lectures on integrable probability},
\newblock Lecture notes, \href{https://arxiv.org/abs/1212.3351}{arXiv:1212.3351},  (2012).

\bibitem{borodin2014integrable} A.~Borodin and L.~Petrov,
\newblock \textit{Integrable probability: From representation theory to Macdonald processes},
\doidoi{10.1214/13-PS225}{Probab. Surveys \textbf{11}, 1}, (2014).

\bibitem{borodin2016lectures} A.~Borodin and L.~Petrov,
\newblock \textit{Lectures on integrable probability: stochastic vertex models and symmetric functions},
\newblock \href{https://arxiv.org/abs/1605.01349}{arXiv:1605.01349}, (2016).

\bibitem{ferrari2010interacting} P.~L.~ Ferrari, \textit{From interacting particle systems to random matrices}, J. Stat. Mech. (2010), P10016.

\bibitem{quastel2017totally} J.~Quastel and K.~Matetski,
\newblock \textit{From the totally asymmetric simple exclusion process to the {KPZ} fixed point},
\newblock \href{https://arxiv.org/abs/1710.02635}{arXiv:1710.02635},  (2017).

\bibitem{quastel2015one} J.~Quastel and H.~Spohn, The one-dimensional KPZ equation and its universality class, J. Stat.Phys. 160 (2015), 965–984.

\bibitem{takeuchi2016appetizer} K.~A.~Takeuchi, \textit{An appetizer to modern developments on the Kardar–Parisi–Zhang universality class}, Physica A 504 (2016), 77–105.

\bibitem{we} P. Calabrese, P. Le Doussal, A. Rosso, 
\textit{Free-energy distribution of the directed polymer at high temperature}, 
\doidoi{10.1209/0295-5075/90/20002}{Europhys. Lett. {\bf 90}, 20002}, (2010).

\bibitem{dotsenko} V. Dotsenko, 
\textit{Bethe ansatz derivation of the Tracy-Widom distribution for one-dimensional directed polymers} 
\doidoi{10.1209/0295-5075/90/20003}{EPL {\bf 90}, 20003} (2010);
\bibitem{dotsenko22} V.~Dotsenko
\textit{Replica Bethe ansatz derivation of the Tracy–Widom distribution of the free energy fluctuations in one-dimensional directed polymers}, 
\doidoi{10.1088/1742-5468/2010/07/P07010}{J. Stat. Mech. P07010}, (2010);  
\bibitem{dotsenko33}
V. Dotsenko and B. Klumov, 
\textit{Bethe ansatz solution for one-dimensional directed polymers in random media} 
\doidoi{10.1088/1742-5468/2010/03/P03022}{J. Stat. Mech. P03022}, (2010).

\bibitem{spohnKPZEdge} T. Sasamoto, H. Spohn, 
\textit{One-dimensional Kardar-Parisi-Zhang equation: an exact solution and its universality}, 
\doidoi{10.1103/PhysRevLett.104.230602}{Physical Review Letters {\bf 104}, 230602}, (2010).
\bibitem{spohnKPZEdge2}
T. Sasamoto and H. Spohn, 
\textit{Exact height distributions for the KPZ equation with narrow wedge initial condition}, 
\doidoi{10.1016/j.nuclphysb.2010.03.026}{Nucl. Phys. B {\bf 834} 523, arXiv:1002.1879}, (2010) ;
\bibitem{spohnKPZEdge3}
T. Sasamoto and H. Spohn,
\textit{The crossover regime for the weakly asymmetric simple exclusion process} 
\doidoi{	10.1007/s10955-010-9990-z}{J. Stat. Phys. {\bf 140} 209 , arXiv:1002.1873}, (2010).

\bibitem{amir2011probability} G.~Amir, I.~Corwin and J.~Quastel,
\textit{Probability distribution of the free energy of the continuum directed random polymer in 1 + 1 dimensions},
 \doidoi{10.1002/cpa.20347}{Comm. Pure and Appl. Math. {\bf 64}, 466}, (2011).

\bibitem{sineG} P. Calabrese, M. Kormos and P. Le Doussal, 
\textit{From the sine-Gordon field theory to the Kardar-Parisi-Zhang growth equation}, 
\doidoi{  10.1209/0295-5075/107/10011}{arXiv:1405.2582, EPL 107 10011}, (2014).

\bibitem{CLDflat} P. Calabrese, P. Le Doussal,
\textit{Exact solution for the Kardar-Parisi-Zhang equation with flat initial conditions},
\doidoi{10.1103/PhysRevLett.106.250603}{Physical Review Letters {\bf 106}, 250603}, (2011)

\bibitem{CLDflat2} P. Le Doussal, P. Calabrese, 
\textit{The KPZ equation with flat initial condition and the directed polymer with one free end},  
\doidoi{10.1088/1742-5468/2012/06/P06001}{J. Stat. Mech. P06001}, (2012).

\bibitem{PLDCrossoverDropFlat} P. Le Doussal, 
\textit{Crossover from droplet to flat initial conditions in the KPZ equation from the replica Bethe ansatz} 
\doidoi{10.1088/1742-5468/2014/04/P04018}{arXiv:1401.1081, J. Stat. Mech. P04018}, (2014).

\bibitem{crissingprob2} A. De Luca and P. Le Doussal,
 \textit{Crossing probability for directed polymers in random media: exact tail of the distribution}, 
 \doidoi{	10.1103/PhysRevE.93.032118}{Phys. Rev. E \textbf{93}, 032118,  arXiv:1511.05387}, (2016)
 
\bibitem{tracy1994level} C.~A.~Tracy, H.~Widom, \textit{Level-spacing distributions and the Airy kernel}, Comm. Math. Phys. 159(1), 151-174 (1994).

\bibitem{tracy1996orthonormal} C.~A.~Tracy, H.~Widom, \textit{On orthogonal and symplectic matrix ensembles}, Comm. Math. Phys., 177(3), 727-754 (1996).

\bibitem{dotsenkoGOE} V. Dotsenko, 
\textit{Replica Bethe ansatz derivation of the GOE Tracy-Widom distribution in one-dimensional directed polymers with free boundary conditions} 
\doidoi{	10.1088/1742-5468/2012/11/P11014}{ J. Stat. Mech. P11014}, (2012)

\bibitem{Quastelflat} J. Ortmann, J. Quastel and D. Remenik, 
\textit{Exact formulas for random growth with half-flat initial data} 
\doidoi{	10.1214/15-AAP1099}{Ann. Appl. Probab. \textbf{26} 507} (2016)

\bibitem{SasamotoStationary} T. Imamura, T. Sasamoto,
\textit{Exact solution for the stationary Kardar-Parisi-Zhang equation},
 \doidoi{10.1103/PhysRevLett.108.190603}{Physical Review Letters {\bf 108}, 190603}, (2012).
 

\bibitem{baik2000limiting} J.~Baik and E.M.~Rains, \textit{Limiting distributions for a polynuclear growth model with external sources}, J. Stat. Phys.100 (2000), 523–542.

\bibitem{baik2010limit} J.~Baik, P.~L.~Ferrari, and S.~P\'ech\'e, \textit{Limit process of stationary TASEP near the characteristic line}, Comm. Pure Appl. Math.63(2010), 1017–1070

\bibitem{ferrari2006scaling} P.~L.~Ferrari and H.~Spohn, \textit{Scaling limit for the space-time covariance of the stationary totally asymmetric simple exclusion process}, Comm. Math. Phys. 265(2006), 1–44.

\bibitem{BCFV} A. Borodin, I. Corwin, P. L. Ferrari. B. Veto, 
\textit{Height fluctuations for the stationary KPZ equation},
 \doidoi{10.1007/s11040-015-9189-2}{Math. Phys. Anal. Geom. \textbf{18}, 20}, (2015).

\bibitem{aggarwal2018current} A. Aggarwal, Current Fluctuations of the Stationary ASEP and Six-Vertex Model, Duke Math.J. 167(2018), 269–384

\bibitem{KPZFixedPoint} I.~{Corwin}, J.~{Quastel}, et D.~{Remenik}, 
\textit{Renormalization fixed  point of the KPZ universality class}, 
 \doidoi{10.1007/s10955-015-1243-8}{J Stat Phys 160: 815}, (2015).

\bibitem{exp4} K.~A. Takeuchi, M.~Sano, 
\textit{Universal fluctuations of growing interfaces: Evidence in turbulent liquid crystals},
   \doidoi{10.1103/PhysRevLett.104.230601}{Physical Review Letters {\bf 104}  230601}, (2010).

   \bibitem{exp444}
K.~A. Takeuchi, M.~Sano, T.~Sasamoto, H.~Spohn, 
\textit{Growing interfaces uncover  universal fluctuations behind scale invariance}, 
\doidoi{10.1038/srep00034}{Scientific Reports 1 34}, (2011).

\bibitem{Takeuchi} K.~A. Takeuchi, M.~Sano, 
\textit{Evidence for geometry-dependent universal fluctuations  of the {Kardar-Parisi-Zhang} interfaces in liquid-crystal turbulence}, 
\doidoi{10.1007/s10955-012-0503-0}{J.  Stat. Phys. 147  853--890}, (2012).

\bibitem{TakeuchiCrossover} K.~A. Takeuchi, 
\textit{Crossover from growing to stationary interfaces in the {Kardar-Parisi-Zhang} class}, 
\doidoi{10.1103/PhysRevLett.110.210604}{Physical Review Letters {\bf 110} 210604}, (2013).

\bibitem{TakeuchiHHLReview} T. Halpin-Healy, K.~A. Takeuchi,
\textit{A KPZ cocktail-shaken, not stirred: Toasting 30 years of kinetically roughened surfaces}
 \doidoi{10.1007/s10955-015-1282-1}{J. Stat. Phys. {\bf 160}, 794 }, (2015).

\bibitem{deNardisPLDTakeuchi} J. De Nardis, P. Le Doussal, K. A. Takeuchi, 
\textit{Memory and universality in interface growth}, 
\doidoi{	10.1103/PhysRevLett.118.125701}{Phys. Rev. Lett. 118, 125701 }, (2017).

\bibitem{TakeuchiItoPng} I. Ito, K. A. Takeuchi,
\textit{When fast and slow interfaces grow together: connection to the half-space problem of the Kardar-Parisi-Zhang class},
\doidoi{10.1103/PhysRevE.97.040103}{arXiv:1802.10284, Phys. Rev. E 97, 040103} (2018).

\bibitem{KardarTransition} M. Kardar, 
\textit{Depinning by quenched randomness}. 
\doidoi{10.1103/PhysRevLett.55.2235}{Physical Review Letters \textbf{55}, 2235,} (1985).

\bibitem{sasamotohalfspace} T. Sasamoto, T. Imamura,
 \textit{Fluctuations of a one-dimensional polynuclear growth model in a half space} 
 \doidoi{10.1023/B:JOSS.0000022374.73462.85}{J. Stat. Phys. \textbf{115} 749, arXiv:cond-mat/0307011}, (2004)

\bibitem{baik2018pfaffian} J.~Baik, G.~Barraquand, I.~Corwin and T.~Suidan,
\textit{Pfaffian Schur processes and last passage percolation in a half-quadrant}, 
\newblock \doidoi{10.1214/17-AOP1226}{Ann. Probab. \textbf{46}(6), 3015}, (2018).

\bibitem{baik2018pfaffian22}
J.~Baik, G.~Barraquand, I.~Corwin and T.~Suidan,
\textit{Facilitated exclusion process},  Computation and Combinatorics in Dynamics, Stochastics and Control. Abelsymposium 2016. Abel Symposia, vol 13. Springer, 
\href{https://arxiv.org/abs/1707.01923}{arXiv:1707.01923} (2017).

\bibitem{borodin2005eynard} A.~Borodin and E.~M. Rains.
\newblock {Eynard--Mehta theorem, Schur process, and their Pfaffian analogs}.
\newblock \doidoi{10.1007/s10955-005-7583-z}{ J. Stat. Phys., 121(3-4):291--317}, (2005).

\bibitem{TWhalf} C. Tracy, H. Widom, 
\textit{The Bose gas and asymmetric simple exclusion process on the half-line}
\doidoi{10.1007/s10955-012-0686-4}{J. Stat. Phys., 150:1}, (2013).

\bibitem{tracy2013asymmetric} C.~A. Tracy and H.~Widom.
\newblock The asymmetric simple exclusion process with an open boundary.
\newblock \doidoi{10.1063/1.4822418}{ J. Math. Phys., 54(10):103301}, (2013).

\bibitem{halfASEPBarraquand} G.~Barraquand, A.~Borodin, I.~Corwin, M.~Wheeler,
 \newblock \textit{Stochastic six-vertex model in a half-quadrant and half-line open ASEP}, 
  \doidoi{10.1215/00127094-2018-0019}{Duke Math. J. 167, no. 13, 2457-2529}, (2018).

\bibitem{OConnelSymmetrized} N. O'Connell, T. Sepp\"al\"ainen, N. Zygouras, 
\textit{Geometric RSK correspondence, Whittaker functions and symmetrized random polymers},  
\doidoi{10.1007/s00222-013-0485-9}{Invent. math. vol. 197, p. 361–416}, (2014).

\bibitem{barraquand2018half} G.~Barraquand, A.~Borodin and I.~Corwin,
\newblock \textit{Half-space Macdonald processes}.
\newblock  \href{https://arxiv.org/abs/1802.08210}{arXiv:1802.08210}, to appear in \doidoi{10.1017/fmp.2020.3}{Forum Math. Pi} (2018).

\bibitem{tracy2005matrix} C.~Tracy, H.~Widom, \textit{Matrix kernels for the Gaussian orthogonal and symplectic ensembles}, Annales de l'institut Fourier Vol. 55, No. 6, pp. 2197-2207, (2005).

\bibitem{bisi2020geometric} E.~Bisi, N.~O'Connell and N.~Zygouras,
\newblock \textit{The geometric Burge correspondence and the partition function
	of polymer replicas},
\newblock \href{https://arxiv.org/abs/2001.09145}{arXiv:2001.09145},  (2020).

\bibitem{GueudrePLD} T.~Gueudr{\'e}, P.~Le~Doussal,
 \textit{Directed polymer near a hard wall and KPZ equation in the half-space},
\doidoi{10.1209/0295-5075/100/26006}{Europhys. Lett. \textbf{100}, 26006,} (2012).

\bibitem{krajenbrink2018large} A.~Krajenbrink, P.~Le Doussal,
\newblock \textit{Large fluctuations of the KPZ equation in a half-space.}
\newblock \doidoi{10.21468/SciPostPhys.5.4.032}{SciPost Phys. {\bf 5}, 032}, (2018).

\bibitem{forrester2000painlev}
PJ~Forrester.
\newblock \textit{Painleve transcendent evaluation of the scaled
  distribution of the smallest eigenvalue in the Laguerre orthogonal and
  symplectic ensembles.}
\newblock arXiv nlin/0005064, (2000).



\bibitem{borodin2016directed} A.~Borodin, A.~Bufetov, I.~Corwin, 
\textit{Directed random polymers via nested contour integrals}.
 \doidoi{10.1016/j.aop.2016.02.001}{Annals of Physics, \textbf{368} 191--247}, (2016).

\bibitem{AlexLD} A.~Krajenbrink, P.~Le~Doussal, 
\textit{Replica Bethe Ansatz solution to the Kardar-Parisi-Zhang equation on the half-line}. 
\doidoi{10.21468/SciPostPhys.8.3.035}{SciPost Phys. 8, 035}, (2020).

\bibitem{deNardisPLDTT} J. de Nardis, A. Krajenbrink, P. Le Doussal, T. Thiery, 
\textit{Delta-Bose gas on a half-line and the KPZ equation: boundary bound states and unbinding transitions},
\href{https://arxiv.org/pdf/1911.06133.pdf}{arXiv:1911.06133}, (2019).

\bibitem{seppalainen2012scaling} T.~Sepp{\"a}l{\"a}inen,
\newblock \textit{Scaling for a one-dimensional directed polymer with boundary conditions},
\newblock \doidoi{10.1214/10-aop617}{Ann. Probab. \textbf{40}(1), 19}, (2012).

\bibitem{betea2019stationary} D.~Betea, P.~L. Ferrari and A.~Occelli,
\newblock \textit{Stationary half-space last passage percolation}, Comm. Math. Phys. 377, pages421–467 (2020)
\newblock \href{https://arxiv.org/abs/1905.08582}{arXiv:1905.08582},  (2019).

\bibitem{corwin2016open} I.~Corwin and H.~Shen,
\newblock \textit{Open ASEP in the weakly asymmetric regime},
\newblock \doidoi{10.1002/cpa.21744}{Comm. Pure Appl. Math. \textbf{71}(10), 2065}, (2018).

\bibitem{ghosal2018moments} P.~Ghosal.
\textit{Moments of the SHE under delta initial measure}, 
\href{https://arxiv.org/abs/1808.04353}{arXiv:1808.04353}, (2018).

\bibitem{parekh2019kpz123} S.~Parekh.
\newblock \textit{The KPZ limit of ASEP with boundary}.
\newblock \doidoi{10.1007/s00220-018-3258-x}{Communications in Mathematical Physics, 365(2):569--649}, (2019).

\bibitem{wu2018intermediate} X.~Wu,
\newblock \textit{Intermediate disorder regime for half-space directed polymers},
\newblock \href{https://arxiv.org/abs/1804.09815}{arXiv:1804.09815},  (2018).

\bibitem{parekh2019} S.~Parekh.
 \textit{Positive random walks and an identity for half-space SPDEs}.
 \href{https://arxiv.org/abs/1901.09449}{arXiv:1901.09449}, (2019).

\bibitem{liggett1975ergodic} T.~M. Liggett,
\newblock \textit{Ergodic theorems for the asymmetric simple exclusion process},
\newblock \doidoi{10.1090/s0002-9947-1975-0410986-7}{Trans. Amer. Math. Soc. \textbf{213}, 237}, (1975).

\bibitem{derrida1993exact} B.~Derrida, M.~R. Evans, V.~Hakim and V.~Pasquier,
\newblock \textit{Exact solution of a {1D} asymmetric exclusion model using a matrix formulation},
\newblock \doidoi{10.1088/0305-4470/26/7/011}{J. Phys. A \textbf{26}(7), 1493}, (1993).

\bibitem{grosskinsky} S. Grosskinsky
\textit{Phase transitions in nonequilibrium stochastic particle systems with local conservation laws}
\href{https://homepages.warwick.ac.uk/~masgav/diss.pdf}{PhD thesis, TU Munich}, (2004)

\bibitem{derrida2004asymmetric} B.~Derrida, C.~Enaud, and J.~Lebowitz.
\newblock \textit{The asymmetric exclusion process and Brownian excursions}.
\newblock \doidoi{10.1023/b:joss.0000019833.35328.b4}{ Journal of Statistical Physics, 115(1-2):365--382}, (2004).

\bibitem{bryc2019limit} W.~Bryc, Y.~Wang.
\newblock \textit{Limit fluctuations for density of asymmetric simple exclusion
  processes with open boundaries}.
\newblock \doidoi{10.1214/18-aihp945}{  Annales de l'Institut Henri Poincar{\'e}, Probabilit{\'e}s et
  Statistiques, volume~55, pages 2169--2194},   (2019).

\bibitem{baik2001asymptotics} J.~Baik and E.~M. Rains,
\newblock \textit{The asymptotics of monotone subsequences of involutions},
\newblock \doidoi{10.1215/S0012-7094-01-10921-6}{Duke Math. J. \textbf{109}(2), 205}, (2001).




\bibitem{wang2009largest} D.~Wang,
\newblock \textit{The largest sample eigenvalue distribution in the rank 1 quaternionic spiked model of {W}ishart ensemble},
\newblock \doidoi{10.1214/08-aop432}{Ann. Probab. pp. 1273--1328} (2009).

\bibitem{Halpin2014} T.~Halpin-Healy and Y.~Lin.
\newblock \textit{Universal aspects of curved, flat, and stationary-state Kardar-Parisi-Zhang statistics}.
\newblock \doidoi{10.1103/PhysRevE.89.010103}{Physical Review E, 89(1):010103}, (2014).

\bibitem{borodin2014macdonald} A.~Borodin and I.~Corwin,
\textit{ Macdonald processes}, 
\doidoi{10.1007/s00440-013-0482-3}{Prob. Theor. Rel. Fields \textbf{158}, no. 1-2, 225--400, arXiv:1111.4408}, (2014).

\bibitem{borodin2012free} A.~Borodin, I.~Corwin and P.~Ferrari,
\newblock \textit{Free energy fluctuations for directed polymers in random media in 1+ 1 dimension},
\newblock \doidoi{10.1002/cpa.21520}{Comm. Pure Appl. Math. \textbf{67}(7), 1129} (2014).

\bibitem{borodin2014spectral} A.~Borodin, I.~Corwin, L.~Petrov and T.~Sasamoto,
\newblock \textit{Spectral theory for interacting particle systems solvable by coordinate Bethe ansatz},
\newblock \doidoi{10.1007/s00220-015-2424-7}{Comm. Math. Phys. \textbf{339}(3), 1167}, (2015).

\bibitem{borodin2015spectral} A.~Borodin, I.~Corwin, L.~Petrov and T.~Sasamoto,
\newblock \textit{Spectral theory for the q-Boson particle system},
\newblock \doidoi{10.1112/S0010437X14007532}{Compos. Math. \textbf{151}, 1}, (2015).

\bibitem{kardareplica} M. Kardar, 
\textit{Replica Bethe ansatz studies of two-dimensional interfaces with quenched random impurities}. 
\doidoi{10.1016/0550-3213(87)90203-3}{Nucl.  Phys. B {\bf 290}, 582}, (1987).

\bibitem{ll} E. H. Lieb and W. Liniger, 
\textit{Exact Analysis of an Interacting Bose Gas. I. The General Solution and the Ground State}, 
\doidoi{10.1103/PhysRev.130.1605}{Phys. Rev. \textbf{130}, 1605}, (1963).

\bibitem{GaudinHardWall} M. Gaudin, 
\textit{Boundary Energy of a Bose Gas in One Dimension }, 
\doidoi{10.1103/PhysRevA.4.386}{Phys. Rev. A 4 386,} (1971).

\bibitem{BAhardwall} N. Oelkers, M.T. Batchelor, M. Bortz, X.W. Guan,
\textit{Bethe Ansatz study of one-dimensional Bose and Fermi gases with periodic and hard wall boundary conditions}, 
\doidoi{  10.1088/0305-4470/39/5/005}{arXiv:cond-mat/0511694, J. Phys. A {\bf 39} 1073} (2006).

\bibitem{gaudin2014bethe} M.~Gaudin and J-S Caux.
 \textit{The Bethe Wavefunction}.
 \doidoi{10.1017/CBO9781107053885}{Cambridge University Press}, (2014).

\bibitem{gueudre2012directed} T.~Gueudr{\'e}, P.~Le~Doussal,
\newblock \textit{Directed polymer near a hard wall and KPZ equation in the half-space},
\newblock Europhys. Lett. \textbf{100}, 26006 (2012), \doi{10.1209/0295-5075/100/26006}.

\bibitem{Castillo} I. P. Castillo, T. Dupic, 
\textit{Reunion Probabilities of $N$ One-Dimensional Random Walkers with Mixed Boundary Conditions}, 
\doidoi{10.1007/s10955-014-1017-8}{J. Stat Phys \textbf{3 } 156:606--616, arXiv:1311.0654} (2014).

\bibitem{VanDiejen} J.F. van Diejen, E. Emsiz,
\textit{Orthogonality of Bethe Ansatz eigenfunctions for the Laplacian on a hyperoctahedral Weyl alcove}
\doidoi{10.1007/s00220-016-2719-3}{Commun. Math. Phys. 350, no. 3, 1017}, (2017),
\bibitem{VanDiejen22}
J.F. van Diejen, E. Emsiz, I.N. Zurrian, 
\textit{Completeness of the Bethe Ansatz for an open q-boson system with integrable boundary interactions},
\doidoi{10.1007/s00023-018-0658-6}{arXiv:1611.05922, Ann. Henri Poincaré  19: 1349}, (2018).

\bibitem{EmsizComplete} E. Emsiz,
\textit{Completeness of the Bethe ansatz on Weyl alcoves}, 
\doidoi{10.1007/s11005-009-0359-7}{Lett. Math. Phys. 91, 61--70} (2010)

\bibitem{GutkinSutherland} E. Gutkin, B. Sutherland, 
\textit{Completely integrable systems and groups generated by reflections}, 
\doidoi{10.1073/pnas.76.12.6057}{PNAS, 76:6057,} (1979).

\bibitem{HeckmanOpdam} G. J. Heckman, E. M. Opdam, 
\textit{Yang's System of Particles and Hecke Algebras} 
\doidoi{10.2307/2951825}{Ann. Math., 145:139--173}, (1997).

\bibitem{PLD1} P. Le Doussal (2016) Unpublished notes

\bibitem{ChineseBA} Y. Hao, Y. Zhang, J. Q. Liang and S. Chen, 
\textit{Ground-state properties of one-dimensional ultracold Bose gases in a hard-wall trap}, 
\doidoi{10.1103/PhysRevA.73.063617}{Phys. Rev. A {\bf 73},063617} (2006).

\bibitem{m-65} J. B. McGuire,
 \textit{Study of Exactly Soluble One-Dimensional N-Body Problems}.
 \doidoi{10.1063/1.1704156}{J. Math. Phys. {\bf 5}, 622}, (1964).

\bibitem{CLR10} P. Calabrese, P. Le Doussal, A. Rosso, 
\textit{Free-energy distribution of the directed polymer at high temperature}, 
\doidoi{10.1209/0295-5075/90/20002}{Europhys. Lett. {\bf 90}, 20002}, (2010).

\bibitem{SasamotoStationary2} T. Imamura, T. Sasamoto,
\textit{Stationary correlations for the 1D KPZ equation},
\doidoi{10.1007/s10955-013-0710-3}{J. Stat. Phys. \textbf{150}, 908-939 }, (2013).

\bibitem{Knuth_1995} D.~E. Knuth.
\newblock \textit{Overlapping Pfaffians}.
\newblock \doidoi{10.37236/1263}{The Electronic Journal of Combinatorics}, (1995).

\bibitem{alberts2014intermediate} T.~Alberts, K.~Khanin and J.~Quastel,
\newblock \textit{The intermediate disorder regime for directed polymers in dimension $1+ 1$},
\newblock \doidoi{10.1214/13-aop858}{Ann. Probab. \textbf{42}(3), 1212} (2014).

\bibitem{oconnell2001brownian} N.~O'Connell and M.~Yor,
\newblock \textit{Brownian analogues of Burke's theorem},
\newblock \doidoi{10.1016/S0304-4149(01)00119-3}{Stochastic Process. Appl. \textbf{96}(2), 285} (2001).

\bibitem{baryshnikov2001gues} Y.~Baryshnikov,
\newblock \textit{GUEs and queues},
\newblock \doidoi{10.1007/pl00008760}{Probab. Theory Rel. Fields \textbf{119}(2), 256} (2001).

\bibitem{gravner2001limit} J.~Gravner, C.~A. Tracy and H.~Widom,
\newblock \textit{Limit theorems for height fluctuations in a class of discrete space and time growth models},
\newblock \doidoi{10.1023/a:1004879725949}{J. Stat. Phys. \textbf{102}(5-6), 1085} (2001).


\bibitem{tracy2008integral} C.~A. Tracy and H.~Widom, 
\textit{Integral formulas for the asymmetric simple	exclusion process}, 
\doidoi{10.1007/s00220-008-0443-3}{Comm. Math. Phys. \textbf{279} , no.~3, 815--844}, (2008).

\bibitem{tracy2009asep} C.~A. Tracy and H. Widom, 
\textit{On ASEP with step bernoulli initial condition}, 
\doidoi{10.1007/s10955-009-9867-1}{J. Stat. Phys. \textbf{137}, no.~5-6, 825}, (2009).

\bibitem{macdonald1995symmetric} I.~G. Macdonald,
\newblock \textit{Symmetric functions and Hall polynomials}, vol. 354,
\newblock \href{https://global.oup.com/academic/product/symmetric-functions-and-hall-polynomials-9780198739128?cc=fr&lang=en&}{Clarendon press Oxford}, (1995).

\bibitem{venkateswaran2015symmetric} V.~Venkateswaran,
\newblock \textit{Symmetric and nonsymmetric Hall-Littlewood polynomials of type BC},
\newblock \doidoi{10.1007/s10801-015-0583-4}{J. Algebr. Comb. \textbf{42}, 331} (2015).

\bibitem{de1955some} N.~De~Bruijn,
 \textit{On some multiple integrals involving determinants.} 
 \href{https://pure.tue.nl/ws/files/1920642/597510.pdf}{J. Indian Math. Soc, \textbf{19} 133--151}, (1955).

\bibitem{rains2000correlation} E.~M. Rains,
\newblock \textit{Correlation functions for symmetrized increasing subsequences},
\href{https://arxiv.org/abs/math/0006097}{arXiv:math/0006097}, (2000).

\bibitem{Ortmann} J. Ortmann, J. Quastel, D. Remenik, 
\textit{A Pfaffian representation for flat ASEP}, 
\doidoi{10.1002/cpa.21644}{Comm. Pure Appl. Math. \textbf{70}  1, 3}, (2015).

\bibitem{hastings1980boundary} S.~P. Hastings and J.~B. Mcleod.
\newblock \textit{A boundary value problem associated with the second Painlev{\'e} transcendent and the Korteweg-De Vries equation},
\newblock \doidoi{10.1007/bf00283254}{Archive for Rational Mechanics and Analysis, 73(1):31--51}, (1980).

\bibitem{imamura2004fluctuations} T.~Imamura and T.~Sasamoto.
\newblock \textit{Fluctuations of the one-dimensional polynuclear growth model with external sources}.
\newblock \doidoi{10.1016/j.nuclphysb.2004.07.030}{Nuclear Physics B, 699(3):503--544}, (2004).

\bibitem{baik2008asymptotics} J.~Baik, R.~Buckingham, and J.~DiFranco,
\newblock \textit{Asymptotics of Tracy-Widom distributions and the total integral of a Painlev{\'e} II function},
\newblock \doidoi{10.1007/s00220-008-0433-5}{Communications in Mathematical Physics, \textbf{280} (2):463--497}, (2008), .

\bibitem{Bornemann} F.~Bornemann, 
\textit{On the Numerical Evaluation of Fredholm Determinants},
\doidoi{10.1090/s0025-5718-09-02280-7}{Math. Comp. \textbf{79} 871} (2010).

\bibitem{Bornemann2} F.~Bornemann.
\newblock \textit{On the numerical evaluation of distributions in random matrix theory:  a review.}
\newblock \href{http://math-mprf.org/journal/articles/id1229/}{Markov Processes Relat. Fields 16 (2010) 803-866}, (2009).

\bibitem{SasamotoHalfBrownian} T. Imamura, T. Sasamoto,
\textit{Replica approach to the KPZ equation with half Brownian motion initial condition},
\doidoi{10.1088/1751-8113/44/38/385001}{J. Phys. A: Math. Theor. 44 385001}, (2011).

\bibitem{png} J.~Baik, E.~M. Rains, 
\textit{Limiting distributions for a polynuclear growth model  with external sources}, 
\doidoi{10.1023/A:1018615306992}{J. Stat. Phys. 100 523--541}, (2000).

\bibitem{ferrari2005determinantal} P.~L. Ferrari and H.~Spohn,
\newblock \textit{A determinantal formula for the GOE Tracy-Widom distribution},
\newblock \doidoi{10.1088/0305-4470/38/33/L02}{Journal of Physics A: Mathematical and General, \textbf{38} (33):L557}, (2005).

\bibitem{nadalLeft} G.~Borot, B.~Eynard, S.~N. Majumdar, and C.~Nadal.
\newblock \textit{Large deviations of the maximal eigenvalue of random matrices}.
\newblock \doidoi{10.1088/1742-5468/2011/11/P11024}{Journal of Statistical Mechanics: Theory and Experiment, (11):P11024}, (2011).

\bibitem{nadalRight} G.~Borot and C.~Nadal.
\newblock \textit{Right tail asymptotic expansion of Tracy--Widom beta laws}.
\newblock \doidoi{10.1142/s2010326312500062}{Random Matrices: Theory and Applications, 1(03):1250006}, (2012).

\bibitem{quastel2019kp} J.~Quastel and D.~Remenik.
\newblock \textit{KP governs random growth off a one dimensional substrate}.
\newblock \href{https://arxiv.org/abs/1908.10353}{arXiv:1908.10353}, (2019).

\bibitem{prolhac2020} S.~Prolhac.
\newblock \textit{Riemann surfaces for KPZ with periodic boundaries}.
\newblock \doidoi{10.21468/SciPostPhys.8.1.008}{SciPost Physics, 8(1)} (2020).

\bibitem{doussal2019KP} P.~Le Doussal.
\newblock \textit{Large deviations for the KPZ equation from the KP equation}.
\newblock \href{https://arxiv.org/abs/1910.03671}{arXiv:1910.03671}, (2019).

\bibitem{poppe1989general} C.~Poppe.
\newblock \textit{General determinants and the tau function for the
  Kadomtsev--Petviashvili hierarchy}.
\newblock \doidoi{10.1088/0266-5611/5/4/012}{ Inverse Problems, 5(4):613}, (1989).


\end{thebibliography}
\end{document}